\documentclass{article}

\usepackage{hyperref} 

\usepackage{booktabs} 
\usepackage[ruled]{algorithm2e} 

\SetAlFnt{\small}
\SetAlCapFnt{\small}
\SetAlCapNameFnt{\small}
\SetAlCapHSkip{0pt}
\IncMargin{-\parindent}

\linespread{1.06}
\usepackage{tabularx}

\usepackage[margin = 1in]{geometry}
\usepackage{bbding}
\usepackage[nointegrals]{wasysym}
\usepackage[utf8]{inputenc}
\usepackage{xspace}
\usepackage{mathtools}
\usepackage{url}
\usepackage{subcaption}
\usepackage{tikz}
\usetikzlibrary{arrows, fit, positioning}
\usetikzlibrary{backgrounds,automata,calc}
\usetikzlibrary{decorations.pathreplacing}
\usepackage{csquotes}
\usepackage{amsmath,amsfonts,amsthm}
\usepackage{bm,bbm}
\usepackage{pgfplots}

\DeclareMathOperator*{\argmax}{arg\,max}

\usepackage{enumerate}
\usepackage{paralist}
\usepackage[shortlabels]{enumitem}
\usepackage{appendix}
\usepackage[capitalize]{cleveref}
\usepackage{thmtools}
\usepackage{mathtools}
\usepackage{natbib}
\usepackage{changepage}

\usepackage{commath}

\newcommand{\OXS}{\text{-OXS}\xspace}
\newcommand{\ONSUB}{\text{-ONSUB}\xspace}
\newcommand{\SUB}{\text{-SUB}\xspace}
\newcommand{\ADD}{\text{-ADD}\xspace}
\newcommand{\USW}{\texttt{USW}\xspace}
\newcommand{\EFone}{\texttt{EF1}\xspace}
\newcommand{\PROP}{\texttt{PROP1}\xspace}
\newcommand{\MAXUSW}{\texttt{MAX-USW}\xspace}

\newcommand{\MMS}{\texttt{MMS}\xspace}

\newcommand{\R}{\mathbb{R}}

\renewcommand{\cal}[1]{\mathcal{#1}}

\newcommand{\lex}{\texttt{lex}}

\newtheorem{theorem}{Theorem}[section]
\newtheorem{lemma}[theorem]{Lemma}

\newtheorem{claim}[theorem]{Claim}
\newtheorem{prop}[theorem]{Proposition}

\newtheorem*{theorem*}{Theorem}

\DeclareMathOperator{\Welfare}{W}
\newcommand{\Uplambda}{\mathbin{\mathrm \Lambda}}

\usepackage{mathtools}
\DeclarePairedDelimiter{\ceil}{\lceil}{\rceil}

\setlist{topsep=0.5ex,itemsep=0.1ex}

\theoremstyle{definition}

\theoremstyle{definition}
\newtheorem{definition}[theorem]{Definition}
\newtheorem{obs}[theorem]{Observation}
\newtheorem{example}[theorem]{Example}

\title{The Good, the Bad and the Submodular: Fairly Allocating Mixed Manna Under Order-Neutral Submodular Preferences}
\author{Cyrus Cousins, Vignesh Viswanathan and Yair Zick \\
University of Massachusetts, Amherst \\
\texttt{\{cbcousins, vviswanathan, yzick\} @ umass.edu}}
\date{}

\begin{document}

\maketitle

\begin{abstract}
We study the problem of fairly allocating indivisible goods (positively valued items) and chores (negatively valued items) among agents with decreasing marginal utilities over items. 
Our focus is on instances where all the agents have {\em simple} preferences; specifically, we assume the marginal value of an item can be either $-1$, $0$ or some positive integer $c$. 
Under this assumption, we present an efficient algorithm to compute leximin allocations for a broad class of valuation functions we call {\em order-neutral} submodular valuations. 
Order-neutral submodular valuations strictly contain the well-studied class of additive valuations but are a strict subset of the class of submodular valuations.
We show that these leximin allocations are Lorenz dominating and approximately proportional. We also show that, under further restriction to additive valuations, these leximin allocations are approximately envy-free and guarantee each agent their maxmin share. We complement this algorithmic result with a lower bound showing that the problem of computing leximin allocations is NP-hard when $c$ is a rational number. 
 \end{abstract}

\section{Introduction}\label{sec:intro}
Fair allocation is a fundamental problem in computational economics. The problem asks how to divide a set of indivisible items among agents with subjective preferences (or valuations) over the items. 
Most of the literature focuses on the problem of dividing {\em goods} --- items with positive value. 
However, in several practical applications, such as dividing a set of tasks or allocating shifts to employees, items can be {\em chores} which provide a negative value to the agents they are allocated to. 

Fair allocation with mixed manna (instances containing both goods and chores) is, unsurprisingly, a harder problem than the case with only goods. 
Several questions which have been answered positively in the only goods setting are either still open or face a negative (impossibility or intractability) result in the mixed goods and chores setting. 
In particular, very little is known about the computability of \emph{leximin allocations}. A leximin allocation is one that maximizes the utility of the agent with the least utility; subject to that, it maximizes the second-least utility, and so on.

In the only goods setting, maximizing Nash welfare is arguably one of the most popular fairness objectives. 
Unfortunately, the Nash welfare of an allocation, defined as the product of agent utilities, loses its meaning in settings where agent utilities can be negative. In such settings, the leximin objective is a natural substitute. 
The leximin objective is easy to understand, and it implies an appealing egalitarian notion of fairness.
Therefore, designing algorithms that efficiently compute exact (or approximate) leximin allocations in the mixed manna setting is an important research problem. 

In this work, we present the first non-trivial results for this problem.

\subsection{Our Results}
We take a systematic approach towards solving this problem and start with a simple class of valuation functions. 
We assume all goods are symmetric and provide value $c$ (where $c$ is a positive integer), and all chores are symmetric and provide value $-1$. We also allow items to provide a value of $0$. 
We assume there are decreasing marginal gains over items; that is, after receiving many items, a good may provide value $0$ or even turn into a chore and provide a value of $-1$. 
We refer to this class of valuations as $\{-1, 0, c\}$-submodular valuations. 
With such valuations, there is no clear demarcation between goods and chores; an item may provide positive marginal value when added to an empty bundle but provide negative marginal value when added to a non-empty bundle.

We present an algorithm to compute a leximin allocation for a broad subclass of $\{-1, 0, c\}$ submodular valuations. 
More specifically, we show that leximin allocations can be computed efficiently when agents have $\{-1, 0, c\}$ submodular valuation functions that satisfy an additional property we call {\em order-neutrality}. 
Order-neutrality can be very loosely thought of as a property that requires the number of $c$-valued items in a bundle to be a monotonically non-decreasing function of the bundle.
We analyze these leximin allocations in further detail showing that they are Lorenz dominating and proportional up to one item. We also show that under further restriction to the case where agents have $\{-1, 0, c\}$ additive (or linear) valuations, leximin allocations are approximately envy-free as well as offer each agent their maxmin share.
We complement this result with lower bounds showing that the problem of computing leximin allocations becomes computationally intractable when $c$ is relaxed to being an arbitrary rational number as opposed to a positive integer.

\subsection{Our Techniques}\label{sec:techniques}
There are two main techniques we use in our algorithm design: {\em decomposition} and {\em path augmentation}.

\noindent\textbf{Decomposition:} We show (Lemma \ref{lem:3-decomposition}) that when agents have $\{-1, 0, c\}$ order-neutral submodular valuations, any allocation $X$ can be decomposed into three allocations $X^c$, $X^0$ and $X^{-1}$. Loosely speaking, $X^c$ contains all the items which provide a value of $c$, $X^0$ contains all the items which provide a value of $0$ and $X^{-1}$ contains all the items which provide a value of $-1$. 
Decompositions were previously used by \citet{cousins2023bivalued} in their study of bivalued submodular valuations, but showing decompositions exist is almost trivial when there are only two possible values an item can take. We use the order-neutrality property to show the existence of decompositions for a broader class of valuations. 
The existence of decompositions crucially allows us to separate the problem of computing a leximin allocation into three simple steps: we first build $X^c$ as a partial leximin allocation of all the $c$-valued items, then we construct $X^0$ to ensure $X^{-1}$ contains the fewest number of items possible. Finally, we build $X^{-1}$ by greedily giving chores to agents who place a high value on their allocated bundles. Indeed, showing that this simple approach works is highly non-trivial and relies on carefully using decomposition combined with the technique of path augmentation (Lemma \ref{lem:xc-equals-yc} and Theorem \ref{thm:leximin}). 

\noindent\textbf{Path Augmentation:} To build a partial leximin allocation consisting only of $c$-valued items, we use the technique of path augmentation. Path augmentation (mostly used with binary submodular valuations) uses multiple single-item transfers among agents to carefully manipulate allocations to make them leximin \citep{viswanathan2022yankee, viswanathan2022generalyankee}. The name `path augmentation' comes from the fact that these transfers are usually along a path --- an agent takes an item from another agent, who in turn takes an item from another agent, and so on. A direct application of path augmentation fails when agents have $\{-1, 0, c\}$ order-neutral submodular valuations. This is because updating $X^c$ using path augmentations can have undesirable effects on $X^0$ and $X^{-1}$. We show that by carefully choosing paths, we can avoid these undesirable effects. Specifically, we modify the classic path augmentation technique by giving a weight to each possible single item transfer. We show that choosing a minimum weight path (Theorems \ref{thm:Pareto-improving-paths} and \ref{thm:exchange-paths}) leads to a set of transfers that desirably modifies $X^c$ without undesirably affecting $X^0$ and $X^{-1}$.
  
\subsection{Related Work}\label{sec:related-work}
Fair allocation with mixed goods and chores has recently gained popularity in the literature. \citet{aziz2022mixedgoodsandchores} present definitions of envy-free up to one item (\EFone) and proportionality up to one item (\PROP) for the mixed goods and chores case; \citet{aziz2022mixedgoodsandchores} also present an algorithm to efficiently compute \EFone allocations. \citet{berczi2020envyfree} and \citet{bhaskar2021envyfreeness} further study the existence and computation of approximate envy-free allocations. There have also been a couple of papers studying maxmin share fairness with mixed goods and chores. \citet{Kulkarni2020ApproximatingMS} show that there exists a PTAS to compute a maxmin share fair allocation under certain assumptions. On the other hand, \citet{Kulkarni2021ptas} show that the problem of approximating the maxmin share of each agent is computationally intractable under general additive valuations.

To the best of our knowledge, there are only two papers \citep{berczi2020envyfree, bhaskar2021envyfreeness} on the fair allocation of mixed goods and chores which present results for non-additive valuation classes. \citet{berczi2020envyfree} present several positive results for very specific cases, such as identical valuations, Boolean valuations and settings with two agents. \citet{bhaskar2021envyfreeness} present an algorithm to compute \EFone allocations under {\em doubly monotone valuations}. Doubly monotone valuations assume each item is a good (always has positive marginal gain) or a chore (always has negative marginal gain) but other than this, do not place any restriction on the valuations. Order-neutral submodular valuations (discussed in this paper) relax the assumption that each item must be classified as a good or a chore, but come with the stronger restriction of submodularity. We also note that \citet{aziz2022mixedgoodsandchores} present an algorithm for computing an \EFone allocation for doubly monotone valuations but this result was disproved by \citet{bhaskar2021envyfreeness}.

The domain restrictions described above are not uncommon in fair allocation. In the mixed goods and chores setting, \citet{hosseini2023mixed} study the problem of fair allocation with lexicographic valuations --- a restricted subclass of additive valuations. In the goods setting, binary valuations \citep{halpern2020binaryadditive, viswanathan2022generalyankee, viswanathan2022yankee, Babaioff2021Dichotomous, benabbou2021MRF, barman2022groupstrategyproof} and bivalued valautions \citep{akrami2022halfintegers, akrami2022mnw, cousins2023bivalued, garg2021bivaluedadditive} have been extensively studied. In the chores setting, bivalued additive valuations \citep{ebadian2022bivaluedchores, aziz2023twotypes, garg2022bivaluedchores} and binary submodular valuations \citep{barman2023chores, viswanathan2023weighted} have been well studied. Our results in this paper are a natural extension of this line of work.

\section{Preliminaries}\label{sec:prelims}
We use $[k]$ to denote the set $\{1, 2, \dots, k\}$. Given a set $S$ and an element $o$, we use $S + o$ and $S - o$ to denote the sets $S \cup \{o\}$ and $S \setminus \{o\}$ respectively. 

We have a set of $n$ {\em agents} $N = [n]$ and a set of $m$ {\em items} $O = \{o_1, o_2, \dots, o_m\}$. Each agent $i \in N$ has a {\em valuation function} $v_i: 2^O \rightarrow \R$; $v_i(S)$ denotes the value of the set of items $S$ according to agent $i$. 
Given a valuation function $v$, we let $\Delta_v(S,o) = v(S+o) - v(S)$ denote the marginal utility of adding the item $o$ to the bundle $S$ under $v$. When clear from context, we sometimes write $\Delta_i(S,o)$ instead of $\Delta_{v_i}(S,o)$ to denote the marginal utility of giving the item $o$ to agent $i$ given that they have already been assigned the bundle $S$.

An {\em allocation} $X = (X_0, X_1, \dots, X_n)$ is an $(n+1)$-partition of the set of items $O$. $X_i$ denotes the set of items allocated to agent $i$ and $X_0$ denotes the set of unallocated items. Our goal is to compute {\em complete} fair allocations --- allocations where $X_0 = \emptyset$. When we construct an allocation, we sometimes only define the allocation to each agent $i \in N$; the bundle $X_0$ is implicitly assumed to contain all the unallocated items. Given an allocation $X$, we refer to $v_i(X_i)$ as the utility of agent $i$ under the allocation $X$. We also define the utility vector of an allocation $X$ as the vector $\vec u^X = (v_1(X_1), v_2(X_2), \dots, v_n(X_n))$.

We define two common methods to compare vectors. We will use these methods extensively in our analysis when comparing allocations. 

\begin{definition}[Lexicographic Dominance]
   A vector $\vec y \in \R^n$ {\em lexicographically dominates} a vector $\vec z \in \R^n$ (written $\vec y \succ_{\lex} \vec z$) if there exists a $k \in [n]$ such that for all $j \in [k-1]$, $y_j = z_j$ and $y_k > z_k$. We sometimes say an allocation $X$ lexicographically dominates an allocation $Y$ if $\vec u^X \succ_{\lex} \vec u^Y$. 
\end{definition}

\begin{definition}[Pareto Dominance]
A vector $\vec y \in \R^n$ {\em Pareto dominates} a vector $\vec z \in \R^n$ if for all $j \in [n]$, $y_j \ge z_j$ with the inequality being strict for at least one $j \in [n]$.
An allocation $X$ Pareto dominates an allocation $Y$ if $\vec u^X$ Pareto dominates $\vec u^Y$.
\end{definition}

Note that Pareto domination implies lexicographic domination.

\subsection{Valuation Functions}
In this paper, we will be mainly dealing with the popular class of submodular valuations.
\begin{definition}[Submodular]\label{def:submodular}
A function $v: 2^O \rightarrow \R$ is a {\em submodular function} if
\begin{inparaenum}[(a)]
    \item $v(\emptyset) = 0$, and 
    \item for any $S \subseteq T \subseteq O$ and $o \in O\setminus T$, $\Delta_v(S, o) \ge \Delta_v(T, o)$.
\end{inparaenum}
\end{definition}

We also define restricted submodular valuations to formally capture instances where the set of possible marginal values is limited. Throughout this paper, we use the set $A$ to denote an arbitrary set of real numbers.
\begin{definition}[$A$\SUB functions]\label{def:a-sub}
Given a set of real numbers $A$, a function $v: 2^O \rightarrow \R$ is an $A$\SUB function if it is submodular, and every item's marginal contribution is in $A$. That is, for any set $S \subseteq O$ and an item $o \in O \setminus S$, $\Delta_v(S, o) \in A$.
\end{definition}

Our analysis of submodular functions requires that they satisfy an additional property we call {\em order-neutrality}. 
Given a submodular valuation function $v: 2^O \rightarrow \R$, the value of a set of items $S$ can be computed by adding items from $S$ one by one into an empty set and adding up the $|S|$ marginal gains. 
More formally, given a bijective mapping $\pi: [|S|] \rightarrow S$ which defines the order in which items are added, $v(S)$ can be written as the following telescoping sum:
\begin{align*}
    v(S) = \sum_{j \in [|S|]} \Delta_v \left (\mathsmaller{\bigcup}_{\ell \in [j-1]} \pi(\ell), \pi(j) \right )
\end{align*}
While the value $v(S)$ does not depend on $\pi$, the values of the marginal gains in the telescoping sum may depend on $\pi$. Given a set $S$ and a bijective mapping $\pi: [|S|] \rightarrow S$ we define the vector $\vec {v}(S, \pi)$ as the vector of marginal gains (given below) {\em sorted in ascending order}
\begin{align*}
    \bigg (\Delta_v \big (\emptyset, \pi(1) \big ), \Delta_v \big (\pi(1), \pi(2) \big ), \dots, \Delta_v \left (\mathsmaller{\bigcup}_{\ell \in [j-1]} \pi(\ell), \pi(j) \right ), \dots, \Delta_v \big (S - \pi(|S|), \pi(|S|) \big ) \bigg )
\end{align*}

We refer to this vector $\vec v(S, \pi)$ as a {\em sorted telescoping sum vector} . For any ordering $\pi$ and set $S$, the sum of the elements of $\vec v(S, \pi)$ is equal to $v(S)$. A submodular function $v$ is said to be {\em order-neutral} if for all bundles $S \subseteq O$ and any two orderings $\pi, \pi': [|S|] \rightarrow S$ of items in the bundle $S$, we have $\vec{v}(S, \pi) = \vec{v}(S, \pi')$; that is, any sorted telescoping sum vector is independent of the order $\pi$. 
For order-neutral submodular valuations, we sometimes drop the $\pi$ and refer to any sorted telescoping sum vector using $\vec v(S)$. 
This definition can be similarly extended to $A$\SUB functions. For readability, we refer to order-neutral $A$\SUB functions as $A$\ONSUB functions.

\begin{definition}[$A$\ONSUB functions]\label{def:a-onsub}
    Given a set of real numbers $A$, a function $v:2^O \rightarrow \R$ is an $A$\ONSUB function if it is both order-neutral and an $A$\SUB function.
\end{definition}

We will focus on instances with $\{-1, 0, c\}\ONSUB$ valuations where $c$ is a positive integer. 
Throughout this paper, the only use of $c$ will be to denote an arbitrary positive integer.
We also present some results for the restricted setting where agents have additive valuations. 
\begin{definition}[$A\ADD$ functions]\label{def:a-add}
    Given a set of real numbers $A$, a function $v:2^O \rightarrow \R$ is an $A$-additive (or simply $A\ADD$) function if $v(\{o\}) \in A$ for all $o \in O$ and $v(S) = \sum_{o \in S} v(\{o\})$ for all $S \subseteq O$.
\end{definition}
To build intuition for the class of $\{-1, 0, c\}\ONSUB$ valuations, we present a few simple examples below. 

\begin{example}
Let $O = \{o_1, o_2, o_3, o_4\}$, the following functions $v_1, v_2, v_3: 2^O \rightarrow \R$ are $\{-1,0, c\}\ONSUB$:
\begin{align*}
    &v_1(S) = c\min\{|S|, 2\} \\
    &v_2(S) = c\mathbb{I}\{o_1 \in S\} - \mathbb{I}\{o_2 \in S\} \\
    &v_3(S) = c\min\{|S \cap \{o_1, o_2\}|, 1\} - |S \cap \{o_3, o_4\}|
\end{align*}
The valuation $v_1$ describes a function where any item provides a value of $c$ but the marginal utility of any item drops to $0$ after $2$ items are added to the bundle. $v_2$ describes a simple additive function where $o_1$ provides a value of $c$ and $o_2$ provides a value of $-1$. $v_3$ describes a slightly more complex function where $o_1$ and $o_2$ are goods but at most one of them can a provide a marginal value of $c$; $o_3$ and $o_4$ are chores and provide a marginal value of $-1$ each.
\end{example}

Note that additive valuations are trivially order-neutral submodular valuations. 
Unsurprisingly, not all submodular valuations are order-neutral --- consider a function $v$ over two items $\{o_1, o_2\}$ such that $v(\{o_1\}) = 0$, $v(\{o_2\}) = 1$ and $v(\{o_1, o_2\}) = 0$. This function is submodular, but not order-neutral, since $v(\{o_1, o_2\})$ has two different sorted telescoping sum vectors.
However, it is worth noting that there are many interesting non-additive order-neutral submodular functions. For example, as we will formally show later, any binary submodular function ($\{0, 1\}\SUB$) is order-neutral (\Cref{prop:a2-orderneutral}). We present a broad discussion about order-neutral submodular valuations in Section \ref{sec:discussion}.
\subsection{Fairness Objectives}\label{sec:fairness-objectives}
There are several reasonable fairness objectives used in the fair allocation literature. We discuss most of them in this paper. However, to avoid an overload of definitions, we only define the following two fairness objectives in this section. 

\noindent\textbf{Utilitarian Social Welfare (\USW)}: The utilitarian social welfare of an allocation $X$ is $\sum_{i \in N} v_i(X_i)$. An allocation $X$ is said to be \MAXUSW if it maximizes the utilitarian social welfare.

\noindent\textbf{Leximin: } An allocation is said to be leximin if it maximizes the utility of the least valued agent, and subject to that, the utility of the second-least valued agent, and so on \citep{viswanathan2022yankee, Kurokawa2018dichotomousallocation}. This is usually formalized using the sorted utility vector. The {\em sorted utility vector} of an allocation $X$ (denoted by $\vec s^X$) is defined as the utility vector $\vec u^X$ sorted in ascending order. An allocation $X$ is leximin if there is no allocation $Y$ such that $\vec s^Y \succ_{\lex} \vec s^X$.


Other objectives like maxmin share, proportionality and envy-freeness are defined in Section \ref{sec:leximin-properties}.

\subsection{Exchange Graphs and Path Augmentations} 
In this section, we describe the classic technique of path augmentations. A modified version of these path augmentations is used extensively in our algorithm design. 
Path augmentations have been used to carefully manipulate allocations when agents have binary submodular valuations ($\{0, 1\}\SUB$ functions). However, they only work with {\em clean} allocations.

\begin{definition}[Clean Allocation]\label{def:clean}
For any agent $i \in N$, a bundle $S$ is {\em clean} with respect to the binary submodualar valuation $\beta_i$ if $\beta_i(S) = |S|$. An allocation $X$ is said to be clean (w.r.t. $\{\beta_h\}_{h \in N}$) if for all agents $i \in N$, $\beta_i(X_i) = |X_i|$. 
\end{definition}

When agents have binary submodular valuations $\{\beta_h\}_{h \in N}$, given a clean allocation $X$ (w.r.t. $\{\beta_h\}_{h \in N}$), we define the {\em exchange graph} $\cal G(X, \beta)$ as a directed graph over the set of items $O$, where an edge exists from $o$ to $o'$ in the exchange graph if $o \in X_j$ and $\beta_j(X_j - o+o') = \beta_j(X_j)$ for some $j \in N$. There is no outgoing edge from any item in $X_0$.

Let $P = (o_1, o_2, \dots, o_t)$ be a path in the exchange graph $\cal G(X, \beta)$ for a clean allocation $X$. 
We define a transfer of items along the path $P$ in the allocation $X$ as the operation where $o_t$ is given to the agent who has $o_{t-1}$, $o_{t-1}$ is given to the agent who has $o_{t-2}$, and so on until finally $o_1$ is discarded and becomes freely available. 
This transfer is called {\em path augmentation}; the bundle $X_i$ after path augmentation with the path $P$ is denoted by $X_i \Uplambda P$ and defined as $X_i \Uplambda P = (X_i - o_t) \oplus \{o_j, o_{j+1} : o_j \in X_i\}$, where $\oplus$ denotes the symmetric set difference operation. 

For any clean allocation $X$ and agent $i$, we define $F_{\beta_i}(X, i) = \{o \in O: \Delta_{\beta_i}(X_i, o) = 1\}$ as the set of items which give agent $i$ a marginal gain of $1$ under the valuation $\beta_i$. 
For any agent $i$, let $P = (o_1, \dots, o_t)$ be a {\em shortest} path from $F_{\beta_i}(X, i)$ to $X_j$ for some $j \ne i$. 
Then path augmentation with the path $P$ and giving $o_1$ to $i$ results in a clean allocation where the size of $i$'s bundle $|X_i|$ goes up by $1$, the size of $j$'s bundle goes down by $1$ and all the other agents do not see any change in size. 
This is formalized below and exists in \citet[Lemma 1]{Barman2021MRFMaxmin} and \citet[Lemma 3.7]{viswanathan2022yankee}.

\begin{lemma}[\citet{Barman2021MRFMaxmin}, \citet{viswanathan2022yankee}]\label{lem:path-augmentation}
    Let $X$ be a clean allocation with respect to the binary submodular valuations $\{\beta_h\}_{h \in N}$. Let $P = (o_1, \dots, o_t)$ be a shortest path in the exchange graph $\cal G(X, \beta)$ from $F_{\beta_i}(X, i)$ to $X_j$ for some $i \in N$ and $j \in N + 0 - i$. Then, the following allocation $Y$ is clean with respect to $\{\beta_h\}_{h \in N}$.
    \begin{align*}
        Y_k = 
        \begin{cases}
            X_k \Uplambda P & (k \in N + 0 - i) \\
            X_i \Uplambda P + o_1 & (k = i)
        \end{cases}
    \end{align*}
    Moreover, for all $k \in N + 0 - i - j$, $|Y_k| = |X_k|$, $|Y_i| = |X_i| + 1$ and $|Y_j| = |X_j| - 1$. 
\end{lemma}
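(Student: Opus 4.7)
The plan is to verify the two parts of the conclusion separately: first that $Y$ is clean with respect to each $\beta_h$, and then that the cardinality changes are as stated. I would begin by fixing notation. For each $\ell \in [t-1]$, the edge from $o_\ell$ to $o_{\ell+1}$ in $\cal G(X,\beta)$ picks out a unique agent $k_\ell \in N$ with $o_\ell \in X_{k_\ell}$ and $\beta_{k_\ell}(X_{k_\ell} - o_\ell + o_{\ell+1}) = \beta_{k_\ell}(X_{k_\ell})$; note $k_\ell \neq 0$ since items in $X_0$ have no outgoing edges. Agent $j$ owns $o_t$, and agent $i$ receives $o_1$. Two preliminary observations simplify the bookkeeping: first, $o_{\ell+1} \notin X_{k_\ell}$, since otherwise $X_{k_\ell} - o_\ell + o_{\ell+1} = X_{k_\ell} - o_\ell$ would have rank $|X_{k_\ell}| - 1$ by cleanness of $X_{k_\ell}$, contradicting the edge condition; second, $o_1 \notin X_i$, directly from $o_1 \in F_{\beta_i}(X,i)$.

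To establish that $Y$ is clean, I would work agent by agent. For any agent $k$ whose bundle is untouched by the augmentation, $Y_k = X_k$ is clean by hypothesis. For each $k_\ell$, $Y_{k_\ell} = X_{k_\ell} - o_\ell + o_{\ell+1}$ has size $|X_{k_\ell}|$ (by the first preliminary observation) and rank $|X_{k_\ell}|$ by the edge condition, hence is clean. For agent $j$, $Y_j = X_j - o_t$ is a subset of the clean set $X_j$, and subsets of independent sets under a matroid rank function are independent, so $Y_j$ is clean. For agent $i$, the definition of $F_{\beta_i}(X,i)$ gives $\beta_i(X_i + o_1) = |X_i| + 1 = |Y_i|$, so $Y_i = X_i + o_1$ is clean.

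The cardinality assertions reduce to a counting argument: $i$ gains exactly $o_1$, $j$ loses exactly $o_t$, each $k_\ell$ performs a single swap with zero net change in size, and all remaining bundles are unchanged. This is valid provided the agents $i, j, k_1, \ldots, k_{t-1}$ are pairwise distinct, and proving this distinctness is the one place where the shortest-path hypothesis is genuinely used; I expect it to be the main obstacle. The argument invokes the matroid augmentation axiom: any violation of distinctness produces two independent sets of differing size under the same rank function ($\beta_{k_\ell}$, or $\beta_i$ if $k_\ell = i$, or $\beta_j$ if $k_\ell = j$), and the exchangeable element guaranteed by the axiom either lies in $F_{\beta_i}(X,i)$ (allowing a path starting strictly later along $P$) or witnesses a direct edge that short-circuits two consecutive swaps. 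Either outcome yields a path from $F_{\beta_i}(X,i)$ to $X_j$ strictly shorter than $P$, contradicting its minimality. Once distinctness is established, the symmetric-difference formula defining $\Uplambda$ collapses to the stated single-item transfers, and all three cardinality claims follow at once.
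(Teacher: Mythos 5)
This lemma is imported by the paper from \citet{Barman2021MRFMaxmin} and \citet{viswanathan2022yankee} without proof, so I am judging your attempt against the standard argument. Your setup, the two preliminary observations, and the reduction of the cardinality claims to ``every intermediate visit is a one-out, one-in swap'' are all fine. The genuine gap is the distinctness claim on which your entire cleanness argument rests: it is \emph{not} true that the agents $i, j, k_1, \dots, k_{t-1}$ along a shortest path are pairwise distinct, and the contradiction you sketch does not go through. What minimality of $P$ actually forbids is a \emph{shortcut edge} $o_a \to o_b$ with $b > a+1$; it does not forbid the path from entering and leaving the same agent's bundle twice. Concretely, let $\beta_k$ be the rank function of the graphic matroid on a triangle $u,v,w$ with an extra edge $e'$ parallel to $vw$, and let $X_k = \{uv, vw\}$. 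The path can swap $uv$ out for $uw$ (since $\{uw,vw\}$ is independent), wander through other bundles, and later swap $vw$ out for $e'$ (since $\{uv, e'\}$ is independent); the only candidate shortcut $uv \to e'$ requires $\{vw, e'\}$ to be independent, which fails because the edges are parallel. So a shortest path revisits agent $k$, the intermediate bundles $Y_{k_\ell}$ are not single swaps, and your case analysis for $Y_{k_\ell}$, $Y_i$ ($=X_i + o_1$ only if $i$ is not revisited) and $Y_j$ ($=X_j - o_t$ only if $j$ is not revisited) does not cover the actual situation.

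The correct route, and the one the cited works take, is to fix one agent $k$, let $D = \{o_\ell : o_\ell \in X_k\}$ be the items removed and $A$ the corresponding items added, and show $X_k \setminus D \cup A$ is independent via the unique-perfect-matching lemma for matroid exchange graphs (Brualdi/Krogdahl; see, e.g., Schrijver, Cor.~39.12a--13): if the only edges of $\cal G(X,\beta)$ restricted to agent $k$ between $D$ and $A$ form a unique perfect matching, then the swapped set is independent. The no-shortcut property of a shortest path is precisely what makes the matching $\{o_\ell \to o_{\ell+1}\}$ unique (any second matching would use some edge $o_a \to o_b$ with $b > a+1$). Your cardinality argument, by contrast, does not need distinctness at all and survives as written once you observe that the items on a shortest path are distinct and each $o_{\ell+1} \notin X_{k_\ell}$.
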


We also present sufficient conditions for a path to exist. 
A slight variant of the following lemma appears in \citet[Theorem 3.8]{viswanathan2022yankee}.

\begin{lemma}[\citet{viswanathan2022yankee}]\label{lem:augmentation-sufficient}
Let $X$ and $Y$ be two clean allocations with respect to the binary submodular valuations $\{\beta_h\}_{h \in N}$. For any agent $i \in N$ such that $|X_i| < |Y_i|$, there is a path from $F_{\beta_i}(X, i)$ to either
\begin{enumerate}[(i)]
    \item some item in $X_k$ for some $k \in N$ in the exchange graph $\cal G(X, \beta)$ such that $|X_k| > |Y_k|$, or
    \item some item in $X_0$ in $\cal G(X, \beta)$. 
\end{enumerate}
\end{lemma}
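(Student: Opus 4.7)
I argue by contradiction. Let $R$ be the set of items reachable from $F_{\beta_i}(X, i)$ in $\cal G(X, \beta)$. Suppose no path of the claimed form exists; then $R \cap X_0 = \emptyset$, and for every $k \in N$ with $R \cap X_k \neq \emptyset$ we have $|X_k| \leq |Y_k|$.

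The main tool is matroid theory. Each binary submodular $\beta_k$ is the rank function of a matroid $M_k$ on the ground set $O$, whose independent sets are precisely the clean subsets $\{S : \beta_k(S) = |S|\}$. Cleanness of $X$ and $Y$ says that every $X_k$ and $Y_k$ is independent in $M_k$, and an edge $o \to o'$ in $\cal G(X, \beta)$ (with $o \in X_k$) exists exactly when $X_k - o + o'$ is independent in $M_k$---that is, when the swap is a valid matroid exchange.

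Using this, I plan to iteratively construct items in $R$ by matroid exchange. First, since $X_i$ and $Y_i$ are both independent in $M_i$ with $|X_i| < |Y_i|$, the matroid exchange axiom yields some $o_1 \in Y_i \setminus X_i$ with $X_i + o_1$ independent, so $o_1 \in F_{\beta_i}(X, i) \subseteq R$. By the contradiction hypothesis, $o_1 \in X_{k_1}$ for some $k_1 \in N \setminus \{i\}$ with $|X_{k_1}| \leq |Y_{k_1}|$. Applying the exchange axiom in $M_{k_1}$ to the independent sets $X_{k_1} - o_1$ (of size $|X_{k_1}|-1$) and $Y_{k_1}$ (of size $\geq |X_{k_1}|$) gives $o_2 \in Y_{k_1} \setminus X_{k_1}$ with $\beta_{k_1}(X_{k_1} - o_1 + o_2) = |X_{k_1}|$, so the edge $o_1 \to o_2$ lies in $\cal G(X, \beta)$ and $o_2 \in R$. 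Continuing in this manner produces a sequence $o_1, o_2, \ldots \in R$ with $o_{\ell+1} \in Y_{k_\ell} \setminus X_{k_\ell}$, where $k_\ell$ is the $X$-bundle of $o_\ell$, and with $|X_{k_\ell}| \leq |Y_{k_\ell}|$ at every step (else we are already done).

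The main obstacle is to show this iteration must terminate in a contradiction with the standing hypothesis. A priori the items $o_\ell$ need not be distinct, so the walk $o_1 \to o_2 \to \cdots$ could cycle. I plan to enforce distinctness either via a refined exchange choice that excludes previously-seen elements of $Y_{k_\ell}$ (using a strong matroid exchange on $X_{k_\ell} - o_\ell$ and an appropriate sub-independent-set of $Y_{k_\ell}$), or, alternatively, by a global double-counting argument on $R$ versus $\bigcup_{k \in T} Y_k$ where $T = \{k \in N : R \cap X_k \neq \emptyset\}$, using $|X_k| \leq |Y_k|$ on $T$ together with $R \cap X_0 = \emptyset$ to pin down $|R|$ and contradict the existence of the augmenting $o_1$ from $Y_i$. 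Since $O$ is finite, the iteration cannot continue forever; hence some $o_\ell$ must fall in $X_0$ or in an $X_k$ with $|X_k| > |Y_k|$, giving the desired path and the required contradiction.
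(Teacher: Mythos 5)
First, a point of reference: the paper does not prove this lemma at all --- it is imported (as a ``slight variant'') from \citet[Theorem 3.8]{viswanathan2022yankee} --- so the comparison is against the standard matroid-theoretic argument rather than anything in the text. Your framing is the right one: binary submodular valuations are matroid rank functions, clean bundles are independent sets, and exchange-graph edges are exactly the feasible single-element swaps. The opening step (the exchange axiom applied to $X_i$ and $Y_i$ yields $o_1 \in Y_i \setminus X_i$ with $X_i + o_1$ independent, hence $o_1 \in F_{\beta_i}(X,i) \subseteq R$) is also correct.

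The gap is precisely the one you flag and then do not close. The local walk $o_1 \to o_2 \to \cdots$ can revisit items: the same agent's bundle may be entered several times, and the exchange axiom may return an element of $Y_{k_\ell} \setminus X_{k_\ell}$ that was already produced at an earlier step, so the walk can cycle forever without ever reaching $X_0$ or a deficient bundle. Consequently ``since $O$ is finite, the iteration cannot continue forever'' is a non sequitur, and neither of your two proposed repairs is actually carried out --- yet the second one (the global count) is where all the real work lives. Concretely, with $R$ the reachable set, one must first convert the closure of $R$ under exchange edges into a span statement: for every $k$ with $X_k \cap R \neq \emptyset$ (and for $k = i$, using $F_{\beta_i}(X,i) = O \setminus \mathrm{span}_{M_i}(X_i) \subseteq R$), one shows $O \setminus R \subseteq \mathrm{span}_{M_k}(X_k \setminus R)$, which by subadditivity of rank gives $|Y_k| = r_k(Y_k) \le |Y_k \cap R| + |X_k \setminus R|$. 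Only then does summing over the agents whose bundles meet $R$, together with $|X_k| \le |Y_k|$ for those agents, $|X_i| < |Y_i|$, $R \cap X_0 = \emptyset$, and disjointness of the $Y_k$'s, produce $|R| = \sum_k |X_k \cap R| < \sum_k |Y_k \cap R| \le |R|$, the desired contradiction. That span step --- relating the portion of $Y_k$ lying outside $R$ to the portion of $X_k$ lying outside $R$ --- is the heart of the matroid-union augmenting-path theorem and is entirely absent from your proposal, so as written the proof does not go through.
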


This technique of path augmentations has been extensively exploited in the design of the Yankee Swap algorithm \citep{viswanathan2022yankee}. Given an instance with binary submodular valuations, Yankee Swap computes a clean \MAXUSW leximin allocation in polynomial time. We use this procedure as a subroutine in our algorithm to compute a partial allocation.

\begin{theorem}[Yankee Swap \citep{viswanathan2022yankee}]\label{thm:yankee-swap}
When agents have binary submodular valuations $\{\beta_h\}_{h \in N}$, there exists an efficient algorithm that can compute a clean \MAXUSW leximin allocation.
\end{theorem}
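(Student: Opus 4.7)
The plan is to describe the Yankee Swap procedure and verify that its output is (i) clean, (ii) $\MAXUSW$, and (iii) leximin, with the algorithm running in polynomial time. The algorithm maintains a clean allocation $X$ initialized to $X_0 = O$, $X_i = \emptyset$ for all $i \in N$. In each iteration, it selects an agent $i^\star$ with smallest current utility $|X_{i^\star}|$ among those for which the exchange graph $\cal G(X,\beta)$ contains a path from $F_{\beta_{i^\star}}(X,i^\star)$ to either $X_0$ or to some $X_j$ with $|X_j| > |X_{i^\star}| + 1$; it computes a shortest such path $P$ and applies Lemma \ref{lem:path-augmentation}, promoting an item up to $i^\star$. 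If no such agent exists, it halts.

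First I would verify the easy properties. Cleanness is preserved by Lemma \ref{lem:path-augmentation} at every iteration, so cleanness of the output follows by induction. For polynomial runtime, a suitable potential (for example, $\sum_{i\in N}(m-|X_i|)^2$, which strictly decreases whenever a shorter bundle gains an item from a strictly larger one or from $X_0$) shows that the number of augmentations is polynomial; each iteration performs a BFS in $\cal G(X,\beta)$, which is polynomial. For $\MAXUSW$, cleanness gives $\Welfare(X) = \sum_{i\in N}|X_i| = m - |X_0|$, so maximizing $\Welfare$ is equivalent to minimizing $|X_0|$. If some clean $Y$ achieved $|Y_0| < |X_0|$, then some agent $i$ would have $|X_i| < |Y_i|$, and Lemma \ref{lem:augmentation-sufficient} would furnish an augmenting path in $\cal G(X,\beta)$ from $F_{\beta_i}(X,i)$ to $X_0$ or to a bundle of size strictly greater than $|X_i|+1$, contradicting the termination condition.

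The hard part is leximin optimality. The plan is a direct exchange argument: suppose for contradiction that some clean allocation $Y$ satisfies $\vec s^{Y} \succ_{\lex} \vec s^{X}$, and let $k$ be the smallest index on which the sorted utility vectors disagree, so $s^X_\ell = s^Y_\ell$ for $\ell < k$ and $s^X_k < s^Y_k$. Rewind the algorithm to the last iteration at which any bundle of size $s^X_k - 1$ was a candidate to be served, and identify an agent $i^\star$ whose final utility is $s^X_k$ but whose $Y$-utility is at least $s^Y_k > s^X_k$. Apply Lemma \ref{lem:augmentation-sufficient} to the snapshot $X^{(t)}$ at that iteration versus $Y$ to extract a path from $F_{\beta_{i^\star}}(X^{(t)},i^\star)$ either to $X^{(t)}_0$ or to a bundle strictly larger than those the algorithm chose to serve; this contradicts either the shortest-path choice (Lemma \ref{lem:path-augmentation}) or the termination condition at that step.

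The main obstacle will be making this exchange argument fully rigorous: one must carefully match agents across $X$ and $Y$ by utility rank rather than by identity, handle the many ties in utility where the algorithm's choice of served agent is underdetermined, and verify that the ``hypothetical'' augmentation extracted from $Y$ at step $t$ is consistent with the priority order the algorithm committed to. Once this bookkeeping is set up, the contradiction is immediate, and combining it with the $\MAXUSW$ guarantee yields a clean $\MAXUSW$ leximin allocation as claimed.
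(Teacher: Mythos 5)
First, note that the paper does not prove this statement at all: Theorem~\ref{thm:yankee-swap} is imported as a black box from \citet{viswanathan2022yankee} (and \citet{Babaioff2021Dichotomous}), so your proposal is being measured against the known proof in that prior work rather than anything in this paper. Your algorithmic skeleton is reasonable (it is essentially the transfer-based leximin procedure of Babaioff et al.\ with a Yankee-Swap-style priority rule), and the cleanness and runtime arguments are fine. But there are two genuine gaps. The decisive one is leximin optimality: your entire argument is ``rewind to the right iteration, extract a path via Lemma~\ref{lem:augmentation-sufficient}, and the contradiction is immediate once the bookkeeping is set up.'' That bookkeeping \emph{is} the theorem. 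The standard proof requires a precise characterization --- a terminal clean allocation is leximin iff there is no augmenting path from any $F_{\beta_i}(X,i)$ to $X_0$ and no exchange path from $F_{\beta_i}(X,i)$ to some $X_j$ with $|X_j|>|X_i|+1$ --- proved by a careful case analysis comparing $X$ against a hypothetical leximin $Y$ (exactly the structure of Lemma~\ref{lem:phase-2-sufficient}, Part~2, in this paper, which handles four cases depending on how $|Y_j|$ compares to $|X_i|$). You have not stated this characterization, let alone proved it, and the tie-handling you wave at is where the argument actually lives.

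The second gap is in your \MAXUSW step. Lemma~\ref{lem:augmentation-sufficient} furnishes a path from $F_{\beta_i}(X,i)$ to $X_0$ \emph{or} to some $X_k$ with $|X_k|>|Y_k|$; you silently replace the latter with ``a bundle of size strictly greater than $|X_i|+1$,'' which does not follow (take $|Y_k|=0$, $|X_k|=1$, $|X_i|=5$, $|Y_i|=6$: the lemma may only give a path to $X_k$, which violates neither clause of your termination condition). The fix is either the matroid-union augmenting-path theorem (a clean allocation allocates a maximum number of items iff no path reaches $X_0$) or a more careful choice of the comparison allocation $Y$; as written, the contradiction you claim is not reached. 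Both gaps are fixable --- the statement is true and the known proofs close them --- but neither is closed in your write-up.
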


We note that Yankee Swap is not the only algorithm to compute clean leximin allocations; \citet{Babaioff2021Dichotomous} also present an efficient algorithm to do so. 

\section{Understanding $A$\ONSUB Valuations}\label{sec:understanding}
In this section, we present some important results about $A$\ONSUB valuations exploring its connections to binary submodular functions.
These connections allow us to adapt path augmentations for our setting.
Our arguments, in this section, are a generalization of the arguments presented by \citet[Section 3]{cousins2023bivalued} about bivalued submodular valuations.

Our main result shows that given a threshold value $\tau \in \R$ and an $A$\ONSUB function $v_i$, the number of values in the sorted telescoping sum vector greater than or equal to the threshold $\tau$ corresponds to a binary submodular function. More formally, for any bundle $S \subseteq O$, let $\beta^{\tau}_i(S)$ denote the number of values in the sorted telescoping sum vector $\vec v_i(S)$ greater than or equal to $\tau$. We show that the function $\beta^{\tau}_i$ is a binary submodular function.

\begin{lemma}\label{lem:onsub-mrf}
For any $i \in N$ and $\tau \in \R$, $\beta^{\tau}_i$ is a binary submodular function.
\end{lemma}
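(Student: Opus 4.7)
The plan is to exploit order-neutrality to get a clean description of how $\vec v_i(S+o)$ differs from $\vec v_i(S)$, and then read off both the binary marginal property and submodularity from that description.

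First, I would establish the following key structural claim: for any $S \subseteq O$ and $o \in O \setminus S$, the sorted telescoping sum vector $\vec v_i(S+o)$ is obtained from $\vec v_i(S)$ by inserting the single value $\Delta_{v_i}(S,o)$ into its sorted position. The argument is: pick any bijection $\pi: [|S|] \to S$ and extend it to $\pi': [|S|+1] \to S+o$ by setting $\pi'(|S|+1) = o$. Then the unsorted list of marginals for $\pi'$ is exactly the unsorted list of marginals for $\pi$ with the extra entry $\Delta_{v_i}(S,o)$ appended. Sorting, and using order-neutrality to identify $\vec v_i(S,\pi)$ with $\vec v_i(S)$ and $\vec v_i(S+o,\pi')$ with $\vec v_i(S+o)$, yields the claim. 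This step is where order-neutrality is essential, and I expect it to be the main technical obstacle, since without it the multiset of marginals could change in a genuinely $\pi$-dependent way.

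Next I would use this claim to compute $\Delta_{\beta_i^\tau}(S,o)$. Because $\vec v_i(S+o)$ differs from $\vec v_i(S)$ by exactly one inserted entry, the count of entries $\geq \tau$ goes up by $1$ if $\Delta_{v_i}(S,o) \geq \tau$, and stays the same otherwise. Formally,
\begin{equation*}
\Delta_{\beta_i^\tau}(S,o) \;=\; \mathbbm{1}\{\Delta_{v_i}(S,o) \geq \tau\} \;\in\; \{0,1\}.
\end{equation*}
This immediately gives the binary (i.e.\ $\{0,1\}$-valued marginals) part of the conclusion, and also gives $\beta_i^\tau(\emptyset)=0$ since the empty telescoping sum vector has no entries $\geq \tau$.

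Finally, submodularity of $\beta_i^\tau$ follows from submodularity of $v_i$: for $S \subseteq T$ and $o \in O \setminus T$, submodularity of $v_i$ gives $\Delta_{v_i}(T,o) \leq \Delta_{v_i}(S,o)$, so if $\Delta_{v_i}(S,o) < \tau$ then also $\Delta_{v_i}(T,o) < \tau$, whence $\Delta_{\beta_i^\tau}(S,o) = 0$ forces $\Delta_{\beta_i^\tau}(T,o) = 0$. The only other case is $\Delta_{\beta_i^\tau}(S,o) = 1$, where the required inequality $\Delta_{\beta_i^\tau}(T,o) \leq 1$ is automatic. Thus $\beta_i^\tau$ satisfies all conditions of an $A$\SUB function with $A = \{0,1\}$, completing the proof.
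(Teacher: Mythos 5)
Your proposal is correct and follows essentially the same route as the paper's proof: both append $o$ as the last element of an ordering of $S$, use order-neutrality to conclude that $\vec v_i(S+o)$ gains exactly the one new entry $\Delta_{v_i}(S,o)$ relative to $\vec v_i(S)$ (giving $\Delta_{\beta^{\tau}_i}(S,o)\in\{0,1\}$ and the empty-set condition), and then derive submodularity of $\beta^{\tau}_i$ directly from submodularity of $v_i$ via the threshold comparison. No gaps.
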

\begin{proof}
We need to show three properties about $\beta^{\tau}_i$: 
\begin{inparaenum}[(a)]
    \item $\beta^{\tau}_i(\emptyset) = 0$,
    \item for any set $S$ and item $o \in O \setminus S$, $\Delta_{\beta^{\tau}_i}(S, o) \in \{0, 1\}$, and 
    \item for any sets $S \subseteq T \subseteq O$, and item $o \in O \setminus T$, $\Delta_{\beta^{\tau}_i}(S, o) \ge \Delta_{\beta^{\tau}_i}(T, o)$.
\end{inparaenum}

\noindent\textbf{(a) $\beta^{\tau}_i(\emptyset) = 0$:} This is trivial. The size of the sorted telescoping sum vector is $0$ and therefore, the number of elements in the sorted telescoping sum vector greater than or equal to $\tau$ is $0$ as well. 

\noindent\textbf{(b) for any set $S$ and item $o \in O \setminus S$, $\Delta_{\beta^{\tau}_i}(S, o) \in \{0, 1\}$:} We construct an order $\pi$ to compute the sorted telescoping sum vector for $S+o$ as follows: we add the items in $S$ first in an arbitrary order and then add $o$. If $\Delta_{v_i}(S, o) \ge \tau$, there will be $\beta^{\tau}_i(S) + 1$ elements in the vector $\vec{v_i}(S + o, \pi)$ greater than or equal to $\tau$. Otherwise there will only be $\beta^{\tau}_i(S)$ elements greater than or equal to $\tau$. Therefore, $\Delta_{\beta^{\tau}_i}(S, o) \in \{0, 1\}$.

\noindent\textbf{(c) for any sets $S \subseteq T \subseteq O$, and item $o \in O \setminus T$, $\Delta_{\beta^{\tau}_i}(S, o) \ge \Delta_{\beta^{\tau}_i}(T, o)$:} 
Using (b), we only need to show that if $\Delta_{\beta^{\tau}_i}(T, o) = 1$, then $\Delta_{\beta^{\tau}_i}(S, o) = 1$ as well. Assume $\Delta_{\beta^{\tau}_i}(T, o) = 1$. Construct the order $\pi$ to compute the sorted telescoping sum vector for $T+o$ as follows: we add the items in $T$ first in any arbitrary order and then add $o$. From $\Delta_{\beta^{\tau}_i}(T, o) = 1$, we can infer that $\Delta_{v_i}(T, o) \ge \tau$. Using submodularity, $\Delta_{v_i}(S, o) \ge \tau$ as well. Therefore, $\Delta_{\beta^{\tau}_i}(S, o) = 1$.
\end{proof}

This lemma is particularly useful since, if any allocation is clean with respect to the valuations $\{\beta^{\tau}_h\}_{h \in N}$, we can use path augmentations to update the allocation. In our analysis, we will extensively use the valuations $\{\beta^{0}_h\}_{h \in N}$ and $\{\beta^{c}_h\}_{h \in N}$ corresponding to the cases where $\tau = 0$ and $\tau = c$ respectively.
This lemma also allows us to decompose any allocation into two different allocations, one that consists of the items valued greater than or equal to the threshold and one that consists of the remaining items. More formally, given a set of $A$\ONSUB valuations $\{v_i\}_{i \in N}$ and a threshold $\tau$, we define corresponding binary submodular valuations for each agent $\{\beta^{\tau}_i\}_{i \in N}$. 
We can decompose any allocation $X$ into two allocations $X^{\ge \tau}$ and $X^{< \tau}$ --- the former containing all the items with value at least the threshold and the latter containing all the other items. Keeping with the convention set by \citet{cousins2023bivalued}, we refer to $X^{\ge \tau}$ as the {\em clean allocation} and $X^{< \tau}$ as the {\em supplementary allocation}. This is consistent with the definition of a clean allocation in Definition \ref{def:clean}.
We have the following result. 
\begin{lemma}\label{lem:decomposition}
    Given an instance with $A$\ONSUB valuations, for any allocation $X$ and threshold $\tau$, there exists clean and supplementary allocations $X^{\ge \tau}$ and $X^{< \tau}$ such that for each agent $i \in N$: 
    \begin{inparaenum}[(a)]
        \item $X^{\ge \tau}_i \cup X^{< \tau}_i = X_i$, 
        \item $X^{\ge \tau}_i \cap X^{< \tau}_i = \emptyset$, and 
        \item $\beta^{\tau}_i(X_i) = \beta^{\tau}_i(X^{\ge \tau}_i) = |X^{\ge \tau}_i|$.
    \end{inparaenum}
\end{lemma}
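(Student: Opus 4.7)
The plan is to build the decomposition greedily from a single arbitrary enumeration of each bundle. Fix an agent $i \in N$ and any enumeration $\pi : [|X_i|] \to X_i$. For each $j \in [|X_i|]$, place $\pi(j)$ into $X^{\ge \tau}_i$ if $\Delta_{v_i}(\bigcup_{\ell \in [j-1]} \pi(\ell), \pi(j)) \ge \tau$, and into $X^{< \tau}_i$ otherwise. Doing this for every agent partitions each $X_i$ into two sets; bundling the unclassified items into $X^{\ge \tau}_0$ and $X^{< \tau}_0$ respectively makes both $X^{\ge \tau}$ and $X^{< \tau}$ allocations. Properties (a) and (b) hold by construction.

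For (c), the first key observation is that $|X^{\ge \tau}_i|$ equals the number of marginal contributions in the telescoping sum for $v_i(X_i)$ under $\pi$ that are at least $\tau$. Sorting these contributions in ascending order yields the vector $\vec v_i(X_i, \pi)$, which by order-neutrality equals $\vec v_i(X_i)$ independently of $\pi$. By the definition of $\beta^{\tau}_i$ from Lemma~\ref{lem:onsub-mrf}, the number of entries of this vector that are $\ge \tau$ is exactly $\beta^{\tau}_i(X_i)$, so $|X^{\ge \tau}_i| = \beta^{\tau}_i(X_i)$.

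It then remains to show $\beta^{\tau}_i(X^{\ge \tau}_i) = |X^{\ge \tau}_i|$. Let $\pi'$ be the restriction of $\pi$ to $X^{\ge \tau}_i$ preserving the relative order, and for each $k$ let $j_k$ be the unique index with $\pi(j_k) = \pi'(k)$. Then $\{\pi'(1), \ldots, \pi'(k-1)\} \subseteq \{\pi(1), \ldots, \pi(j_k - 1)\}$, so submodularity of $v_i$ gives $\Delta_{v_i}(\{\pi'(\ell)\}_{\ell < k}, \pi'(k)) \ge \Delta_{v_i}(\{\pi(\ell)\}_{\ell < j_k}, \pi(j_k)) \ge \tau$, where the last inequality holds because $\pi'(k) \in X^{\ge \tau}_i$. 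Hence every entry of $\vec v_i(X^{\ge \tau}_i, \pi')$ is $\ge \tau$, so by order-neutrality every entry of $\vec v_i(X^{\ge \tau}_i)$ is $\ge \tau$, giving $\beta^{\tau}_i(X^{\ge \tau}_i) = |X^{\ge \tau}_i|$ and completing (c).

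I expect the main subtlety to be this last step: one must verify that restricting to $X^{\ge \tau}_i$ preserves the $\ge \tau$ property of every marginal contribution along $\pi'$. This falls out cleanly from submodularity, since each element of $X^{\ge \tau}_i$ is added to a subset of what preceded it under $\pi$, but the argument also relies essentially on order-neutrality, both to make $\beta^{\tau}_i$ well-defined and to identify the ordering-dependent count $|X^{\ge \tau}_i|$ with the invariant quantity $\beta^{\tau}_i(X_i)$.
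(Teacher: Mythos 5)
Your construction is exactly the one used in the paper: classify each item of $X_i$ by its marginal contribution in a single arbitrary telescoping order, and invoke order-neutrality to identify the resulting count with $\beta^{\tau}_i(X_i)$. The paper dismisses the verification of (c) as ``easy to see,'' whereas you correctly supply the missing detail --- the submodularity argument showing that restricting $\pi$ to $X^{\ge\tau}_i$ can only increase each marginal contribution, so every entry of $\vec v_i(X^{\ge\tau}_i)$ stays at least $\tau$ --- so your proof is a correct, slightly more complete rendering of the same argument.
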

\begin{proof}
For each agent $i \in N$, pick any arbitrary ordering $\pi: [|X_i|] \rightarrow X_i$ over the bundle $X_i$ and compute the telescoping sum vector 
\begin{align*}
    \bigg (\Delta_{v_i} \big (\emptyset, \pi(1) \big ), \Delta_{v_i} \big (\pi(1), \pi(2) \big ), \dots, \Delta_{v_i} \big (\bigcup_{\ell \in [j-1]} \pi(\ell), \pi(j) \big ), \dots, \Delta_{v_i} \big (S - \pi(|X_i|), \pi(|X_i|) \big ) \bigg ).
\end{align*}
We know that there are $\beta^{\tau}(X_i)$ elements in the above vector which are greater than or equal to $\tau$. Each element is associated with the marginal gain of adding an item $o$; we construct $X^{\ge \tau}_i$ as the set of these $\beta^{\tau}(X_i)$ items. We define $X^{<\tau}_i$ as the set $X_i \setminus X^{\ge \tau}_i$. 

We do this for each $i \in N$ to construct the allocations $X^{\ge \tau}$ and $X^{< \tau}$. It is easy to see that this construction satisfies the properties (a)--(c).
\end{proof} 

We refer to the splitting of $X$ into $X^{\ge \tau}$ and $X^{< \tau}$ as a {\em decomposition}. 
Given an allocation $X$, we refer to such a decomposition using the shorthand notation $X = X^{\ge \tau} \cup X^{< \tau}$. More generally, given any two allocations $X$ and $Y$, we refer to the allocation $X \cup Y$ as the allocation where each agent $i \in N$ receives the bundle $X_i \cup Y_i$. 
In our algorithm, we decompose allocations using the threshold $c$ and then use path augmention to modify $X^{\ge c}$.

We also show that for any $A\ONSUB$ function $v$ and threshold $\tau$, the value of any bundle w.r.t. $\beta^{\tau}$ can be computed using $O(m)$ queries to $v$.
\begin{lemma}\label{lem:mrf-oracle}
Let $v$ be an $A\ONSUB$ function and $\tau$ be a real number. For any bundle $S \subseteq O$, $\beta^{\tau}(S)$ can be computed using $O(m)$ queries to $v$.
\end{lemma}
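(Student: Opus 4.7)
The plan is to exploit order-neutrality directly: since $v$ is an $A\ONSUB$ function, the sorted telescoping sum vector $\vec v(S)$ does not depend on the ordering chosen to compute it, and in particular the number of entries at or above the threshold $\tau$ is the same no matter which ordering we use. So it suffices to pick any single convenient ordering, recover its telescoping marginals, and count.

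Concretely, first I would fix an arbitrary bijection $\pi : [|S|] \to S$ (any order on the items of $S$ works). Then I would issue $|S| + 1 \le m + 1$ queries to $v$, one for each prefix, obtaining the values
\[
v(\emptyset),\ v(\{\pi(1)\}),\ v(\{\pi(1),\pi(2)\}),\ \ldots,\ v(S).
\]
Taking successive differences of these stored values yields the $|S|$ marginal gains $\Delta_v\bigl(\bigcup_{\ell < j} \pi(\ell),\, \pi(j)\bigr)$ for $j \in [|S|]$; this post-processing requires no further oracle calls. Finally I would count how many of these marginals are at least $\tau$ and return that count as $\beta^\tau(S)$.

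For correctness, the multiset of marginals produced by the telescoping sum under $\pi$ is, by definition, exactly the multiset of entries of $\vec v(S, \pi)$. Order-neutrality gives $\vec v(S, \pi) = \vec v(S)$, so the number of entries $\ge \tau$ among these marginals equals $\beta^\tau(S)$. The total query cost is $|S| + 1 = O(m)$, as required. There is no real obstacle here; the only thing one must be careful about is invoking order-neutrality to justify that computing marginals along a single arbitrary prefix-ordering suffices — without that property, the count along one ordering need not agree with the count given by the sorted vector.
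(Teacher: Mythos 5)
Your proof is correct and follows essentially the same approach as the paper's: pick an arbitrary ordering, compute the telescoping marginals via $|S|+1 = O(m)$ prefix queries to $v$, and count the entries at least $\tau$, with order-neutrality guaranteeing the count is well-defined. The only difference is that you spell out the query accounting and the role of order-neutrality more explicitly than the paper does.
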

\begin{proof}
We use the following simple algorithm: we pick any arbitrary ordering $\pi$ and compute the telescoping sum vector $\vec v(S)$. $\beta^{\tau}(S)$ corresponds to the number of elements with value at least $\tau$ in the telescoping sum vector. Since the telescoping sum vector has size at most $m$, the lemma follows.
\end{proof}

\section{Understanding $\{-1, 0, c\}$\ONSUB Valuations}
We turn our attention to the specific set of valuations assumed in this paper --- $\{-1, 0, c\}\ONSUB$ valuations for some positive integer $c$. We establish a few important technical lemmas for fair allocation instances when all agents have $\{-1, 0, c\}\ONSUB$ valuations.

We first show that any allocation $X$ can be decomposed into {\em three} allocations $X^{-1}$, $X^0$ and $X^c$ such that the items with marginal value $c$ are in $X^c$, items with marginal value $0$ are in $X^0$ and items with marginal value $-1$ are in $X^{-1}$.

\begin{lemma}\label{lem:3-decomposition}
When agents have $\{-1, 0, c\}\ONSUB$ valuations, for any allocation $X$, there exist three allocations $X^{-1}$, $X^0$, and $X^c$ such that for each agent $i \in N$
\begin{inparaenum}[(a)]
    \item $X^{-1}_i \cup X^0_i \cup X^c_i = X_i$, 
    \item $X^{-1}_i$, $X^0_i$, and $X^c_i$ are pairwise disjoint,
    \item $v_i(X^c_i \cup X^0_i) = v_i(X^c_i) = c|X^c_i|$, and
    \item $v_i(X_i) = c|X^c_i| - |X^{-1}_i|$.
\end{inparaenum}
\end{lemma}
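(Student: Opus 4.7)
The plan is to first invoke Lemma~\ref{lem:decomposition} at threshold $\tau = c$ to peel off $X^c$ with $v_i(X^c_i) = c|X^c_i|$, and then decompose each leftover bundle $X_i \setminus X^c_i$ into $X^0_i$ and $X^{-1}_i$ by greedily ordering its items.

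First, by Lemma~\ref{lem:decomposition} at $\tau = c$, there is an allocation $X^c$ with $X^c_i \subseteq X_i$ and $\beta^c_i(X^c_i) = |X^c_i|$ for each $i \in N$. Since the marginals of $v_i$ lie in $\{-1,0,c\}$, the identity $\beta^c_i(X^c_i) = |X^c_i|$ forces every marginal along any ordering of $X^c_i$ to equal $c$, so $v_i(X^c_i) = c|X^c_i|$. Writing $k^c_i := |X^c_i|$, order-neutrality then says the sorted telescoping sum vector $\vec v_i(X_i)$ contains exactly $k^c_i$ copies of $c$.

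Next, for each $i$ I would order the items of $X_i \setminus X^c_i$ greedily starting from $X^c_i$: at each step, pick an unadded item whose marginal over the current bundle is maximal. A standard submodularity argument shows the resulting sequence of marginals is non-increasing --- if $T_k$ is the bundle after $k$ greedy steps, then $\Delta_{v_i}(T_{k+1}, o_{k+2}) \le \Delta_{v_i}(T_k, o_{k+2}) \le \Delta_{v_i}(T_k, o_{k+1})$ by combining submodularity with the greedy choice. Moreover, no greedy marginal can equal $c$: concatenating any ordering of $X^c_i$ with the greedy ordering of $X_i \setminus X^c_i$ gives an ordering of $X_i$ whose first $k^c_i$ marginals are already $c$, and order-neutrality forces $\vec v_i(X_i)$ to contain only $k^c_i$ copies of $c$ in total. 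Hence the greedy marginal sequence has the form $0^{k^0_i}(-1)^{k^{-1}_i}$ for some $k^0_i, k^{-1}_i \ge 0$.

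Finally, I define $X^0_i$ and $X^{-1}_i$ to be the first $k^0_i$ and last $k^{-1}_i$ items of the greedy ordering. Properties (a) and (b) are immediate. For (c), the length-$(k^c_i + k^0_i)$ prefix of the combined ordering spans exactly $X^c_i \cup X^0_i$ and contributes $c k^c_i + 0 \cdot k^0_i = c|X^c_i|$, so $v_i(X^c_i \cup X^0_i) = c|X^c_i| = v_i(X^c_i)$. Property (d) follows by summing the full combined ordering: $v_i(X_i) = c k^c_i - k^{-1}_i = c|X^c_i| - |X^{-1}_i|$. The main delicate step is coupling the greedy sub-ordering with order-neutrality to rule out stray $c$-marginals; the rest is bookkeeping along the combined ordering.
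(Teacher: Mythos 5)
Your proof is correct. It differs from the paper's in how the residual bundle $X_i \setminus X^c_i$ is split into $X^0_i$ and $X^{-1}_i$: the paper applies \Cref{lem:decomposition} twice in a nested fashion --- first at threshold $\tau=0$ to peel off the $-1$-marginal items, then at threshold $\tau=c$ on the non-negative part --- and then finishes by counting indices in the sorted telescoping sum vectors. You instead apply \Cref{lem:decomposition} only once (at $\tau=c$, in the opposite order) and replace the second application with a hands-on greedy argument: ordering $X_i \setminus X^c_i$ by maximal marginal and combining submodularity with the greedy choice to show the residual marginal sequence is non-increasing, hence of the form $0^{k^0_i}(-1)^{k^{-1}_i}$ once order-neutrality rules out stray $c$-marginals. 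Both arguments lean on order-neutrality for exactly the same counting step (the total number of $c$-entries in $\vec v_i(X_i)$ is fixed at $|X^c_i|$), so the approaches are close in spirit; yours buys an explicit ordering that witnesses the decomposition (the greedy prefix directly certifies property (c)), at the cost of re-deriving by hand what the paper gets by reusing \Cref{lem:decomposition} a second time. The order in which you handle the thresholds also neatly sidesteps the nesting issue (ensuring $X^c_i$ sits inside the non-negative part), which the paper handles by decomposing $X^{\ge 0}$ rather than $X$ in its second step.
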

\begin{proof}
 We decompose $X$ with respect to the valuations $\{\beta^{0}_h\}_{h \in N}$ to create $X^{<0}$ and $X^{\ge 0}$ (Lemma \ref{lem:decomposition}). We then decompose $X^{\ge 0}$ with respect to the valuations $\{\beta^{c}_h\}_{h \in N}$ to create $X^{\ge 0, < c}$ and  $X^{\ge 0, \ge c}$ where $X^{\ge 0, \ge c}$ is the clean allocation and $X^{\ge 0, < c}$ is the supplementary allocation.

 We denote the allocations $X^{<0}$, $X^{\ge 0, < c}$ and  $X^{\ge 0, \ge c}$ as the allocations $X^{-1}, X^0$ and $X^c$ respectively. This construction trivially satisfies (a) and (b). To show that it satisfies (c) and (d), we fix an agent $i \in N$ and examine the sorted telescoping sum vector $\vec v_i(X_i)$. 

The sorted telescoping sum vector $\vec v_i(X_i)$ has up to three unique values --- $-1$, $0$ and $c$. By definition (Lemma \ref{lem:decomposition}), $|X^{-1}_i|$ (or $|X^{<0}_i|$) consists of the number of indices with value $-1$ in $\vec v_i(X_i)$ and  $|X^{\ge 0}_i|$ consists of the number of indices with value either $0$ or $c$. Similarly, since $\vec v_i(X^{\ge 0}_i)$ only consists the values $0$ and $c$, $|X^c_i|$ (or $|X^{\ge 0, \ge c}_i|$) corresponds to the number of indices with value $c$ in $\vec v_i(X^{\ge 0}_i)$. Since all the other indices in $\vec v_i(X^{\ge 0}_i)$ have value $0$, we have $v_i(X^0_i \cup X^c_i) = v_i(X^{\ge 0}_i) = c|X^c_i|$. Furthermore, by the definition of a decomposition, $v_i(X^c_i) = c|X^c_i|$. This proves (c).

To prove (d), we use the observation that the number of indices with value $c$ in $\vec v_i(X^{\ge 0}_i)$ is equal to the number of indices with value $c$ in $\vec v_i(X_i)$. $|X^{-1}_i|$ denotes the number of indices with value $-1$ in $\vec v_i(X_i)$ and all the indices which neither have the value $-1$ nor $c$ have the value $0$. Therefore, since $v_i(X_i)$ corresponds to the sum of the elements in $\vec v_i(X_i)$, we can conclude that $v_i(X_i) = c|X^c_i| - |X^{-1}_i|$.
\end{proof}

Similar to Lemma \ref{lem:decomposition}, we use $Y = Y^c \cup Y^0 \cup Y^{-1}$ to denote the decomposition of any allocation $Y$ into three allocations satisfying the conditions of Lemma \ref{lem:3-decomposition}.
We present an example of a decomposition below.

\begin{example}\label{ex:3-decomposition}
Consider an instance with two agents $\{1, 2\}$ and four items $\{o_1, o_2, o_3, o_4\}$. Agent valuations are defined as follows:
\begin{align*}
    v_1(S) = c\min\{|S \cap \{o_1, o_2,\}|, 1\}, &&
    v_2(S) = - |S \cap \{o_3, o_4\}|.
\end{align*}
Consider an allocation $X$ where $X_1 = \{o_1, o_2\}$ and $X_2 = \{o_3, o_4\}$. The following is a valid decomposition of $X$:
\begin{align*}
    X^c_0 = \{o_2, o_3, o_4\} && X^0_0 = \{o_1, o_3, o_4\} && X^{-1}_0 = \{o_1, o_2\}\\
    X^c_1 = \{o_1\} && X^0_1 = \{o_2\} && X^{-1}_1 = \emptyset \\
    X^c_2 = \emptyset && X^0_2 = \emptyset && X^{-1}_1 = \{o_3, o_4\}
\end{align*}
There may be other decompositions as well.
Specifically, if we swap $o_1$ and $o_2$ in the above decomposition, the new set of allocations still corresponds to a valid decomposition.
\end{example}

Note that while order-neutral submodular valuations can have multiple decompositions, $\{-1, 0, c\}\ADD$ valuations have a unique decomposition --- for any allocation $X = X^c \cup X^0 \cup X^{-1}$ and agent $i$, $X^c_i$ consists of all items in $X_i$ that agent $i$ values at $c$, $X^0_i$ consists of the items that $i$ values at $0$ and $X^{-1}_i$ consists of the items that $i$ values at $-1$.

\subsection{Weighted Exchange Graphs}\label{sec:weighted-exchange-graph}
Given an allocation $X = X^c \cup X^0 \cup X^{-1}$, the path augmentation technique introduced in Section \ref{sec:prelims} can be used to manipulate $X^c$ (using the exchange graph $\cal G(X^c, \beta^c)$) or $X^c \cup X^0$ (using the exchange graph $\cal G(X^c \cup X^0, \beta^0)$). However, when we use path augmentation with the exchange graph $\cal G(X^c, \beta^c)$ to manipulate the allocation $X^c$, we may affect the cleanness of $X^c \cup X^0$ (w.r.t. the valuations $\{\beta^0_h\}_{h \in N}$). To see why, consider the following simple example.

\begin{example}\label{ex:classic-paths-do-not-work}
Consider an example with one agent $\{1\}$ and two items $\{o_1, o_2\}$. The agent's valuation function is defined as follows:
\begin{align*}
    v_1(S) = c\min\{|S|, 1\} - \max\{|S|-1, 0\}.
\end{align*}
In simple words, the first item in the bundle gives agent $1$ a value of $c$ but the second item gives agent $1$ a marginal value of $-1$. Consider the allocations $X^c$ and $X^0$, where $X_1^c = \emptyset$ and $X_1^0 = \{o_1\}$.

Note that $X^c$ is clean with respect to $\{\beta^c_h\}_{h \in N}$ and $X^c \cup X^0$ is clean with respect to $\{\beta^0_h\}_{h \in N}$.
The singleton path $(o_2)$ is one of the shortest paths from $F_{\beta^c_1}(X^c)$ to $X^c_0$ in the exchange graph $\cal G(X^c, \beta^c)$. Augmenting along this path creates an allocation $Y$ where $Y^c_1 = \{o_2\}$ and $X^0_1 = \{o_1\}$.

Validating the correctness of \Cref{lem:path-augmentation}, $Y^c$ is indeed clean with respect to $\{\beta^c_h\}_{h \in N}$. However, $Y^c \cup X^0$ is not clean with respect to $\{\beta^0_h\}_{h \in N}$. 
Note that if we instead chose to augment along the singleton path $(o_1)$ instead of $(o_2)$, we would have not faced this issue.
\end{example}

In the above example, note that $X^c$ and $X^0$ do not form a valid decomposition of $X$. This is deliberate; in our algorithm design, we will not assume that $X^c$, $X^0$ and $X^{-1}$ form a valid decomposition of $X$. We will only assume that $X^c$ is clean with respect to $\{\beta^c_h\}_{h \in N}$ and $X^c \cup X^0$ is clean with respect to $\{\beta^0_h\}_{h \in N}$. Our goal is to use path augmentations to modify $X^c$ while retaining both these useful properties. 
To guarantee that $X^c \cup X^0$ remains clean (w.r.t. $\{\beta^0_h\}_{h \in N}$) even after path augmentation, we present a technique to carefully choose paths in the exchange graph. 
On a high level, this is done by giving weights to the edges in the exchange graph and choosing the least-weight path as opposed to a shortest path. 
The way we weigh edges is motivated by \Cref{ex:classic-paths-do-not-work} --- for all $i \in N$, we give edges from $X^c_i$ to $X^0_i$ a lower weight than other edges. It turns out that this simple change is sufficient to ensure the cleanness of $X^c \cup X^0$ is clean with respect to $\{\beta^0_h\}_{h \in N}$.

More formally, we define the weighted exchange graph $\cal G^w(X^c, X^0, \beta^c)$ as a weighted directed graph with the same nodes and edges as $\cal G(X^c, \beta^c)$. Each edge has a specific weight defined as follows: all edges from $o \in X^c_i$ to $o' \in X^0_i$ for any $i \in N$ are given a weight of $\frac12$; the remaining edges are given a weight of $1$. 

For any agent $i \in N$, we define two paths on this weighted exchange graph. A {\em Pareto-improving path} is a path from $F_{\beta^c_i}(X^c, i)$ for some $i \in N$ to some item in $X^c_0$. An {\em exchange path} is a path from $F_{\beta^c_i}(X^c, i)$ for some $i \in N$ to some item in $X^c_j$ for some $j \in N - i$.
We first show that path augmentation along the least-weight Pareto-improving path maintains the cleanness of $X^c \cup X^0$.

\begin{restatable}[Pareto Improving Paths]{theorem}{thmparetoimprovingpaths}\label{thm:Pareto-improving-paths}
When agents have $\{-1, 0, c\}\ONSUB$ valuations, let $X^c$ be a clean allocation with respect to $\{\beta^c_h\}_{h \in N}$ and $X^0$ be an allocation such that $X^c\cup X^0$ is clean with respect to the valuations $\{\beta^0_h\}_{h \in N}$. Let $X^c_h \cap X^0_h = \emptyset$ for all $h \in N$. For some agent $i \in N$, if a Pareto-improving path exists from $F_{\beta^c_i}(X^c, i)$ to $X^c_0$ in the weighted exchange graph $\cal G^w(X^c, X^0, \beta^c)$, then, path augmentation along the least-weight Pareto-improving path $P = (o_1, \dots, o_t)$ from $F_{\beta^c_i}(X^c, i)$ to $X^c_0$ in the weighted exchange graph $\cal G^w(X^c, X^0, \beta^c)$ results in the following allocations $Y^c$ and $Y^0$:
\begin{align*}
        Y^c_k = 
        \begin{cases}
            X^c_k \Uplambda P & (k \in N + 0 - i) \\
            X^c_i \Uplambda P + o_1 & (k = i)
        \end{cases} 
        ,
        &&
        Y^0_k = 
        \begin{cases}
            X^0_k - o_t & (k \in N) \\
            X^0_0 + o_t  & (k = 0)
        \end{cases}
        .
\end{align*}
$Y^c$ is clean with respect to with respect to $\{\beta^c_h\}_{h \in N}$ and $Y^c \cup Y^0$ is clean with respect to $\{\beta^0_h\}_{h \in N}$. Furthermore, for all $h \in N$, $Y^c_h \cap Y^0_h = \emptyset$.
\end{restatable}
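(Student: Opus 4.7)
The plan is to prove the theorem in three stages. In Stage 1, I establish cleanness of $Y^c$ under $\{\beta^c_h\}_{h\in N}$ by mirroring Lemma \ref{lem:path-augmentation}. That proof uses the shortest-path hypothesis only to rule out two items of $P$ sharing a single $X^c_k$. The same conclusion persists for a least-weight path: if $o_j, o_{j'} \in X^c_k$ with $j < j'$, then the binary-submodular exchange property on $\beta^c_k$ (Lemma \ref{lem:onsub-mrf}) combined with the existing edge $(o_{j'}, o_{j'+1})$ produces the edge $(o_j, o_{j'+1})$ in the exchange graph. Substituting this single edge deletes at least one positive-weight edge and yields a strictly lighter path, contradicting minimality. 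With no in-path repetitions, the standard bookkeeping gives $Y^c$ clean under $\{\beta^c_h\}$.

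In Stage 2 I prove the key structural lemma: in a least-weight Pareto-improving path $P = (o_1, \dots, o_t)$, (i) $o_1 \notin X^0_i$, and (ii) for every $j \in \{1, \dots, t-2\}$, $o_{j+1} \notin X^0_{k_j}$, where $k_j$ is the unique agent with $o_j \in X^c_{k_j}$. For (ii), suppose $o_{j+1} \in X^0_{k_j}$ for some $j < t-1$. The outgoing edge at $o_{j+1}$ forces $o_{j+1} \in X^c_\ell$ for some $\ell$, with $\ell \ne k_j$ by $X^c_{k_j} \cap X^0_{k_j} = \emptyset$. Combining the edge relations at $(o_j, o_{j+1})$ and $(o_{j+1}, o_{j+2})$ with the hypothesis that $X^c \cup X^0$ is clean under $\{\beta^0_h\}$, which via order-neutrality pins down the marginals $o_{j+1}$ provides to the relevant bundles of both $k_j$ and $\ell$, I would derive that the edge $(o_j, o_{j+2})$ already lies in the weighted exchange graph with weight at most $1$. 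Replacing the two-edge segment $(o_j, o_{j+1}), (o_{j+1}, o_{j+2})$ of combined weight at least $\frac{1}{2} + \frac{1}{2}$ by this single edge saves at least $\frac{1}{2}$ in weight, contradicting minimality. Claim (i) follows by a symmetric shortcut argument at the start of the path, using that $o_1 \in F_{\beta^c_i}(X^c, i)$ together with the $\beta^0_i$-cleanness of $X^c_i \cup X^0_i$.

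In Stage 3 I use the structural lemma to establish the cleanness and disjointness claims. The lemma guarantees that during augmentation along $P$, the only item that can simultaneously lie in some $Y^c_h$ and in the original $X^0_{h'}$ is $o_t$, since $o_1 \notin X^0_i$ and no interior $o_{j+1}$ lies in $X^0_{k_j}$. The $Y^0$ formula $Y^0_k = X^0_k - o_t$ precisely excises this last possibility, so $Y^c_h \cap Y^0_h = \emptyset$ for all $h \in N$. For cleanness of $Y^c \cup Y^0$ under $\{\beta^0_h\}$, I check agent by agent: uninvolved agents and any agent $h' \ne k_{t-1}$ that previously held $o_t$ in $X^0_{h'}$ see their combined bundle change only by removal of a $0$-marginal item, which preserves cleanness. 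For $h = k_j$ with $j \in \{1, \dots, t-1\}$, the combined bundle undergoes a one-for-one swap (lose $o_j$, gain $o_{j+1}$), and the edge condition on $\beta^c_{k_j}$ combined with submodularity of $v_{k_j}$ shows the new item has $\ge 0$ marginal under $v_{k_j}$, so the usual binary-submodular exchange step goes through on $\beta^0_{k_j}$. For $h = i$, agent $i$ simply gains $o_1$, which has strictly positive marginal under $v_i$ (in fact $\ge c$ by $o_1 \in F_{\beta^c_i}(X^c, i)$ together with Stage~2(i)).

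The main obstacle is the Structural Lemma, specifically producing the shortcut edge $(o_j, o_{j+2})$. The subtlety is that $o_{j+2}$ is given only as a marginal-gain item for $\ell \ne k_j$, and converting this into a marginal-gain claim for $k_j$ requires carefully combining submodularity of $v_{k_j}$, order-neutrality, and the $\beta^0_{k_j}$-cleanness hypothesis to pin down the precise marginals that $o_{j+1}$ and $o_{j+2}$ provide to $k_j$'s relevant bundles. The rest of the proof is essentially bookkeeping on top of this structural step.
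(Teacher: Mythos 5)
The central gap is in Stage 3, at the one step that carries the whole theorem. For an agent $k_j$ on the path you claim that the edge condition on $\beta^c_{k_j}$ ``combined with submodularity of $v_{k_j}$'' shows the incoming item has non-negative marginal, so the usual exchange step goes through on $\beta^0_{k_j}$. But the edge only certifies $\Delta_{v_{k_j}}(X^c_{k_j}-o_j,\, o_{j+1}) = c$, and submodularity then gives $\Delta_{v_{k_j}}\big((X^c_{k_j}-o_j)\cup X^0_{k_j},\, o_{j+1}\big) \le c$ --- an upper bound, not a lower bound. The marginal of $o_{j+1}$ to the \emph{combined} bundle can perfectly well be $-1$; Example~\ref{ex:classic-paths-do-not-work} is precisely this situation, and it is the reason a shortest path in $\cal G(X^c,\beta^c)$ does not suffice and weights are introduced at all. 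The paper closes this gap by using the min-weight hypothesis \emph{inside} the cleanness argument rather than for set membership: if some $o'\in X^0_{k_j}$ had $\Delta_{\beta^c_{k_j}}(X^c_{k_j}-o_j,\, o')=1$, then $(o_1,\dots,o_j,o')$ would be a strictly lighter Pareto-improving path (its final edge has weight $\tfrac12$), so no such $o'$ exists; consequently the sorted telescoping sum vector of $(X^c_{k_j}-o_j)\cup X^0_{k_j}$ has exactly $|X^c_{k_j}|-1$ entries equal to $c$ and the rest equal to $0$, and comparing two orderings of the augmented bundle (one inserting $o_{j+1}$ immediately after the $X^c$-items, one appending it at the end) and invoking order-neutrality forces every entry of the augmented vector to be $c$ or $0$. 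Your proposal never performs this step, and the structural lemma of Stage 2, which only controls where $o_{j+1}$ sits, cannot substitute for it.

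There are two secondary problems. First, the Stage 2 shortcut edge $(o_j,o_{j+2})$ need not exist: the edges $(o_j,o_{j+1})$ and $(o_{j+1},o_{j+2})$ are certified by two \emph{different} agents' valuations ($\beta^c_{k_j}$ and $\beta^c_{\ell}$), and nothing transfers $o_{j+2}$'s $c$-marginal status from $\ell$ to $k_j$ --- agent $k_j$ may value $o_{j+2}$ at $-1$ in every context. (The membership facts you want are true, but for a trivial reason under the algorithm's invariants: every item of $\bigcup_h X^0_h$ lies in $X^c_0$, hence has no outgoing edge and can only be the terminal item $o_t$.) Second, Stage 1 mischaracterizes Lemma~\ref{lem:path-augmentation}: its proof needs the path to have no shortcut edges $(o_j,o_{j'})$ with $j'>j+1$ in the exchange graph, not merely to avoid revisiting a bundle, and your proposed exchange-property derivation of the edge $(o_j,o_{j'+1})$ fails for the same reason as the Stage 2 shortcut. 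The paper's route is both simpler and correct: any path contains at most one weight-$\tfrac12$ edge (such edges terminate in $X^c_0$, which has no outgoing edges), so a min-weight path is also a shortest path in the unweighted graph, and Lemma~\ref{lem:path-augmentation} applies verbatim.
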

\begin{proof}
Note that the path $P$ is still a shortest path in the (unweighted) graph $\cal G(X^c, \beta^c)$. This is because if there was a shorter path $P'$ of length $< t-1$, then this path $P'$ will have weight at most $t-2$ in the weighted exchange graph $\cal G^w(X^c, X^0, \beta^c)$ since each edge weight is upper bounded at $1$. On the other hand, the path $P$ has weight at least $t-2 + \frac12$, since there can be at most one edge with weight $\frac12$; all the other edges must have weight $1$. This is because all the $\frac12$ weight edges are incident on some item in $X^c_0$ and there are no edges coming out of any item in $X^c_0$ in both $\cal G(X^c, \beta^c)$ and $\cal G^w(X^c, X^0, \beta^c)$.

Therefore, path augmentation with the path $P$ will result in $Y^c$ being clean with respect to $\{\beta^c_h\}_{h \in N}$ (Lemma \ref{lem:path-augmentation}). In the rest of this proof, we show that $Y^c \cup Y^0$ is clean with respect to $\{\beta^0_h\}_{h \in N}$. Path augmentation with the path $P = (o_1, o_2, \dots, o_t)$ can be seen as a combination of single-item transfers done one after another. We first transfer $o_1$ from an agent (say $i_1$) to $i$, then we transfer $o_2$ from some other agent $i_2$ to $i_1$ and so on until we finally transfer $o_t$ from $X^c_0$ to some agent $i_t$. Note that $o_t$ is the only item from $\{o_1, \dots, o_t\}$ which is in $X^c_0$ since no item in $X^c_0$ has any outgoing edges.

We prove the lemma by induction. Let $P_l = (o_1, o_2, \dots, o_l)$ and let $Z^c, Z^0$ be the allocations that result from augmenting $X$ along the path $P_l$. Since $P$ is a shortest path from $o_1$ to $o_t$ in $\cal G(X^c, \beta^c)$, $P_l$ must be a shortest path from $o_1$ to $o_l$. Therefore, $Z^c$ is clean with respect to $\{\beta^c_h\}_{h \in N}$. We show for any $l < t$, if $Z^c \cup Z^0$ is clean with respect to $\{\beta^0_h\}_{h \in N}$, then moving $o_{l+1}$ from the agent who currently has it (say $j$) to the agent who was allocated $o_l$ (say $k$) still maintains the cleanness of $Z^c \cup Z^0$. Note that since $P_{l+1}$ is a shortest path from $o_1$ to $o_{l+1}$ in $\cal G(X^c, \beta^c)$ as well, $\Delta_{\beta^c_k}(Z^c_k, o_{l+1}) = 1$ (Lemma \ref{lem:path-augmentation}).

\noindent{\textbf{Case 1:}  We transfer an item $o_{l+1} \notin Z^0_k$ from $Z^c_j$ (for some $j \in N+0$) to $k$.}

Let us first study the sorted telescoping sum vector $\vec v_k(Z^c_k \cup Z^0_k)$. Define an ordering $\pi$ where the items in $Z^c_k$ appear first in an arbitrary order followed by the items in $Z^0_k$ in any arbitrary order. If there are more than $|Z^c_k|$ indices with value $c$ in $\vec v_k(Z^c_k \cup Z^0_k)$, there must be an item $o'\in Z^0_k$ such that $\Delta_{\beta^c_k}(Z^c_k, o') = 1$. This contradicts the fact that $P$ is the least-weight Pareto-improving path from $F_{\beta^c_i}(X^c_i, i)$ to $X^c_0$ since the path $(o_1, \dots, o_l, o')$ has strictly lesser weight than $P$. Therefore, the sorted telescoping sum vector $\vec v_k(Z^c_k \cup Z^0_k)$ must have exactly $|Z^c_k|$ indices with the value $c$. Since the bundle is clean with respect to $\{\beta^0_h\}_{h \in N}$, the other $|Z^0_k|$ indices must have value $0$ in $\vec v_k(Z^c_k \cup Z^0_k)$.

Let us now study the sorted telescoping sum vector $\vec v_k((Z^c_k \cup Z^0_k) + o_{l+1})$. Define an ordering $\pi'$ where the items in $Z^c_k$ appear first in an arbitrary order followed by the item $o_{l+1}$ followed by the items in $Z^0_k$ in any arbitrary order. In this ordering, we know that the sorted telescoping sum vector $\vec v_k((Z^c_k \cup Z^0_k) + o_{l+1})$ must have at least $|Z^c_k| + 1$ indices with a value of $c$ since $\beta^c_k(Z^c_k + o_{l+1}) = |Z^c_k| + 1$. 

Define a second ordering $\pi''$ where the items in $Z^c_k$ appear first in an arbitrary order followed by the items in $Z^0_k$ in any arbitrary order followed by the item $o_{l+1}$. In this ordering, since we showed that there are $|Z^0_k|$ indices with value $0$ in $\vec v_k(Z^c_k \cup Z^0_k)$, there are at least $|Z^0_k|$ indices with a value of $0$ in the sorted telescoping sum vector $\vec v_k((Z^c_k \cup Z^0_k) + o_{l+1})$. 

Combining these two observations, we get that there are $|Z^c_k| + 1$ indices with value $c$ and $|Z^0_k|$ indices with a value of $0$ in $\vec v_k((Z^c_k \cup Z^0_k) + o_{l+1})$. Therefore, $Z^c_k + o_{l+1}$ is clean with respect to $\beta^c_k$ and $(Z^c_k \cup Z^0_k) + o_{l+1}$ is clean with respect to $\beta^0_k$. Note that $Z^c_j - o_{l+1}$ is still clean with respect to $\beta^c_j$ and using submodularity, $(Z^c_j \cup Z^0_j) - o_{l+1}$ is clean with respect to $\beta^0_j$. 

\noindent{\textbf{Case 2:} We transfer an item $o_{l+1} \in X^0_k$ from $X^c_0$ to $k$.}
This case is trivial. $Z^c_k + o_{l+1}$ is clean with respect to $\beta^c_k$ by our choice of $o_{l+1}$ and since this is the final node in the path, $(Z^c_k+o_{l+1}) \cup (Z^0_k - o_{l+1}) = Z^c_k \cup Z^0_k$ is clean with respect to $\beta^0_k$ by assumption.
\end{proof}

In the above Theorem, we also modify $Y^0$ as part of the path augmentation operation to ensure $o_t$ is not present in two different bundles.
 We also show that when Pareto-improving paths do not exist, the least-weight exchange path maintains the cleanness of $X^c \cup X^0$. 

\begin{restatable}[Exchange Paths]{theorem}{thmexchangepaths}\label{thm:exchange-paths}
When agents have $\{-1, 0, c\}\ONSUB$ valuations, let $X^c$ be a clean allocation with respect to $\{\beta^c_h\}_{h \in N}$ and $X^0$ be an allocation such that $X^c\cup X^0$ is clean with respect to the valuations $\{\beta^0_h\}_{h \in N}$. Let $X^c_h \cap X^0_h = \emptyset$ for all $h \in N$. If no Pareto-improving path exists from $F_{\beta^c_i}(X, i)$ to $X^c_0$ in the weighted exchange graph $\cal G^w(X^c, X^0, \beta^c)$, then, path augmentation along the least-weight exchange path $P = (o_1, \dots, o_t)$ from $F_{\beta^c_i}(X, i)$ to $X^c_j$ (for any $j \in N - i$) in the weighted exchange graph $\cal G^w(X^c, X^0, \beta^c)$ results in the allocations $Y^c$ and $Y^0$:
\begin{align*}
        Y^c_k = 
        \begin{cases}
            X^c_k \Uplambda P & (k \in N + 0 - i) \\
            X^c_i \Uplambda P + o_1 & (k = i)
        \end{cases} 
        ,
        &&
        Y^0_k = X^0_k \qquad (k \in N)
        .
\end{align*}
$Y^c$ is clean with respect to with respect to $\{\beta^c_h\}_{h \in N}$ and $Y^c \cup Y^0$ is clean with respect to with respect to $\{\beta^0_h\}_{h \in N}$. Furthermore, for all $h \in N$, $Y^c_h \cap Y^0_h = \emptyset$.
\end{restatable}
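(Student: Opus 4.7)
The plan is to mirror the proof of Theorem~\ref{thm:Pareto-improving-paths}, exploiting the simplification that the exchange path terminates inside $\bigcup_{h \in N} X^c_h$ rather than at $X^c_0$, so no item enters or leaves $X^c_0$ during augmentation and $Y^0 = X^0$.

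\textbf{Reducing to a shortest-path argument.} I would first show that the least-weight exchange path $P = (o_1, \dots, o_t)$ is in fact a shortest path in the unweighted graph $\cal G(X^c, \beta^c)$. Every weight-$\frac{1}{2}$ edge terminates at an item of $\bigcup_{h \in N} X^0_h$. Because $X^c \cup X^0$ is an allocation with $X^c_h \cap X^0_h = \emptyset$ for all $h \in N$, every such item in fact lies in $X^c_0$ and hence has no outgoing edge in $\cal G(X^c, \beta^c)$. Since $P$ ends at an item of $X^c_j$ with $j \in N$, no weight-$\frac{1}{2}$ edge can appear in $P$: not as the final edge (its target would lie in $X^c_0$, not $X^c_j$), and not as an intermediate edge (its target would be a dead end). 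So $P$ uses only weight-$1$ edges, has weight and length $t-1$, and is a shortest path; Lemma~\ref{lem:path-augmentation} then immediately yields that $Y^c$ is clean with respect to $\{\beta^c_h\}_{h \in N}$ together with the claimed bundle-size changes.

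\textbf{Inductive cleanness under $\{\beta^0_h\}_{h \in N}$.} I would then show $Y^c \cup Y^0 = Y^c \cup X^0$ is clean with respect to $\{\beta^0_h\}_{h \in N}$ by induction on the prefix $P_l = (o_1, \dots, o_l)$, adapting Case~1 of the proof of Theorem~\ref{thm:Pareto-improving-paths}. The base case $l=0$ is the hypothesis. At step $l+1$, item $o_{l+1}$ moves from its holder $j$ to the agent $k$ who previously held $o_l$ (or to $i$ if $l = 0$). Following the original argument, I inspect two orderings of $(Z^c_k + o_{l+1}) \cup Z^0_k$ to sandwich the counts of $c$-valued and $0$-valued indices in $\vec{v}_k((Z^c_k + o_{l+1}) \cup Z^0_k)$. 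The main obstacle---and the crucial use of the theorem's hypothesis---is ruling out an extra $c$-valued index: such a surplus would give some $o' \in Z^0_k = X^0_k \subseteq X^c_0$ with $\Delta_{\beta^c_k}(Z^c_k, o') = 1$, whence (using $Z^c_k = X^c_k - o_l$ together with the cleanness of $X^c_k$ under $\beta^c_k$) the edge $(o_l, o')$ is present in $\cal G^w(X^c, X^0, \beta^c)$, and $(o_1, \dots, o_l, o')$ would be a Pareto-improving path from $F_{\beta^c_i}(X^c, i)$ to $X^c_0$, contradicting the theorem's hypothesis. Cleanness of $(Z^c_j \cup Z^0_j) - o_{l+1}$ under $\beta^0_j$ follows from submodularity exactly as in the original proof.

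\textbf{Disjointness.} Finally, $Y^c_h \cap Y^0_h = \emptyset$ is immediate: every item on $P$ originally lies in some $X^c_{j_h}$ with $j_h \in N$, hence outside $\bigcup_{h \in N} X^0_h$ by the disjointness of $\bigcup_{h \in N} X^c_h$ and $\bigcup_{h \in N} X^0_h$ (implied by $X^c \cup X^0$ being an allocation), so rearranging these items within $\bigcup_{h \in N} X^c_h$ produces no new overlap with $Y^0 = X^0$. I expect the main technical hurdle to be the inductive step's combinatorial argument---specifically, verifying that the hypothetical edge $(o_l, o')$ lies in the \emph{original} graph $\cal G^w(X^c, X^0, \beta^c)$ (and not merely in some intermediate graph), so that the hypothesis of no Pareto-improving path applies directly.
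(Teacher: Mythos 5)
Your proposal is correct and follows essentially the same route as the paper: show the least-weight exchange path is a shortest path in the unweighted graph so that Lemma~\ref{lem:path-augmentation} gives cleanness of $Y^c$, then induct over prefixes of $P$ using the two-ordering sandwich on the sorted telescoping sum vector, with the surplus $c$-valued index contradicting the absence of Pareto-improving paths. If anything you are more careful than the paper on two points it glosses over --- observing that an exchange path can contain \emph{no} weight-$\frac12$ edge (since such edges terminate in $X^c_0$, a dead end), and verifying that the hypothetical edge $(o_l, o')$ lies in the original graph $\cal G^w(X^c, X^0, \beta^c)$ via $Z^c_k = X^c_k - o_l$ --- and you supply the disjointness argument explicitly.
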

\begin{proof}
This proof is very similar to Theorem \ref{thm:Pareto-improving-paths}. We can show that path $P$ is still a shortest path in the (unweighted) graph $\cal G(X^c, \beta^c)$ by a similar argument. For the same reason, path augmentation with the path $P$ will result in $Y^c$ being clean with respect to $\{\beta^c\}_{h \in N}$ (Lemma \ref{lem:path-augmentation}). In this proof, we show that $Y^c \cup Y^0$ is clean with respect to $\{\beta^0\}_{h \in N}$. 

We prove the lemma by induction as well. Let $P_l = (o_1, o_2, \dots, o_l)$ and let $Z^c, Z^0$ be the allocations that result from augmenting $X$ along the path $P_l$. Since $P$ is a shortest path from $o_1$ to $o_t$ in $\cal G(X^c, \beta^c)$, $P_l$ must be a shortest path from $o_1$ to $o_l$. Therefore, $Z^c$ is clean with respect to $\{\beta^c_h\}_{h \in N}$. We show for any $l < t$, if $Z^c \cup Z^0$ is clean with respect to $\{\beta^0_h\}_{h \in N}$, then moving $o_{l+1}$ from the agent who currently has it (say $j$) to the agent who was allocated $o_l$ (say $k$) still maintains the cleanness of $Z^c \cup Z^0$. 

Since $P_{l+1}$ is a shortest path from $o_1$ to $o_{l+1}$ in $\cal G(X^c, \beta^c)$ as well, $\Delta_{\beta^c_k}(Z^c_k, o_{l+1}) = 1$. Further, since this is an exchange path, $j \in N$.

Let us first study the sorted telescoping sum vector $\vec v_k(Z^c_k \cup Z^0_l)$. Define an ordering $\pi$ where the items in $Z^c_k$ appear first in an arbitrary order followed by the items in $Z^0_k$ in any arbitrary order. If there are more than $|Z^c_k|$ indices with value $c$ in $\vec v_k(Z^c_k \cup Z^0_k)$, there must be an item $o'\in Z^0_k$ such that $\Delta_{\beta^c_k}(Z^c_k, o') = 1$. This contradicts the fact that there is no Pareto-improving path in $\cal G^w(X^c, X^0, \beta^c)$ since the path $(o_1, \dots, o_l, o')$ is Pareto-improving. Therefore, the sorted telescoping sum vector $\vec v_k(Z^c_k \cup Z^0_k)$ must have exactly $|Z^c_k|$ indices with the value $c$. Since the bundle is clean with respect to $\{\beta^0\}_{h \in N}$, the other $|Z^0_k|$ indices must have value $0$ in $\vec v_k(Z^c_k \cup Z^0_k)$.

Let us now study the sorted telescoping sum vector $\vec v_k((Z^c_k \cup Z^0_k) + o_{l+1})$. Define an ordering $\pi'$ where the items in $Z^c_k$ appear first in an arbitrary order followed by the item $o_{l+1}$ followed by the items in $Z^0_k$ in any arbitrary order. In this ordering, we know that the sorted telescoping sum vector $\vec v_k((Z^c_k \cup Z^0_k) + o_{l+1})$ must have at least $|Z^c_k| + 1$ indices with a value of $c$ since $\beta^c_k(Z^c_k + o_{l+1}) = |Z^c_k| + 1$. 

Define a second ordering $\pi''$ where the items in $Z^c_k$ appear first in an arbitrary order followed by the items in $Z^0_k$ in any arbitrary order followed by the item $o_{l+1}$. In this ordering, since we showed that there are $|Z^0_k|$ indices with value $0$ in $\vec v_k(Z^c_k \cup Z^0_k)$, there are at least $|Z^0_k|$ indices with a value of $0$ in the sorted telescoping sum vector $\vec v_k((Z^c_k \cup Z^0_k) + o_{l+1})$. 

Combining these two observations, we get that there are $|Z^c_k| + 1$ indices with value $c$ and $|Z^0_k|$ indices with a value of $0$ in $\vec v_k((Z^c_k \cup Z^0_k) + o_{l+1})$. Therefore, $Z^c_k + o_{l+1}$ is clean with respect to $\beta^c_k$ and $(Z^c_k \cup Z^0_k) + o_{l+1}$ is clean with respect to $\beta^0_k$. Note that $Z^c_j - o_{l+1}$ is still clean with respect to $\beta^c_j$ and using submodularity, $(Z^c_j \cup Z^0_j) - o_{l+1}$ is clean with respect to $\beta^0_j$. 
\end{proof}

Note that since $\cal G^w(X^c, X^0, \beta^c)$ and $\cal G(X^c, \beta^c)$ have the same set of edges, Lemma \ref{lem:augmentation-sufficient} applies to the weighted exchange graph as well.

\begin{lemma}\label{lem:augmentation-sufficient-weighted}
When agents have $\{-1, 0, c\}\ONSUB$ valuations, let $X^c$ be a clean allocation with respect to $\{\beta^c_h\}_{h \in N}$ and $X^0$ be an allocation such that $X^c\cup X^0$ is clean with respect to the valuations $\{\beta^0_h\}_{h \in N}$. Let $X^c_h \cap X^0_h = \emptyset$ for all $h \in N$. For any $i \in N$ and $j \in N+0$, there is a path from $F_{\beta^c_i}(X^c, i)$ to $X^c_j$ in $\cal G^w(X^c, X^0, \beta^c)$ if and only if there is a path from $F_{\beta^c_i}(X^c, i)$ to $X^c_j$ in $\cal G(X^c, \beta^c)$.
\end{lemma}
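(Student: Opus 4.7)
The plan is to observe that this lemma is essentially a trivial consequence of the definition of the weighted exchange graph. By construction (Section \ref{sec:weighted-exchange-graph}), the weighted exchange graph $\cal G^w(X^c, X^0, \beta^c)$ is defined to have exactly the same vertex set (namely $O$) and exactly the same directed edge set as the unweighted exchange graph $\cal G(X^c, \beta^c)$; the only difference between the two graphs is that in $\cal G^w(X^c, X^0, \beta^c)$, each edge from $X^c_i$ to $X^0_i$ has weight $\tfrac{1}{2}$ while all other edges have weight $1$.

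Since reachability in a directed graph depends only on the underlying vertex and edge sets and not on the edge weights, a sequence of vertices $(o_1, \dots, o_t)$ forms a path in $\cal G^w(X^c, X^0, \beta^c)$ if and only if it forms a path in $\cal G(X^c, \beta^c)$. In particular, for every choice of source set $F_{\beta^c_i}(X^c, i)$ and target set $X^c_j$, the existence of a path from the source to the target in one graph is equivalent to its existence in the other. This gives the ``if and only if'' directly.

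Structurally, I would write the proof as a two-line observation: first cite the definition of $\cal G^w(X^c, X^0, \beta^c)$ to establish that the two graphs have identical edge sets, then conclude that path existence is therefore preserved in both directions. No induction, submodularity argument, or use of the hypotheses on $X^c$, $X^0$, or the valuations is required — those assumptions are present only so that the graphs $\cal G^w(X^c, X^0, \beta^c)$ and $\cal G(X^c, \beta^c)$ are well-defined in the first place, which is needed to be able to invoke Lemma \ref{lem:augmentation-sufficient}. Since there is no real obstacle, the main care is simply to state the observation cleanly so that later results (like Theorems \ref{thm:Pareto-improving-paths} and \ref{thm:exchange-paths}) can invoke it to conclude existence of Pareto-improving or exchange paths in the weighted graph whenever Lemma \ref{lem:augmentation-sufficient} guarantees such a path in the unweighted graph.
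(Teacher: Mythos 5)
Your proposal is correct and matches the paper's own justification, which is just the one-sentence remark preceding the lemma that $\cal G^w(X^c, X^0, \beta^c)$ and $\cal G(X^c, \beta^c)$ have the same set of edges, so reachability is unaffected by the weights. Nothing more is needed.
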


Given an allocation $X = X^c \cup X^0 \cup X^{-1}$, these results already hint at a method to modify $X^c$ such that it becomes a leximin allocation with respect to the valuations $\{\beta^c_h\}_{h \in N}$: greedily use path augmentations in the weighted exchange graph until $X^c$ is leximin. 
This is the high-level approach we use in our algorithm.

\section{Leximin Allocations with $\{-1, 0, c\}$\ONSUB Valuations}\label{sec:leximin}
We are ready to present our algorithm to compute leximin allocations. Our algorithm has three phases. In the first phase, we use Yankee Swap (Theorem \ref{thm:yankee-swap}) to compute a \MAXUSW allocation $X$ with respect to the valuations $\{\beta^0_i\}_{i \in N}$. We also initialize $X^c$ and $X^{-1}$ to be empty allocations.

In the second phase, we update $X^c$ and $X^0$ using path augmentations (from Section \ref{sec:weighted-exchange-graph}) until $X^c$ is a leximin partial allocation. This is done by greedily augmenting along min weight Pareto-improving paths and min weight exchange paths until the allocation is leximin.

In the third phase, we update $X^{-1}$ by allocating the remaining items (in $X^c_0 \cap X^0_0$). We do so greedily by allocating each item to the agent with the highest utility, under the assumption that these items have a marginal utility of $-1$.
The exact steps are described in \Cref{algo:leximin}.

\setlength{\textfloatsep}{10pt}
\begin{algorithm}[!t]
    \caption{Leximin Allocations with $\{-1, 0, c\}$\ONSUB Valuations}
    \label{algo:leximin}
    \SetAlgoVlined
    \SetKwInOut{Input}{Input}
    \SetKwInOut{Output}{Output}
    \Input{A set of items $O$ and a set of agents $N$ with $\{-1, 0, c\}$\ONSUB valuations $\{v_h\}_{h \in N}$}
    \Output{A complete leximin allocation}
    \SetKwRepeat{Do}{repeat}{while}
    \LinesNumbered
    \DontPrintSemicolon
    \tcp{Phase 1: Make $X^c \cup X^0$ a \MAXUSW allocation w.r.t. $\{\beta^0_h\}_{h \in N}$}
    $X^0 \gets $ the output of Yankee Swap with respect to $\{\beta^0_h\}_{h \in N}$\;
    $X^{c} = (X^{c}_0, \dots, X^{c}_n) \gets (O, \emptyset, \dots, \emptyset)$\;
    \tcp*[r]{$X^c$ is clean w.r.t.\ $\beta^c$ and $X^c \cup X^0$ is clean w.r.t.\ $\beta^0$}
    $X^{-1} = (X^{-1}_0, \dots, X^{-1}_n) \gets (O, \emptyset, \dots, \emptyset)$\;

    \tcp{Phase 2: Make $X^c$ a clean leximin allocation w.r.t. $\{\beta^c_h\}_{h \in N}$}
    \Do{at least one path augmentation was done in the iteration}{
    \While{for some $i \in N$, there exists a Pareto-improving path from $F_{\beta^c_i}(X^c, i)$ in $\cal G^w(X^c, X^0, \beta^c)$}{
        $P = (o'_{1}, \dots, o'_{t}) \gets$ a min weight Pareto-improving path from $F_{\beta^c_i}(X^c, i)$ to $X^c_0$ in $\cal G^w(X^c, X^0, \beta^c)$ \; \tcc*[r]{Augment the allocation with the path $P$}
        $X^c_k \gets  X^c_k \Uplambda P$ for all $k \in N + 0 - i$\;
        $X^c_i \gets X^c_i \Uplambda P + o'_{1}$\;
        $X^0_k \gets X^0_k - o'_{t}$ for all $k \in N$\;
        $X^0_0 \gets X^0_0 + o'_{t}$\;
    }
    \If{for some $i \in N$, there exists an exchange path from $F_{\beta^c_i}(X^c, i)$ to some $X^c_j$ such that either (a) $|X^c_i| < |X^c_j| + 1$ or (b) $|X^c_i| = |X^c_j| + 1$ and $i < j$}{
        $P = (o'_{1}, \dots, o'_{t}) \gets$ a min weight exchange path from $F_{\beta^c_i}(X^c, i)$ to $X^c_j$ in $\cal G^w(X^c, X^0, \beta^c)$ \tcc*[r]{Augment the allocation with the path $P$}
        $X^c_k \gets  X^c_k \Uplambda P$ for all $k \in N + 0 - i$\;
        $X^c_i \gets X^c_i \Uplambda P + o'_{1}$\;
    }
    }
    \tcp{Phase 3: Greedily allocate the items in $X^c_0 \cap X^0_0$}
    \While{$|X^c_0 \cap X^0_0 \cap X^{-1}_0| > 0$}{ 
    \vspace{-0.36cm}
    \tcc*[r]{Unallocated items exist}
    \vspace{0.05cm}
        $S \gets \argmax\limits_{h \in N} v_h(X^c_h \cup X^0_h \cup X^{-1}_h)$ \tcc*{Set of all max-utility agents}
        \vspace*{-0.11cm}
        $i \gets \max\limits_{j \in S} j$ \tcc*{Break ties using index}
        \vspace*{-0.11cm}
        $o \gets $ an arbitrary item in $X^c_0 \cap X^0_0 \cap X^{-1}_0$\;
        $X^{-1}_i \gets X^{-1}_i + o$\; 
        $X^{-1}_0 \gets X^{-1}_0 - o$\;
    }
    \Return $X^c \cup X^0 \cup X^{-1}$\;
\end{algorithm}
We analyze each phase separately and establish key properties that the allocations $X^c, X^0$ and $X^{-1}$ have at the end of each phase. In order to show computational efficiency, we use the value oracle model where we have oracle access to each agent's valuation function. A computationally efficient algorithm runs in polynomial time (in $n$ and $m$) and only uses a polynomial number of queries to each value oracle.

\subsection{Phase 1}
The first phase is a simple setup where we allocate as many non-negative valued items as possible. $X^0$ is initialized as a clean \MAXUSW allocation with respect to the valuations $\{\beta^0_h\}_{h \in N}$. $X^c$ and $X^{-1}$ are initialized as the empty allocation. Note that $X^c$ is trivially clean with respect to the valuations $\{\beta^c_h\}_{h \in N}$.

\begin{obs}\label{obs:phase-1}
At the end of Phase 1, $X^c$ is clean with respect to $\{\beta^c_h\}_{h \in N}$ and $X^c \cup X^0$ is a \MAXUSW clean allocation with respect to $\{\beta^0_h\}_{h \in N}$.
\end{obs}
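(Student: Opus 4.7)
The plan is to unpack what Phase 1 actually does and show that both properties follow essentially by inspection, combined with a single invocation of the Yankee Swap guarantee.

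First I would handle the claim about $X^c$. By lines 2 of the algorithm, $X^c$ is initialised as $(O, \emptyset, \dots, \emptyset)$, so $X^c_i = \emptyset$ for every $i \in N$. Phase 1 performs no further modifications to $X^c$. Hence at the end of Phase 1 we have $\beta^c_i(X^c_i) = \beta^c_i(\emptyset) = 0 = |X^c_i|$ for all $i \in N$, which is exactly Definition \ref{def:clean} applied to $\{\beta^c_h\}_{h \in N}$. (Lemma \ref{lem:onsub-mrf} with $\tau = c$ ensures that $\beta^c_i$ is indeed a binary submodular function, so the definition applies.)

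Next I would handle the claim about $X^c \cup X^0$. Since $X^c_i = \emptyset$ for all $i \in N$, we have $X^c_i \cup X^0_i = X^0_i$ for every agent, so it suffices to show that $X^0$ is itself a \MAXUSW clean allocation with respect to $\{\beta^0_h\}_{h \in N}$. I would invoke Lemma \ref{lem:onsub-mrf} once more, this time with $\tau = 0$, to conclude that each $\beta^0_h$ is a binary submodular function. I can then apply Theorem \ref{thm:yankee-swap} directly: Yankee Swap on binary submodular valuations $\{\beta^0_h\}_{h \in N}$ returns a clean \MAXUSW (in fact leximin) allocation, which is precisely how $X^0$ is constructed in line 1.

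There is no real obstacle here: the observation is a bookkeeping statement about the initialisation, and the only non-trivial ingredient is noting that Lemma \ref{lem:onsub-mrf} lets us legitimately treat $\beta^0_h$ and $\beta^c_h$ as binary submodular functions so that the definitions of cleanness and the Yankee Swap guarantee can be invoked. Oracle access to $\beta^0_h$ via $v_h$ is provided by Lemma \ref{lem:mrf-oracle}, which justifies that running Yankee Swap in line 1 is computationally meaningful, though this is not required by the statement itself.
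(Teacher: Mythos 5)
Your proposal is correct and matches the paper's (implicit) justification exactly: $X^c$ is empty for every agent and hence trivially clean w.r.t.\ $\{\beta^c_h\}_{h\in N}$, and since $X^c_i\cup X^0_i = X^0_i$, the claim for $X^c\cup X^0$ is just the Yankee Swap guarantee of Theorem \ref{thm:yankee-swap} applied to the binary submodular functions $\{\beta^0_h\}_{h\in N}$ furnished by Lemma \ref{lem:onsub-mrf}. The remark about Lemma \ref{lem:mrf-oracle} providing oracle access is likewise how the paper justifies the efficiency of this phase.
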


The computational efficiency of Phase 1 relies on the computational efficiency of Yankee Swap. Yankee Swap uses a polynomial number of queries to $\{\beta^0_h\}_{h \in N}$. We can easily construct an efficient oracle for each $\beta^0_h$ using Lemma \ref{lem:mrf-oracle} to ensure Phase 1 runs in polynomial time and a polynomial number of valuation queries.

\subsection{Phase 2}
In this phase, we use path augmentations to manipulate $X^c$ into a partial leximin allocation. There are three types of paths we check for and augment if it exists:
\begin{enumerate}[(a)]
    \item A Pareto-improving path from $F_{\beta^c_i}(X^c, i)$ for some $i \in N$ to $X^c_0$ in $\cal G^w(X^c, X^0, \beta^c)$.
    \item An exchange path from $F_{\beta^c_i}(X^c, i)$ for some $i \in N$ to some $X^c_j$ such that $|X^c_i| < |X^c_j| + 1$.
    \item An exchange path from $F_{\beta^c_i}(X^c, i)$ for some $i \in N$ to some $X^c_j$ such that $|X^c_i| = |X^c_j| + 1$ and $i < j$.
\end{enumerate}

Note that these path augmentations work as intended since we maintain the invariant that $X^c \cup X^0$ is clean with respect to $\{\beta^0_h\}_{h \in N}$. Since exchange paths ((b) and (c)) are only guaranteed to work when there are no Pareto-improving paths, we ensure that we augment Pareto-improving paths (a) first before we check and augment exchange paths. This is clear in the steps of Algorithm \ref{algo:leximin}.

Formally, let $Z^c$ be a clean leximin allocation with respect to the valuations $\{\beta^c_h\}_{h \in N}$. If there are multiple, let $Z^c$ be an allocation that is not lexicographically dominated (w.r.t. $\{\beta^c_h\}_{h \in N}$) by any other leximin allocation. Phase $2$ ensures that for each agent $i \in N$, $|X^c_i| = |Z^c_i|$. We have the following Lemma.

\begin{restatable}{lemma}{lemphasetwosufficient}\label{lem:phase-2-sufficient}
Let $X^c$ be a clean allocation with respect to $\{\beta^c_h\}_{h \in N}$ and $X^0$ be an allocation such that $X^c \cup X^0$ is clean with respect to $\{\beta^0_h\}_{h \in N}$. Let $Z^c$ be a clean leximin allocation with respect to the valuations $\{\beta^c_h\}_{h \in N}$. If there are multiple, let $Z^c$ be an allocation that  is not lexicographically dominated (w.r.t. $\{\beta^c_h\}_{h \in N}$) by any other leximin allocation. Then there exists an agent $\ell \in N$ such that $|X^c_{\ell}| \ne |Z^c_{\ell}|$ if and only if at least one of the following conditions hold:
\begin{enumerate}[(a),nosep]
    \item There exists a Pareto-improving path from $F_{\beta^c_i}(X^c, i)$ for some $i \in N$ to $X^c_0$ in $\cal G^w(X^c, X^0, \beta^c)$.
    \item There exists an exchange path from $F_{\beta^c_i}(X^c, i)$ for some $i \in N$ to some $X^c_j$ such that $|X^c_i| < |X^c_j| + 1$.
    \item There exists an exchange path from $F_{\beta^c_i}(X^c, i)$ for some $i \in N$ to some $X^c_j$ such that $|X^c_i| = |X^c_j| + 1$ and $i < j$.
\end{enumerate}
\end{restatable}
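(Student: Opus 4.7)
The plan is to prove this biconditional in two directions, using the augmentation theorems (Theorems~\ref{thm:Pareto-improving-paths} and \ref{thm:exchange-paths}) for the forward direction and Lemma~\ref{lem:augmentation-sufficient-weighted} for the backward direction.

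For the forward direction, suppose one of (a), (b), or (c) holds. I would apply the appropriate augmentation theorem to produce an allocation $Y^c$ and then argue that $\vec u^{Y^c}$ strictly dominates $\vec u^{X^c}$ in a way that forces $X^c$ to differ from the canonical leximin $Z^c$. In case~(a), Theorem~\ref{thm:Pareto-improving-paths} yields a $Y^c$ that Pareto-dominates $X^c$ in bundle sizes, so $X^c$ fails to be \MAXUSW under $\{\beta^c_h\}_{h\in N}$ and cannot be leximin. In case~(b), with the gap $|X^c_j| \ge |X^c_i| + 2$, Theorem~\ref{thm:exchange-paths} produces a $Y^c$ whose sorted utility vector strictly lex-dominates $\vec s^{X^c}$, so $X^c$ is not leximin. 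In case~(c), with $|X^c_j| = |X^c_i| + 1$ and $i < j$, the resulting augmentation swaps the two bundle sizes, leaving the sorted vector intact but strictly lex-increasing the unsorted utility vector at the smaller index $i$, so $X^c$ is leximin but not lex-maximal among leximin allocations. In every case, $Z^c$ must strictly dominate $X^c$ on the sorted vector or on the unsorted vector, producing some index $\ell$ with $|X^c_\ell| \neq |Z^c_\ell|$.

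For the backward direction, assume some such $\ell$ exists. I would split into three sub-cases. Sub-case~(i): $\sum_h |X^c_h| < \sum_h |Z^c_h|$. Since every leximin allocation under binary submodular valuations is \MAXUSW, the deficit must be closable, and the standard matroid-theoretic characterization (equivalently, Lemma~\ref{lem:augmentation-sufficient-weighted} combined with the fact that exchange-path augmentations preserve total size) yields a Pareto-improving path from some $F_{\beta^c_i}(X^c, i)$ to $X^c_0$, giving~(a). Sub-case~(ii): sums agree but the sorted vectors differ. Lemma~\ref{lem:augmentation-sufficient-weighted}, applied to $X^c$ and $Z^c$, gives an agent $i$ with $|X^c_i| < |Z^c_i|$ and a path to some $X^c_k$ with $|X^c_k| > |Z^c_k|$ (no path to $X^c_0$, by \MAXUSW). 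To secure~(b), I would argue that some such pair $(i, k)$ admits $|X^c_k| \ge |X^c_i| + 2$: otherwise no single exchange could alter the sorted vector, yet a sequence of exchanges must transform $\vec u^{X^c}$ into $\vec u^{Z^c}$, forcing the sorted vectors to coincide, a contradiction.

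Sub-case~(iii), where the sorted vectors agree but the unsorted vectors differ, is the delicate part. Let $\ell$ be the smallest index with $|X^c_\ell| \neq |Z^c_\ell|$; canonicity of $Z^c$ then forces $|X^c_\ell| < |Z^c_\ell|$. Lemma~\ref{lem:augmentation-sufficient-weighted} yields a path from $F_{\beta^c_\ell}(X^c, \ell)$ to some $X^c_k$ with $|X^c_k| > |Z^c_k|$. Leximinality of $X^c$ rules out $|X^c_k| \ge |X^c_\ell| + 2$ (such an augmentation would strictly improve the sorted vector), and minimality of $\ell$ gives $k > \ell$. The main obstacle is to guarantee $|X^c_k| = |X^c_\ell| + 1$ rather than $|X^c_k| \le |X^c_\ell|$: I would handle this by first showing via a separate leximin argument that $|Z^c_\ell| = |X^c_\ell| + 1$ in this sub-case, then exhibiting an intermediate leximin allocation $Y$ that differs from $X^c$ at index~$\ell$ by exactly such a single-item swap, and finally applying Lemma~\ref{lem:augmentation-sufficient-weighted} with $Y$ in place of $Z^c$ to force the path to land on an agent of the required bundle size, yielding~(c).
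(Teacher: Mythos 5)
Your forward direction is sound and essentially the paper's Part~1 stated contrapositively: each of (a)--(c) yields an augmentation that would contradict, respectively, the leximin-ness or the lexicographic dominance of $Z^c$, and hence forces the size vectors of $X^c$ and $Z^c$ apart. (You have also, sensibly, read conditions (b) and (c) as $|X^c_j| \ge |X^c_i|+2$ and $|X^c_j| = |X^c_i|+1$ with $i<j$, which is what the paper's own Cases~3--4 require.)

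The backward direction has a genuine gap, concentrated exactly where you flag ``the delicate part.'' \Cref{lem:augmentation-sufficient} only guarantees a path from a deficient agent $i$ (with $|X^c_i| < |Z^c_i|$) to \emph{some} over-full agent $k$ (with $|X^c_k| > |Z^c_k|$) or to $X^c_0$; nothing forces $|X^c_k|$ to be large relative to $|X^c_i|$, which is what (b) and (c) demand. In sub-case~(ii) your justification --- ``otherwise no single exchange could alter the sorted vector, yet a sequence of exchanges must transform $\vec u^{X^c}$ into $\vec u^{Z^c}$'' --- does not hold up: an exchange onto a $k$ with $|X^c_k| \le |X^c_i|$ \emph{does} alter the sorted vector (for the worse), the existence of such a transforming sequence is itself unproven, and even granted it, steps that individually fail to improve the sorted vector can collectively change it. In sub-case~(iii) the ``intermediate leximin allocation $Y$ differing from $X^c$ by exactly one single-item swap at index $\ell$'' is precisely the object whose existence is in question, so positing it begs the question. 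The idea you are missing is the paper's \emph{dual} use of \Cref{lem:augmentation-sufficient}: let $i$ minimize $|X^c_i|$ among deficient agents and let $j$ minimize $|Z^c_j|$ among agents with $|Z^c_j| < |X^c_j|$. When $|Z^c_j| \le |X^c_i|$ (with a tie-break on indices), one augments \emph{$Z^c$} along a path in $\cal G(Z^c,\beta^c)$ and contradicts $Z^c$'s leximin-ness or lex-dominance; only in the complementary case does one argue in $\cal G(X^c,\beta^c)$, where the chain $|X^c_q| > |Z^c_q| \ge |Z^c_j| \ge |X^c_i|$ delivers a path of type (b) or (c). Without the branch that improves $Z^c$, you cannot exclude the scenario in which every over-full agent reachable from $i$ has a small bundle. (Your sub-case~(i) also over-reads the lemma: a size deficit does not by itself yield a path to $X^c_0$, since the guaranteed path may terminate at an over-full agent instead; the paper avoids needing this by folding the deficit case into the same $i$-versus-$j$ analysis.)
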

\begin{proof}

\noindent\textbf{Part 1:} Assume $|X^c_{\ell}| = |Z^c_{\ell}|$ for all $\ell \in N$.

If there exists a Pareto-improving path from some $F_{\beta^c_i}(X^c, i)$ to $X^c_0$ in $\cal G^w(X^c, X^0, \beta^c)$, augmenting along the least-weight path will result in an allocation $Y^c$ which Pareto dominates $Z^c$ with respect to the valuations $\{\beta^c_{h}\}_{h \in N}$. This contradicts the assumption that $Z^c$ is leximin with respect to the valuations $\{\beta^c_{h}\}_{h \in N}$.

Similarly, if there exists an exchange path from some $F_{\beta^c_i}(X^c, i)$ to some $X^c_j$ such that $|X^c_i| < |X^c_j| + 1$, augmenting along the least-weight path will result in an allocation $Y^c$ such that the sorted utility vector of $Y^c$ with respect to the valuations $\{\beta^c_{h}\}_{h \in N}$ lexicographically dominates the sorted utility vector of $Z^c$. This again contradicts the assumption that $Z^c$ is leximin with respect to the valuations $\{\beta^c_{h}\}_{h \in N}$.

Finally, if there exists an exchange path from some $F_{\beta^c_i}(X^c, i)$ to some $X^c_j$ such that $|X^c_i| = |X^c_j| + 1$ and $i < j$, augmenting along the least-weight path will result in an allocation $Y^c$ with the same sorted utility vector as $Z^c$ with respect to the valuations $\{\beta^c_{h}\}_{h \in N}$ but $\vec u^{Y^c}$ lexicographically dominates $\vec u^{Z^c}$. This again contradicts the assumption that $Z^c$ is not lexicographically dominated by any other leximin allocation. 

\noindent\textbf{Part 2:} Assume $|X^c_{\ell}| \ne |Z^c_{\ell}|$ for some $\ell \in N$.

Assume for contradiction that none of the paths (a), (b) and (c) exist in the weighted exchange graph $\cal G^w(X^c, X^0, \beta^c)$.

Let $i \in N$ be the agent with least $|X^c_i|$ such that $|X^c_i| < |Z^c_i|$; break ties by choosing the agent with the least index $i$. If no agent $i$ exists, then we again contradict the fact that $Z^c$ is a leximin allocation with respect to the valuations $\{\beta^c_{h}\}_{h \in N}$. If no Pareto-improving paths exist, using Theorem \ref{lem:augmentation-sufficient}, we get that there must be an agent $j \in N$ such that $|Z^c_j| < |X^c_j|$. If there are multiple such agents, choose the one with the least $|Z^c_j|$; break further ties by choosing the agent with the least index $j$.
We have the following $4$ cases, each leading to a contradiction. 

\noindent\textbf{Case 1:} $|Z^c_j| < |X^c_i|$. From Lemma \ref{lem:augmentation-sufficient}, there must be a path from $j$ to some $p$ such that $|Z^c_p| > |X^c_p|$ in the exchange graph $\cal G(Z^c, \beta^c)$. Augmenting along a shortest such path gives us an allocation $\hat{Z}^c$. Since $|X^c_p| \ge |X^c_i|$ by our choice of $i$, we have $|\hat{Z}^c_p| \ge |X^c_p| > |Z^c_j|$ and $|\hat{Z}^c_j| > |Z^c_j|$. Since $p$ and $j$ are the only two agents whose utilities (w.r.t. $\{\beta^c_h\}_{h \in N}$) differ in allocations $\hat{Z}^c$ and $Z^c$, we can conclude that the sorted utility vector of $\hat Z^c$ (w.r.t. $\{\beta^c_h\}_{h \in N}$) lexicographically dominates the sorted utility vector of $Z^c$ --- contradicting the fact that $Z^c$ is leximin.

\noindent\textbf{Case 2:} $|Z^c_j| = |X^c_i|$ and $j < i$.
This case is similar to the previous case. From Lemma \ref{lem:augmentation-sufficient}, there must be a path from $j$ to some $p$ such that $|Z^c_p| > |X^c_p|$. Augmenting along a shortest such path gives us an allocation $\hat{Z}^c$. Note that by our choice of $i$, $|X^c_p| \ge |X^c_i|$ and if equality holds, $p > i$. We have 
\begin{align*}
    |\hat{Z}^c_p| \ge |X^c_p| \ge |X^c_i| = |Z^c_j|.
\end{align*}
If any of these weak inequalities are strict, we can use the analysis from Case 1 to show a contradiction. If equality holds throughout, $\hat{Z}^c$ and $Z^c$ have the same sorted utility vectors. However, since $j < i < p$, $\hat{Z}^c$ lexicographically dominates $Z^c$ (w.r.t. $\{\beta^c_h\}_{h \in N}$) --- again, creating a contradiction.

\noindent\textbf{Case 3:} $|Z^c_j| > |X^c_i|$. From Lemma \ref{lem:augmentation-sufficient}, there must be a path from $i$ to some $q$ such that $|Z^c_q| < |X^c_q|$ in the exchange graph $\cal G(X^c, \beta^c)$. From Lemma \ref{lem:augmentation-sufficient-weighted}, there must be such a path in the weighted exchange graph $\cal G(X^c, X^0, \beta^c)$. Since $|Z^c_q| \ge |Z^c_j|$, we have
\begin{align*}
    |X^c_q| > |Z^c_q| \ge |Z^c_j| > |X^c_i|,
\end{align*}
which implies $|X^c_q| > |X^c_i| + 1$ which contradicts our assumption that there is no path of type (b) in the weighted exchange graph.

\noindent\textbf{Case 4:} $|Z^c_j| = |X^c_i|$ and $j > i$. This is similar to the previous case. From Lemma \ref{lem:augmentation-sufficient}, there must be a path from $i$ to some $q$ such that $|Z^c_q| < |X^c_q|$ in the exchange graph $\cal G(X^c, \beta^c)$. From Lemma \ref{lem:augmentation-sufficient-weighted}, there must be such a path in the weighted exchange graph $\cal G(X^c, X^0, \beta^c)$. Note that $|Z^c_q| \ge |Z^c_j|$ and equality holds only if $q > j$. Since $|Z^c_q| + 1\le |X^c_q|$ we have
\begin{align*}
    |X^c_q| \ge |Z^c_q| + 1 \ge |Z^c_j| + 1 = |X^c_i| + 1.
\end{align*}
If any of the weak inequalities are strict, we have $|X^c_q| > |X^c_i| + 1$ which contradicts our assumption that there is no path of type (b) in the weighted exchange graph.

If all the weak inequalities are equalities, we have $|X^c_q| + 1 = |X^c_i|$ and $q > j > i$ which contradicts our assumption that there are no paths of type (c) in the weighted exchange graph.
\end{proof}

Lemma \ref{lem:phase-2-sufficient} immediately implies that {\em if} Phase 2 terminates, it terminates when $|X^c_h| = |Z^c_h|$ for all $h \in N$. Our next Lemma shows that Phase 2 indeed terminates in polynomial time. This result follows from bounding the number of path augmentations our algorithm computes.

\begin{restatable}{lemma}{lemphasetwoterminates}\label{lem:phase-2-terminates}
Phase 2 terminates in polynomial time and polynomial valuation queries.
\end{restatable}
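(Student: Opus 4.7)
The plan is to bound the total number of augmentations performed in Phase~2 by a polynomial in $n$ and $m$ via a lexicographic potential, and then observe that each individual augmentation can be executed in polynomial time using polynomially many value-oracle queries.

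I would introduce three integer potentials on the current partial clean allocation $X^c$,
\begin{align*}
\Phi_1 = \sum_{h \in N} |X^c_h|, \qquad \Phi_2 = \sum_{h \in N} |X^c_h|^2, \qquad \Phi_3 = -\sum_{h \in N} h \cdot |X^c_h|,
\end{align*}
and form the lexicographic triple $\Phi = (\Phi_1, -\Phi_2, \Phi_3)$. The key claim is that each augmentation performed in Phase~2 strictly lex-increases $\Phi$. A Pareto-improving augmentation, by Theorem \ref{thm:Pareto-improving-paths}, increments $|X^c_i|$ by $1$ and leaves every other $|X^c_h|$ fixed, so $\Phi_1$ strictly increases. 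A type~(b) exchange augmentation, which transfers mass from a ``rich'' agent $j$ to a ``poor'' agent $i$ with $|X^c_j| - |X^c_i| \ge 2$, preserves $\Phi_1$; expanding squares gives $\Phi_2^{\text{new}} - \Phi_2 = 2(|X^c_i|-|X^c_j|) + 2 \le -2$, so $-\Phi_2$ strictly increases. A type~(c) exchange augmentation merely swaps $|X^c_i|$ and $|X^c_j|$ (since $|X^c_j| = |X^c_i|+1$), leaving $\Phi_1$ and $\Phi_2$ unchanged; a direct calculation then gives $\Phi_3^{\text{new}} - \Phi_3 = (j-i)(|X^c_j|-|X^c_i|) = j-i \ge 1$, so $\Phi_3$ strictly increases.

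To count iterations, I would use the bounds $\Phi_1 \in \{0,\dots,m\}$, $\Phi_2 \in \{0,\dots,m^2\}$, and $\Phi_3 \in \{-mn(n+1)/2,\dots,0\}$, so $\Phi$ attains at most $O(n^2 m^4)$ distinct values. This bounds the total number of Phase~2 augmentations. Each iteration requires constructing $\cal G^w(X^c, X^0, \beta^c)$ (at most $O(m^2)$ candidate edges, each tested via $O(1)$ queries to the $\beta^c_h$, which in turn cost $O(m)$ value-oracle queries by Lemma \ref{lem:mrf-oracle}), a standard single-source shortest-path computation on the weighted graph to locate a least-weight Pareto-improving or exchange path, and $O(m)$ symmetric-difference updates; all of these are polynomial.

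The main obstacle is designing a single potential that strictly increases under all three augmentation types. The natural candidate, the sorted utility vector of $X^c$ under $\{\beta^c_h\}_{h \in N}$, strictly lex-increases under Pareto and type~(b) augmentations but is \emph{invariant} under the swap-like type~(c) step, so a secondary tie-breaker keyed on agent indices is indispensable. Nesting $\Phi_3$ below $-\Phi_2$ below $\Phi_1$ provides exactly this structure: $\Phi_1$ accounts for the growth steps (Pareto), $-\Phi_2$ for the balancing steps (type~(b)), and $\Phi_3$ for the index-alignment steps (type~(c)), which is precisely how the algorithm's tie-breaking rule ``$i < j$'' is converted into a monotone quantity.
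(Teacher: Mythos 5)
Your proposal is correct, but it bounds the iteration count differently from the paper. The paper first bounds Pareto-improving augmentations by $m$ (since $\sum_h |X^c_h|$ increases each time), then bounds the exchange augmentations \emph{between two consecutive Pareto-improving augmentations} using the single real-valued potential $\Phi(X^c) = \sum_{h}(|X^c_h| + h/n^2)^2$, which strictly decreases by at least $1/n^4$ under both exchange types but \emph{increases} after a Pareto step; multiplying the two counts gives $O(n^4 m^3)$ exchange augmentations. You instead package everything into one lexicographic integer triple $(\Phi_1, -\Phi_2, \Phi_3)$ that strictly lex-increases under \emph{every} augmentation type, and bound the total number of iterations by the number of attainable triples, $O(n^2 m^4)$. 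Both arguments are sound and yield comparable polynomial bounds; yours is arguably cleaner because it avoids the ``potential resets after each Pareto step'' bookkeeping, at the cost of needing three nested tie-breakers (which, as you note, mirror exactly the algorithm's three augmentation rules). One point worth flagging: your type-(b) calculation $\Phi_2^{\mathrm{new}} - \Phi_2 = 2(|X^c_i| - |X^c_j|) + 2 \le -2$ requires $|X^c_j| \ge |X^c_i| + 2$. This is the \emph{intended} condition (it is what the proof of the preceding characterization lemma actually uses, and what makes the augmentation a leximin improvement), but the algorithm's displayed condition reads ``$|X^c_i| < |X^c_j| + 1$,'' which literally permits $|X^c_i| = |X^c_j|$ and would break both your monotonicity claim and the paper's own; you have silently corrected what appears to be a typo, and your reading is the right one.
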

\begin{proof}
Note that checking for the existence of a path and augmenting along a path can trivially be done in polynomial time and polynomial valuation queries. So it suffices to simply upper bound the number of paths we augment. 

Note that there can be at most $m$ augmentations of Pareto-improving paths. This is because after each augmentation, the value $\sum_{h \in N} |X^c_h|$ strictly increases by $1$ and $\sum_{h \in N} |X^c_h|$ is upper bounded at $m$. 

To bound the number of exchange path augmentations, we use the potential function\footnote{This is the same potential function used by \citet{Babaioff2021Dichotomous}.} 
\begin{align*}
  \Phi(X^c) = \sum_{h \in N} \bigg (|X^c_h| + \frac{h}{n^2} \bigg )^2 .
\end{align*}
Note that this value is upper bounded by $\mathcal{O}(m^2)$ and lower bounded at $0$. It is easy to verify that after each of the specific exchange path augmentations in \Cref{algo:leximin}, $\Phi(X^c)$ strictly decreases. Any strict decrease with this potential function is lower bounded by $\frac{1}{n^4}$ since $\Phi(X^c)$ is always a multiple of $\frac{1}{n^4}$. Therefore, there can be at most $\mathcal{O}(n^4m^2)$ exchange path augmentations that happen consecutively --- with no Pareto-improving path augmentation happening in between. 

After a Pareto-improving path augmentation, note that $\Phi(X^c)$ increases. Therefore, 
we can only guarantee that there will be $\mathcal{O}(n^4m^2)$ exchange paths in between any two consecutive Pareto-improving path augmentations. Since there can be at most $m$ Pareto-improving path transfers, assuming a worst case, we can upper bound the number of exchange path augmentations at $\mathcal{O}(n^4m^3)$.
\end{proof}

Note that after each path transfer the number of allocated items in $X^c \cup X^0$ weakly increases. Therefore, $X^c \cup X^0$ remains a clean \MAXUSW allocation at the end of Phase 2. Formally, we can make the following observation.
\begin{obs}\label{obs:phase-2}
At the end of Phase 2, $X^c$ is a clean lexicographically dominating leximin allocation with respect to the valuations $\{\beta^c_h\}_{h \in N}$ and $X^c \cup X^0$ is a \MAXUSW clean allocation with respect to $\{\beta^0_h\}_{h \in N}$.
\end{obs}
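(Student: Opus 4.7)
The plan is to combine the invariants maintained by Phase~2 with Lemmas~\ref{lem:phase-2-sufficient} and~\ref{lem:phase-2-terminates} (for the leximin claim and termination) and an inductive argument on the $\beta^0$-welfare (for the \MAXUSW claim). From Observation~\ref{obs:phase-1} together with Theorems~\ref{thm:Pareto-improving-paths} and~\ref{thm:exchange-paths}, three structural invariants persist throughout Phase~2: $X^c$ is clean w.r.t.\ $\{\beta^c_h\}_{h\in N}$, $X^c\cup X^0$ is clean w.r.t.\ $\{\beta^0_h\}_{h\in N}$, and $X^c_h\cap X^0_h=\emptyset$ for every $h\in N$. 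All three hold at the start of Phase~2 (trivially, since $X^c$ is the empty allocation) and are preserved by both Pareto-improving and exchange augmentations.

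For the leximin statement, Lemma~\ref{lem:phase-2-terminates} gives termination. When Phase~2 halts, both the while-loop guard and the if-guard fail for every $i\in N$, so none of the three path types listed in Lemma~\ref{lem:phase-2-sufficient} exists in $\cal G^w(X^c,X^0,\beta^c)$. The contrapositive of that lemma yields $|X^c_h|=|Z^c_h|$ for every $h\in N$, where $Z^c$ is the fixed clean leximin allocation (w.r.t.\ $\{\beta^c_h\}$) that is not lex-dominated by any other leximin. Combined with the cleanness of $X^c$, this gives $\beta^c_h(X^c_h)=|X^c_h|=|Z^c_h|=\beta^c_h(Z^c_h)$ for every $h$, so $X^c$ shares its (unsorted) $\beta^c$-utility vector with $Z^c$ and thereby inherits both the leximin property and the lexicographic-domination property.

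For the \MAXUSW claim, track the welfare $W(X^c,X^0) := \sum_{h\in N}\beta^0_h(X^c_h\cup X^0_h) = \sum_{h\in N}(|X^c_h|+|X^0_h|)$ (the second equality using cleanness and disjointness) across a single augmentation. An exchange step leaves $X^0$ intact and merely shuffles items of $X^c$ among agents in $N$, so $W$ is unchanged. A Pareto-improving step raises $\sum_{h\in N}|X^c_h|$ by one (as $o_t$ moves from $X^c_0$ into some $X^c_{k'}$) and lowers $\sum_{h\in N}|X^0_h|$ by one if $o_t\in X^0_k$ for some $k\in N$, or leaves it fixed if instead $o_t\in X^0_0$. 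Hence $W$ either stays constant or strictly increases by one.

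The main obstacle is ruling out the strict-increase case. I plan to introduce an auxiliary non-overlap invariant: throughout Phase~2, the sets $\{X^c_h\cup X^0_h : h\in N\}$ are pairwise disjoint. This holds trivially at the start (each $X^c_h$ is empty for $h\in N$) and is preserved by both augmentation types via a short case check along the items on the augmenting path, using the disjointness invariant together with the $Y^0$-update rule from Theorem~\ref{thm:Pareto-improving-paths}. Under non-overlap, $\{X^c_h\cup X^0_h\}_{h\in N}$ is a sub-partition of $O$ and, by cleanness w.r.t.\ $\{\beta^0_h\}$, extends to a clean partition whose $\beta^0$-welfare is exactly $W$; so $W\le W^*$, where $W^*$ is the \MAXUSW value supplied by Observation~\ref{obs:phase-1}. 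Since $W=W^*$ initially and each augmentation can only preserve or strictly increase $W$, any strict increase would push $W$ above $W^*$; hence $W=W^*$ at every step, including at termination, so $X^c\cup X^0$ remains a \MAXUSW clean allocation w.r.t.\ $\{\beta^0_h\}_{h\in N}$.
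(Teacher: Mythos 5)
Your proposal is correct and follows the same route the paper takes: termination via Lemma~\ref{lem:phase-2-terminates}, the contrapositive of Lemma~\ref{lem:phase-2-sufficient} applied at the point where all three path types are absent to conclude $|X^c_h| = |Z^c_h|$ for every $h$ (whence, by cleanness, $X^c$ inherits the leximin and lexicographic-domination properties of $Z^c$), and the fact that each augmentation weakly increases the number of allocated items in $X^c \cup X^0$ while that count already equals the \MAXUSW value after Phase~1, so it can never move. Your explicit non-overlap invariant and the bound $W \le W^*$ merely spell out details the paper leaves implicit in its one-line justification.
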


\subsection{Phase 3}
At this point, $X^c\cup X^0$ is a \MAXUSW clean allocation  with respect to $\{\beta^0_h\}_{h \in N}$. Thus, all of the remaining items have a marginal value of $-1$ to every agent. It remains to assign these items in as equitable a manner as possible. 
In Phase 3, we sequentially allocate the remaining items, giving a ``bad'' item to the agent with the highest utility. 

To carefully compare allocations, we adapt the comparison method of {\em domination} introduced in \citet{cousins2023bivalued}. 
To compare two allocations $Y = Y^c \cup Y^0 \cup Y^{-1}$ and $\hat Y = \hat{Y}^c \cup \hat{Y}^0 \cup \hat{Y}^{-1}$, we first check the sorted utility vectors $\vec s^{Y^c}$ and $\vec s^{\hat{Y}^c}$; 
if $\vec s^{Y^c}$ lexicographically dominates $\vec s^{\hat{Y}^c}$, then we say $Y$ dominates $\hat Y$. 
If $\vec s^{Y^c}$ and $\vec s^{\hat{Y}^c}$ are equal, we check the utility vectors $\vec u^{Y^c}$ and $\vec u^{\hat{Y}^c}$; 
if $\vec u^{Y^c}$ lexicographically dominates $\vec u^{\hat{Y}^c}$, then we say $Y$ dominates $\hat Y$. 
If $\vec u^{Y^c}$ and $\vec u^{\hat{Y}^c}$ are equal as well, we check $\vec u^{Y}$ and $\vec u^{\hat{Y}}$; if $\vec u^{Y}$ lexicographically dominates $\vec u^{\hat{Y}}$, then we say $Y$ dominates $\hat Y$.

\begin{definition}[Domination]\label{def:domination}
An allocation $Y = Y^c \cup Y^0 \cup Y^{-1}$ dominates the allocation $\hat Y = \hat{Y}^c \cup \hat{Y}^0 \cup \hat{Y}^{-1}$ if any of the following conditions hold:
  \begin{inparaenum}[(a)]
    \item $\vec s^{Y^c} \succ_{\lex} \vec s^{\hat{Y}^c}$,
    \item $\vec s^{Y^c} = \vec s^{\hat{Y}^c}$ and $\vec u^{Y^c} \succ_{\lex} \vec u^{\hat{Y}^c}$, or 
    \item $\vec u^{Y^c} = \vec u^{\hat{Y}^c}$ and $\vec u^Y \succ_{\lex} \vec u^{\hat Y}$.
\end{inparaenum}
\end{definition}

Let $Y = Y^c \cup Y^0 \cup Y^{-1}$ be a {\em complete} leximin allocation for the original instance with valuations $\{v_h\}_{h \in N}$. If there are multiple leximin allocations, pick one which is not dominated by any other leximin allocation $\hat Y$. Due to this specific choice, we sometimes refer to $Y$ as a dominating leximin allocation.

We first show that, just like $X^c$, $Y^c$ is a lexicographically dominating leximin allocation with respect to the valuations $\{\beta^c_h\}_{h \in N}$.

\begin{restatable}{lemma}{lemxcequalsyc}\label{lem:xc-equals-yc}
After the end of Phase $2$, $|X^c_h| =|Y^c_h|$ for all $h \in N$.
\end{restatable}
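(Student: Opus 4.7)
The plan is a proof by contradiction. Suppose $|X^c_h| \ne |Y^c_h|$ for some $h \in N$. I would build a complete allocation $\bar Y$ of the original instance that either has $\vec s^{\bar Y} \succ_{\lex} \vec s^Y$ (contradicting $Y$ being leximin) or has $\vec s^{\bar Y} = \vec s^Y$ and dominates $Y$ per Definition~\ref{def:domination} (contradicting $Y$ being a \emph{dominating} leximin). Both horns contradict the choice of $Y$.

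I would first collect some standard structural facts about $Y$. By Lemma~\ref{lem:3-decomposition}, $Y^c$ is clean w.r.t.\ $\{\beta^c_h\}_{h \in N}$ and $Y^c \cup Y^0$ is clean w.r.t.\ $\{\beta^0_h\}_{h \in N}$. Because $Y$ is Pareto-optimal (as a leximin allocation), $Y^c \cup Y^0$ must in fact be \MAXUSW w.r.t.\ $\{\beta^0_h\}_{h \in N}$: otherwise some chore in $Y^{-1}$ could be moved to a nonnegative-marginal slot, strictly improving two agents' utilities. Similarly, the chore distribution $Y^{-1}$ is leximin-optimal given $Y^c \cup Y^0$, so $\vec s^Y$ coincides with the sorted output of the \emph{greedy chore distribution} procedure (repeatedly give a chore to a current maximum-utility agent) starting from utilities $(c|Y^c_h|)_h$ with $|Y^{-1}|$ chores.

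Next, I would construct $\bar Y$ by setting $\bar Y^c := X^c$, $\bar Y^0 := X^0$ (the allocations at the end of Phase 2), and allocating $R := O \setminus \bigcup_h (X^c_h \cup X^0_h)$ via greedy chore distribution from starting utilities $(c|X^c_h|)_h$. By Observation~\ref{obs:phase-2}, $\bar Y^c$ is clean w.r.t.\ $\{\beta^c_h\}$ and $\bar Y^c \cup \bar Y^0$ is \MAXUSW w.r.t.\ $\{\beta^0_h\}$; order-neutrality together with submodularity (and marginals in $\{-1,0,c\}$) then forces every $o \in R$ to contribute marginal $-1$ to every agent $h$ on top of $X^c_h \cup X^0_h$, and this persists as chores accumulate. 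Hence $v_h(\bar Y_h) = c|X^c_h| - |\bar Y^{-1}_h|$, and $|R| \le |Y^{-1}|$ follows from comparing the \MAXUSW property of $X^c \cup X^0$ to the cleanness of $Y^c \cup Y^0$ w.r.t.\ $\{\beta^0_h\}$.

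Since $X^c$ is a lex-dominating leximin w.r.t.\ $\{\beta^c_h\}$ (Observation~\ref{obs:phase-2}) and $Y^c$ is clean w.r.t.\ $\{\beta^c_h\}$, the assumption $|X^c_h| \ne |Y^c_h|$ places us in one of two cases: (i) $\vec s^{X^c} \succ_{\lex} \vec s^{Y^c}$, or (ii) $\vec s^{X^c} = \vec s^{Y^c}$ with $\vec u^{X^c} \succ_{\lex} \vec u^{Y^c}$. The key comparison step uses a \emph{monotonicity of greedy chore distribution}: if sorted starting utilities weakly lex-dominate and the chore count is weakly smaller, then the sorted outputs weakly lex-dominate. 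Applying this with starting utilities $(c|X^c_h|)_h$ vs.\ $(c|Y^c_h|)_h$ and chore counts $|R| \le |Y^{-1}|$ yields $\vec s^{\bar Y} \succeq_{\lex} \vec s^Y$. If the inequality is strict, $Y$ is not leximin, a contradiction. If equality holds, then $\bar Y$ is itself a leximin allocation and dominates $Y$: by Definition~\ref{def:domination}(a) in case (i), and by Definition~\ref{def:domination}(b) in case (ii) (since then $\vec s^{\bar Y^c} = \vec s^{Y^c}$ and $\vec u^{\bar Y^c} = \vec u^{X^c} \succ_{\lex} \vec u^{Y^c}$). The main obstacle is rigorously establishing the monotonicity of greedy chore distribution in both arguments; though intuitive (more chores only hurts; starting sorted-lex-better preserves the advantage), it requires a careful coupling/exchange argument that handles integer ties, and is best stated as a self-contained lemma about real vectors.
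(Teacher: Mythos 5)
Your overall strategy is genuinely different from the paper's: the paper never builds a competitor allocation from $X$; instead it applies Lemma~\ref{lem:phase-2-sufficient} (an if-and-only-if characterization) to $Y$ itself, finds one of the three path types in the weighted exchange graph $\cal G^w(Y^c, Y^0, \beta^c)$, and augments $Y$ along that path (redistributing up to $c$ chores between the two affected agents) to contradict either leximin-optimality or the dominating choice of $Y$. Your route---compare $\vec s^{X^c}$ to $\vec s^{Y^c}$ and push the comparison through a greedy chore-distribution step---is attractive because it would decouple the ``goods'' and ``chores'' phases, but it hinges on a lemma that is false as stated. Your claimed monotonicity (``if sorted starting utilities weakly lex-dominate and the chore count is weakly smaller, then the sorted outputs weakly lex-dominate'') fails for general vectors: take starting vectors $a = (1,1)$ and $b = (0,100)$ with one chore each. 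Then $a \succ_{\lex} b$, but greedy yields sorted outputs $(0,1)$ from $a$ and $(0,99)$ from $b$, and $(0,99) \succ_{\lex} (0,1)$. So lexicographic dominance of the starting vectors is simply too weak an invariant to survive chore distribution, and the central comparison $\vec s^{\bar Y} \succeq_{\lex} \vec s^{Y}$ does not follow from what you have established.

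The gap is repairable, but only by importing more structure than you use. Because $X^c$ is a clean leximin allocation with respect to $\{\beta^c_h\}_{h\in N}$, its size vector in fact \emph{Lorenz-dominates} that of every clean allocation, in particular $Y^c$'s (this is the Babaioff--Ezra--Feige property the paper invokes elsewhere); Lorenz dominance excludes the counterexample above, and the greedy water-draining procedure is plausibly monotone under Lorenz dominance of the start plus a weakly smaller chore count---but that is exactly the ``careful coupling/exchange argument'' you defer, and it is where the real work lives. Two smaller soft spots: your claim that $Y^c \cup Y^0$ is \MAXUSW w.r.t.\ $\{\beta^0_h\}_{h\in N}$ ``because some chore could be moved to a nonnegative slot'' is not a single-item move in general---it requires an augmenting path (Lemma~\ref{lem:augmentation-sufficient}), though the conclusion is correct; and the identification of $\vec s^Y$ with the greedy output needs the observation that redistributed chores still have marginal at least $-1$ (by submodularity and cleanness) so that any redistribution yields utilities at least $c|Y^c_h|$ minus the new chore counts. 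As written, the proposal is not a complete proof; the monotonicity lemma must be restated under Lorenz dominance and proved, or the argument replaced by the paper's direct path-augmentation surgery on $Y$.
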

\begin{proof}
Assume for contradiction this is not true.
We know from \Cref{obs:phase-2} that $X^c$ is a clean leximin allocation with respect to the valuations $\{\beta^c_h\}_{h \in N}$. Therefore, using Lemma \ref{lem:phase-2-sufficient}, at least one of three paths must exist in the weighted exchange graph $\cal G(Y^c, Y^0, \beta^c)$.

\noindent\textbf{Case 1: } There exists a Pareto-improving path from some $F_{\beta^c_i}(Y^c, i)$ in $\cal G^w(Y^c, Y^0, \beta^c)$.

Let $P = (o_1, \dots, o_t)$ be the least-weight path in the weighted exchange graph from some $F_{\beta^c_i}(Y^c, i)$ to $Y^c_0$ in the exchange graph $\cal G^w(Y^c, Y^0, \beta^c)$. Path augmentation along this path $P$ gives us two allocations $\hat{Y}^c$ and $\hat{Y}^0$ such that $\hat{Y}^c$ is clean with respect to $\{\beta^c_h\}_{h \in N}$ and $\hat{Y}^c \cup \hat{Y}^0$ is clean with respect to $\{\beta^0_h\}_{h \in N}$. Define an allocation $\hat{Y}^{-1}$ as follows:
\begin{align*}
    \hat{Y}^{-1}_k = 
    \begin{cases}
        Y^{-1}_k - o_t & (k \in N) \\
        Y^{-1}_0 + o_t & (k = 0)
    \end{cases}
    .
\end{align*}

Define $\hat Y = \hat{Y}^c \cup \hat{Y}^0 \cup \hat{Y}^{-1}$. Note that with this construction, for all $h \in N$, $\hat{Y}^c_h$, $\hat{Y}^0_h$ and $\hat{Y}^{-1}_h$ are pairwise disjoint.
Moreover, $\hat{Y}^c$ is clean with respect to $\{\beta^c_h\}_{h \in N}$ and $\hat{Y}^c \cup \hat{Y}^0$ is clean with respect to $\{\beta^0_h\}_{h \in N}$. Therefore, we have for each agent $h \in N$, $v_h(\hat{Y}_h) \ge c|\hat{Y}^c_h| - |\hat{Y}^{-1}_h|$.

From the definition of $\hat{Y^c}$, we have $|\hat{Y}^c_h| \ge |Y^c_h|$ for every agent $h \in N$, with the inequality being strict for agent $i$ and $|\hat{Y}^{-1}_h| \le |Y^{-1}_h|$ for every agent $h \in N$. Combining these two inequalities, we have for every agent $h \in N$
\begin{align*}
    v_h(\hat{Y}_h) \ge c|\hat{Y}^c_h| - |\hat{Y}^{-1}_h| \ge c|Y^c_h| - |Y^{-1}_h| = v_h(Y_h),
\end{align*}
with at least one inequality being strict for agent $i$. This implies that the allocation $\hat{Y}$ Pareto dominates $Y$ (w.r.t. the valuations $\{v_h\}_{h \in N}$) which contradicts the assumption that $Y$ is leximin.

\noindent\textbf{Case 2: } There exists no Pareto-improving path from some $F_{\beta^c_i}(Y^c, i)$ in $\cal G^w(Y^c, Y^0, \beta^c)$.
 
From Lemma \ref{lem:phase-2-sufficient}, we get that there must be an exchange path from some $F_{\beta^c_i}(Y^c, i)$ to $Y^c_j$ for some $j \in N$ in $\cal G^w(Y^c, Y^0, \beta^c)$ such that $|Y^c_i| \le |Y^c_j| + 1$ with equality holding iff $i < j$. Path augmentation along the least-weight such path in the exchange graph $\cal G^w(Y^c, Y^0, \beta^c)$ results in allocations $\hat{Y}^c$ and $\hat{Y}^0$ such that $\hat{Y}^c$ is clean with respect to $\{\beta^c_h\}$ and $\hat{Y}^c \cup \hat{Y}^0$ is clean with respect to $\{\beta^0_h\}$. Note that the path does not contain any item from $Y^c_0$. Therefore, if we construct an allocation $\hat Y^{-1}$ such that $\bigcup_{h \in N} \hat{Y}^{-1}_h = \bigcup_{h \in N} Y^{-1}_h$, then $\hat{Y}^c \cup \hat{Y}^0 \cup \hat{Y}^{-1}$ is a complete allocation such that for each agent $h \in N$, $\hat{Y}^c_h$, $\hat{Y^0_h}$ and $\hat{Y}^{-1}_h$ are pairwise disjoint.
We have the following two subcases:

\noindent\textbf{Sub-case (a): } $|Y^c_i| < |Y^c_j| + 1$.

If $|Y^{-1}_j| < c$, consider the allocation $\hat{Y}$ defined as $\hat{Y}^c \cup \hat{Y}^0 \cup {Y}^{-1}$. We have the following inequalities for each agent $h \in N$:
\begin{enumerate}[(i)]
    \item $v_h(\hat{Y}_h) \ge c|\hat{Y}^c_h| - |Y^{-1}_h| = c|{Y}^c_h| - |Y^{-1}_h| = v_h(Y_h)$ for all $h \in N - i - j$,
    \item $v_i(\hat{Y}_i) \ge c|\hat{Y}^c_i| - |Y^{-1}_i| = c(|Y^c_i|+1) - |Y^{-1}_i| > v_i(Y_i)$, and
    \item $v_j(\hat{Y}_j) \ge c|\hat{Y}^c_j| - |Y^{-1}_j| = c(|Y^c_j|-1) - |Y^{-1}_j| \ge v_j(Y_j) - c$.
\end{enumerate}
In simple words, when comparing $\hat{Y}$ and $Y$, all the agents in $N-i-j$ weakly gain utility, $i$ strictly gains utility and $j$ loses utility by at most $c$. However, note that

\begin{align*}
v_j(\hat{Y}_j) \ge c(|Y^c_j|-1) - |Y^{-1}_j| \ge c(|Y^c_i| + 1) - |Y^{-1}_j| > c(|Y^c_i| + 1) - c \ge v_i(Y_i).    
\end{align*}

Therefore, the sorted utility vector of $\hat{Y}$ with respect to $\{v_h\}_{h \in N}$ lexicographically dominates the sorted utility vector of $Y$. 

If $|Y^{-1}_j| \ge c$, let $S$ be a $c$-sized subset of $Y^{-1}_j$. Consider the allocation $\hat{Y}^c \cup \hat{Y}^0 \cup \hat{Y}^{-1}$ where $\hat{Y}^{-1}$ is defined as follows:
\begin{align*}
    \hat{Y}^{-1}_k = 
    \begin{cases}
        Y^{-1}_k & (k \in N - i - j) \\
        Y^{-1}_j \setminus S & (k = j) \\
        Y^{-1}_i \cup S & (k = i)
    \end{cases}
    .
\end{align*}
Define $\hat Y = \hat{Y}^c \cup \hat{Y}^0 \cup \hat{Y}^{-1}$.
We can set up a similar set of inequalities for each agent $h \in N$.
\begin{enumerate}[(i)]
    \item $v_h(\hat{Y}^c_h) \ge c|\hat{Y}^c_h| - |\hat{Y}^{-1}_h| = c|{Y}^c_h| - |Y^{-1}_h| = v_h(Y_h)$ for all $h \in N - i - j$,
    \item $v_i(\hat{Y}_i) \ge c|\hat{Y}^c_i| - |\hat{Y}^{-1}_i| = c(|Y^c_i|+1) - |Y^{-1}_i| - c \ge v_i(Y_i)$, and
    \item $v_j(\hat{Y}_j) \ge c|\hat{Y}^c_j| - |\hat{Y}^{-1}_j| = c(|Y^c_j|-1) - |Y^{-1}_j| + c \ge v_j(Y_j)$.
\end{enumerate}
If any of these inequalities are strict, $\hat{Y}$ Pareto dominates $Y$ contradicting the fact that $Y$ is leximin. 

If all the inequalities are equalities, then $\hat{Y}$ and $Y$ have the same sorted utility vector, implying that they are both leximin. However, $\vec s^{\hat{Y}^c} \succ_{\lex} \vec s^{Y^c}$ contradicting our choice of $Y$ as a dominating leximin allocation.

\noindent\textbf{Sub-case (b): $|Y^c_i| = |Y^c_j| + 1$ and $i < j$}. Consider the allocation $\hat{Y}^c \cup \hat{Y}^0 \cup \hat{Y}^{-1}$ where $\hat{Y}^{-1}$ is defined as follows:
\begin{align*}
    \hat{Y}^{-1}_k = 
    \begin{cases}
        Y^{-1}_k & (k \in N - i - j) \\
        Y^{-1}_j  & (k = i) \\
        Y^{-1}_i & (k = j)
    \end{cases}
    .
\end{align*}
Essentially, we swap $Y^{-1}_j$ and $Y^{-1}_i$. Define $\hat Y = \hat{Y}^c \cup \hat{Y}^0 \cup \hat{Y}^{-1}$. We again compare the utilities of each agent. 
\begin{enumerate}[(i)]
    \item $v_h(\hat{Y}_h) \ge c|\hat{Y}^c_h| - |\hat{Y}^{-1}_h| = c|{Y}^c_h| - |Y^{-1}_h| = v_h(Y_h)$ for all $h \in N - i - j$,
    \item $v_i(\hat{Y}_i) \ge c|\hat{Y}^c_i| - |\hat{Y}^{-1}_i| = c(|Y^c_i|+1) - |Y^{-1}_j|  \ge v_j(Y_j)$, and
    \item $v_j(\hat{Y}_j) \ge c|\hat{Y}^c_j| - |\hat{Y}^{-1}_j| = c(|Y^c_j|-1) - |Y^{-1}_i| \ge v_i(Y_i)$.
\end{enumerate}

Again, if any of the above inequalities are strict, we have $\vec s^{\hat{Y}} \succ_{\lex} \vec s^{Y}$, contradicting the fact that $Y$ is leximin. 

Therefore, equality must hold throughout and $\hat{Y}$ must be a leximin allocation. This also means that $\hat{Y}^c$, $\hat{Y}^0$ and $\hat{Y}^{-1}$ must be a decomposition of $\hat{Y}$. This implies $\vec s^{\hat{Y}^c} = \vec s^{Y^c}$ and since $i < j$, $\vec u^{\hat{Y}^c} \succ_{\lex} \vec u^{Y^c}$. This again contradicts our choice of $Y$ as a dominating leximin allocation.
\end{proof}

Next, we show that $X^c$, $X^0$ and $X^{-1}$ form a valid decomposition of $X^c \cup X^0 \cup X^{-1}$.

\begin{restatable}{lemma}{lemalgodecomposition}\label{lem:algo-decomposition}
At every iteration in Phase $3$, for any agent $i \in N$, $v_i(X_i) = c|X^c_i| - |X^{-1}_i|$.
\end{restatable}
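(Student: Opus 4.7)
The plan is to prove the claim by induction over the iterations of the while-loop in Phase 3, using the invariants established at the end of Phase 2 (\Cref{obs:phase-2}) together with the MAX-USW property of $X^c \cup X^0$ with respect to $\{\beta^0_h\}_{h \in N}$.

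For the base case, consider the allocation at the moment Phase 3 begins. Since $X^{-1}$ has just been initialized to the empty allocation, we have $X_i = X^c_i \cup X^0_i$ and $|X^{-1}_i| = 0$ for every $i \in N$. By \Cref{obs:phase-2}, $X^c$ is clean w.r.t.\ $\{\beta^c_h\}_{h \in N}$ and $X^c \cup X^0$ is clean w.r.t.\ $\{\beta^0_h\}_{h \in N}$, and $X^c_h \cap X^0_h = \emptyset$ for all $h$. Applying the argument from the proof of \Cref{lem:3-decomposition} (examining any sorted telescoping sum vector $\vec v_i(X^c_i \cup X^0_i)$ obtained by inserting $X^c_i$ first and then $X^0_i$), the vector contains exactly $|X^c_i|$ entries equal to $c$ and $|X^0_i|$ entries equal to $0$, giving $v_i(X_i) = c|X^c_i| = c|X^c_i| - |X^{-1}_i|$ as required.

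For the inductive step, suppose the invariant holds at the start of some iteration of the while-loop. The algorithm picks an item $o \in X^c_0 \cap X^0_0 \cap X^{-1}_0$ and assigns it to the chosen agent $i$, updating only $X^{-1}_i \gets X^{-1}_i + o$; both $X^c_i$ and $X^0_i$ are unchanged. It therefore suffices to show that $\Delta_{v_i}(X^c_i \cup X^0_i \cup X^{-1}_i, o) = -1$. I will argue this in two steps. First, because $o \in X^0_0$ and $X^c \cup X^0$ is MAX-USW w.r.t.\ $\{\beta^0_h\}_{h \in N}$, reassigning $o$ to any agent $i$ cannot strictly increase $\beta^0_i(X^c_i \cup X^0_i)$; hence $\Delta_{\beta^0_i}(X^c_i \cup X^0_i, o) = 0$, which in the $\{-1, 0, c\}\ONSUB$ setting forces $\Delta_{v_i}(X^c_i \cup X^0_i, o) = -1$ (the only marginal value strictly less than $0$). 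Second, submodularity of $v_i$ together with $X^c_i \cup X^0_i \subseteq X^c_i \cup X^0_i \cup X^{-1}_i$ yields $\Delta_{v_i}(X^c_i \cup X^0_i \cup X^{-1}_i, o) \le \Delta_{v_i}(X^c_i \cup X^0_i, o) = -1$, and since all marginals lie in $\{-1, 0, c\}$, equality must hold.

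Combining these, after the update
\[
  v_i(X_i + o) \;=\; v_i(X_i) + \Delta_{v_i}(X_i, o) \;=\; \bigl(c|X^c_i| - |X^{-1}_i|\bigr) - 1 \;=\; c|X^c_i| - (|X^{-1}_i| + 1),
\]
which is precisely the invariant for the updated $X^{-1}_i$, while the invariant for all other agents is preserved since their bundles do not change. The main delicate point is the use of submodularity to transfer the marginal value of $-1$ from $X^c_i \cup X^0_i$ to the full bundle $X_i$, which relies crucially on the $\{-1, 0, c\}$ restriction to rule out the intermediate value $0$; beyond that, the argument is a clean induction.
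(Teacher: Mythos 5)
Your proposal is correct in overall structure but takes a genuinely different route from the paper's. The paper argues by contradiction in one shot: if $v_i(X_i) > c|X^c_i| - |X^{-1}_i|$, then the telescoping sum vector of $X_i$ has too many non-negative entries, and the items realizing those entries can replace $X^c_i \cup X^0_i$ to yield a clean allocation with strictly higher \USW w.r.t.\ $\{\beta^0_h\}_{h \in N}$, contradicting \Cref{obs:phase-2}. You instead run an induction over the Phase-3 loop and show that each item handed out in Phase 3 has marginal value exactly $-1$. Your inductive step --- $\Delta_{\beta^0_i}(X^c_i \cup X^0_i, o) = 0$ because $o$ is unallocated and $X^c \cup X^0$ is \MAXUSW w.r.t.\ $\{\beta^0_h\}_{h \in N}$, hence $\Delta_{v_i}(X^c_i \cup X^0_i, o) = -1$, hence by submodularity $\Delta_{v_i}(X_i, o) = -1$ --- is sound and arguably more transparent than the paper's argument, since it isolates exactly which invariant each Phase-3 assignment consumes.

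The one place you need to shore up is the base case. The claim that $\vec v_i(X^c_i \cup X^0_i)$ has \emph{exactly} $|X^c_i|$ entries equal to $c$ does not follow from the proof of \Cref{lem:3-decomposition}: that lemma constructs $X^c_i$ and $X^0_i$ \emph{from} the telescoping sum vector of a given bundle, whereas here $X^c_i$ and $X^0_i$ are produced by the algorithm and you must certify that they actually form a decomposition. Cleanness gives you all entries non-negative and \emph{at least} $|X^c_i|$ entries equal to $c$; you must separately rule out $\beta^c_i(X^c_i \cup X^0_i) > |X^c_i|$, i.e., the existence of some $o' \in X^0_i$ with $\Delta_{\beta^c_i}(X^c_i, o') = 1$. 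Such an $o'$ lies in $X^c_0$ (the algorithm keeps the items allocated in $X^0$ unallocated in $X^c$), so it would constitute a length-one Pareto-improving path, contradicting the termination condition of Phase 2; equivalently, adding $o'$ to $X^c_i$ would contradict the fact that a clean leximin allocation w.r.t.\ $\{\beta^c_h\}_{h \in N}$ is \MAXUSW, as invoked in \Cref{corr:max-usw}. With that one additional observation the induction goes through.
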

\begin{proof}
Note that $X^c$ is a clean leximin allocation with respect to $\{\beta^c\}_{h \in N}$.  Therefore, $v_i(X^c_i) = c|X^c_i|$.
If $v_i(X_i) > c|X^c_i| - |X^{-1}_i|$, it means that there are strictly more than $|X^c_i \cup X^0_i|$ indices in the telescoping sum vector $\vec v_i(X_i)$ with a non-negative value. The set of all items $S$ associated with these indices can be used to create a clean allocation with respect to $\{\beta^0\}_{h \in N}$ with a higher USW than $X^c \cup X^0$.

Specifically construct an allocation $Y$ starting at $X^c \cup X_0$ and replacing $X^c_i \cup X^0_i$ with $S$. This is clean (w.r.t. $\{\beta^0\}_{h \in N}$) by our choice of $S$ (using a similar argument to \Cref{lem:decomposition}). This allocation also has a higher USW than $X^c \cup X^0$ (w.r.t. $\{\beta^0_h\}_{h \in N}$) since $|S| > |X^c_i \cup X^0_i|$ --- a contradiction to \Cref{obs:phase-2}. 
\end{proof}

Combining these Lemmas, we can show the correctness of our algorithm.
\begin{restatable}{theorem}{thmleximin}\label{thm:leximin}
When agents have $\{-1, 0, c\}\ONSUB$ valuations, \Cref{algo:leximin} computes a leximin allocation efficiently.   
\end{restatable}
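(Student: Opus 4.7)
The proof has two parts: computational efficiency of \Cref{algo:leximin}, and that its output is a leximin allocation.

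Efficiency is routine to verify. Phase 1 invokes Yankee Swap (\Cref{thm:yankee-swap}) on the binary submodular valuations $\{\beta^0_h\}_{h \in N}$, each of whose values is computable in $O(m)$ queries to $v_h$ via \Cref{lem:mrf-oracle}; Yankee Swap itself is polynomial. Phase 2 is polynomial by \Cref{lem:phase-2-terminates}. Phase 3 performs at most $|X^c_0 \cap X^0_0| \le m$ greedy assignments, each costing only a linear scan for the current maximum-utility agent. Hence the algorithm runs in polynomial time with polynomially many value queries.

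For correctness, fix a dominating leximin allocation $Y = Y^c \cup Y^0 \cup Y^{-1}$; the goal is $\vec s^X = \vec s^Y$. By \Cref{lem:xc-equals-yc}, $|X^c_h| = |Y^c_h|$ for every $h \in N$, so setting $a_h = c|X^c_h| = c|Y^c_h|$ and combining \Cref{lem:algo-decomposition} (applied to $X$ after Phase 3) with the decomposition property of \Cref{lem:3-decomposition} (applied to $Y$), we obtain $v_h(X_h) = a_h - |X^{-1}_h|$ and $v_h(Y_h) = a_h - |Y^{-1}_h|$. Let $R = |X^c_0 \cap X^0_0|$ at the end of Phase 2; Phase 3 places each such item in some agent's $X^{-1}_h$, so $\sum_{h \in N} |X^{-1}_h| = R$. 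Since $X^c \cup X^0$ is \MAXUSW under $\{\beta^0_h\}$ by \Cref{obs:phase-2} while $Y^c \cup Y^0$ is clean under $\{\beta^0_h\}$ with USW $\sum_h (|Y^c_h| + |Y^0_h|) \le \sum_h (|X^c_h| + |X^0_h|) = m - R$, and since $\sum_h |Y^c_h| = \sum_h |X^c_h|$, we obtain $\sum_{h \in N} |Y^{-1}_h| \ge R$.

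To finish, I invoke a standard greedy-leximin claim for identical chores: with fixed starting utilities $\{a_h\}_{h \in N}$, greedily assigning exactly $R$ unit chores---each to the current maximum-utility agent, ties broken by largest index---produces the lex-maximum sorted utility vector over all integer assignments $(z_h)_{h \in N}$ with $z_h \ge 0$ and $\sum_h z_h \ge R$. This claim decomposes into (i) the textbook exchange argument showing greedy is leximin among assignments of exactly $R$ chores, and (ii) the observation that incrementing any single $z_h$ by $1$ weakly lex-decreases the sorted utility vector (the affected entry only moves earlier in the ascending sort), so any assignment with $\sum_h z_h > R$ is lex-dominated by a restriction to $R$ chores. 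Applying the claim with $z_h = |Y^{-1}_h|$ yields $\vec s^X \succeq_{\lex} \vec s^Y$; combined with $\vec s^Y \succeq_{\lex} \vec s^X$ from the leximin optimality of $Y$, we conclude $\vec s^X = \vec s^Y$, so $X$ is leximin. The main technical obstacle---structurally matching the $c$-valued part of $X$ to the dominating leximin $Y$---was already discharged by \Cref{lem:xc-equals-yc}; once that alignment is secured, the $-1$-valued portion reduces to greedy leximin over additive chores with fixed starting utilities, and the remaining argument is essentially bookkeeping.
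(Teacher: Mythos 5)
Your proof is correct, and while it reuses the paper's skeleton for efficiency and for the $c$-valued part (both arguments lean on \Cref{lem:phase-2-terminates} and \Cref{lem:xc-equals-yc} in the same way), your treatment of the $-1$-valued part genuinely departs from the paper's. The paper argues by contradiction inside the allocation space: it picks the minimum-utility agent $i$ with $|X^{-1}_i| > |Y^{-1}_i|$, finds a counterpart $j$ with $|X^{-1}_j| < |Y^{-1}_j|$, moves a single chore from $Y_j$ to $Y_i$, and uses a snapshot $W$ of the algorithm's state at the iteration where $i$ received its last chore to show the modified allocation either lex-dominates $\vec s^Y$ or (in the all-equalities case, using the index tie-break) contradicts the choice of $Y$ as a \emph{dominating} leximin allocation. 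You instead abstract the whole of Phase~3 into the statement that greedy assignment of identical unit chores to the current maximum-utility agent is leximin-optimal among all assignments of at least $R$ chores, which you justify by (i) the standard exchange argument for exactly $R$ chores and (ii) the observation that an extra unit of utility loss can only weakly lex-decrease the sorted vector. Both (i) and (ii) are true (for (ii): replacing $u_h$ by $u_h-1$ pointwise weakly decreases every entry of the sorted vector, since the count of entries below any threshold can only grow), and your route buys two things: the comparison is done purely on utility vectors, so you never need the intermediate objects to be feasible allocations and never need the case split on whether $\Delta_{v_i}(Y_i,o)\ge 0$; and you need only $\vec s^X = \vec s^Y$, so the domination tie-breaking on $Y$ plays no role in Phase~3 (it is still needed for \Cref{lem:xc-equals-yc}). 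The one thing you leave implicit is the exchange argument behind claim (i); it is indeed textbook (move a unit of disutility from a non-maximal agent to a maximal one and observe the sorted vector weakly lex-improves, then recurse), and the paper's Case analysis with the snapshot $W$ is essentially one instantiation of it, so this is a matter of completeness rather than a gap.
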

\begin{proof}
The computational efficiency of the Algorithm follows from \Cref{lem:phase-2-terminates}. Phases 1 and 3 are trivially efficient. So we only show correctness.

We use $X$ to denote the allocation $X^c \cup X^0 \cup X^{-1}$.
In Lemma \ref{lem:xc-equals-yc}, we show that for all agents $h \in N$, $|X^c_h| = |Y^c_h|$. Therefore, if we show that $|X^{-1}_h| = |Y^{-1}_h|$ for each agent $h \in N$, we prove that $X$ is leximin as well. 

Let $i \in N$ be the agent with least $v_i(X_i)$ such that $|X^{-1}_i| > |Y^{-1}_i|$; break ties by choosing the agent with the highest index $i$. If there is no such agent $i$, then $|X^{-1}_h| \le |Y^{-1}_h|$ for each agent $h \in N$, which implies $v_h(X_h) \ge v_h(Y_h)$ which means $X$ is leximin as well. 

From \Cref{obs:phase-2}, we have $\sum_{h \in N} |X^c_h \cup X^0_h| \ge |Y^c_h \cup Y^0_h|$. This means that $\sum_{h \in N} |X^{-1}_h| \le |Y^{-1}_h|$ since both $X$ and $Y$ are complete allocations. Therefore, there must exist some agent $j \in N$ such that $|X^{-1}_j| < |Y^{-1}_j|$.

Define $\hat{Y}^{-1}$ as an allocation starting at $Y^{-1}$ and moving an arbitrary item $o$ from $\hat{Y}^{-1}_j$ to $\hat{Y}^{-1}_i$. Consider the allocation $\hat{Y} = Y^c \cup Y^0 \cup \hat{Y}^{-1}$. 
If $\Delta_{v_i}(Y_i, o) \ge 0$, we are done. So assume $\Delta_{v_i}(Y_i, o) = -1$.

Let $W = X^c \cup X^0 \cup W^{-1}$ be the non-redundant allocation maintained by Algorithm \ref{algo:leximin} at the start of the iteration in Phase 3 where $i$ received its final item. By our choice of iteration, we must have 
\begin{align}
    v_i(W_i) \ge v_j(W_j).\text{ If equality holds, then } i > j  . \label{eq:wi-greater-wj} 
\end{align}

We can use this to compare $Y$ and $\hat{Y}$:
\begin{align}
    v_i(Y_i) = c|X^c_i| - |Y^{-1}_i| &\ge c|X^c_i| - |W^{-1}_i| = v_i(W_i) \ge v_j(W_j)  \notag\\
    &=  c|X^c_j| - |W^{-1}_j| \ge c|X^c_j| - |\hat{Y}^{-1}_j| = v_j(\hat{Y}_j).  \label{eq:yi-greater-zj}
\end{align}
The first equality follows from Lemma \ref{lem:algo-decomposition}.
If any of these weak inequalities are strict, we are done --- $v_i(Y_i) > v_j(\hat{Y}_j)$ implies $v_i(\hat{Y}_i) > v_j(Y_j)$ since the item $o$ has a marginal value of $-1$ in both bundles. We also trivially have $v_j(\hat{Y}_j) > v_j(Y_j)$. Since $j$ and $i$ were the only two agents who say a change in their bundles, this means $\vec s^{\hat{Y}} \succ_{\lex} \vec s^Y$ --- a contradiction.

If all the weak inequalities are equalities, then $v_i(Y_i) = v_j(\hat{Y}_j)$ which in turn implies that $v_i(\hat{Y}_i) = v_j(Y_j)$. This also implies that $\vec s^{\hat{Y}}= \vec s^Y$ which means both $Y$ and $\hat{Y}$ are leximin. However, since $i > j$ (from \eqref{eq:wi-greater-wj}), $\hat{Y}$ lexicographically dominates $Y$ (w.r.t. $\{v_h\}_{h \in N}$) which contradicts our choice of $Y$. 
\end{proof}

A useful corollary of this analysis is that Algorithm \ref{algo:leximin} outputs a \MAXUSW allocation. 

\begin{restatable}{corollary}{corrmaxusw}\label{corr:max-usw}
When agents have $\{-1, 0, c\}\ONSUB$ valuations, \Cref{algo:leximin} outputs a \MAXUSW allocation.
\end{restatable}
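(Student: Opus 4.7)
The plan is to reduce the \MAXUSW claim to two cardinality inequalities on the decomposition $X = X^c \cup X^0 \cup X^{-1}$ returned by \Cref{algo:leximin}. By \Cref{lem:algo-decomposition}, each agent's utility satisfies $v_i(X_i) = c|X^c_i| - |X^{-1}_i|$. For an arbitrary complete allocation $Z$, \Cref{lem:3-decomposition} supplies a decomposition $Z = Z^c \cup Z^0 \cup Z^{-1}$ with $v_i(Z_i) = c|Z^c_i| - |Z^{-1}_i|$. Summing and using $c \ge 1 > 0$, to obtain $\Welfare(X) \ge \Welfare(Z)$ it suffices to prove
\begin{enumerate}[(i)]
    \item $\sum_i |X^c_i| \ge \sum_i |Z^c_i|$, and
    \item $\sum_i |X^{-1}_i| \le \sum_i |Z^{-1}_i|$.
\end{enumerate}

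For (i), I would invoke \Cref{obs:phase-2}, which guarantees that $X^c$ is a clean leximin allocation with respect to $\{\beta^c_h\}_{h \in N}$. The key auxiliary fact is that for binary submodular valuations every leximin allocation is also \MAXUSW: \Cref{thm:yankee-swap} produces a clean \MAXUSW leximin allocation, and any two leximin allocations share the same sorted utility vector and therefore the same sum. Combining this with the identity $\beta^c_i(Z_i) = |Z^c_i|$ from \Cref{lem:decomposition}(c) gives $\sum_i |X^c_i| = \sum_i \beta^c_i(X^c_i) \ge \sum_i \beta^c_i(Z_i) = \sum_i |Z^c_i|$.

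For (ii), \Cref{obs:phase-2} directly provides that $X^c \cup X^0$ is \MAXUSW clean with respect to $\{\beta^0_h\}_{h \in N}$. Applying \Cref{lem:decomposition}(c) with threshold $0$ gives $\beta^0_i(Z_i) = |Z^c_i| + |Z^0_i|$, so $\sum_i (|X^c_i| + |X^0_i|) \ge \sum_i (|Z^c_i| + |Z^0_i|)$. Since $X$ and $Z$ are complete allocations of $m$ items, the partition constraint $\sum_i(|X^c_i|+|X^0_i|+|X^{-1}_i|) = m = \sum_i(|Z^c_i|+|Z^0_i|+|Z^{-1}_i|)$ converts this into (ii) by subtraction.

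The main (really only) subtlety is the implication "leximin for binary submodular $\Rightarrow$ \MAXUSW" used in (i); everything else is direct bookkeeping. This implication rests on pairing $X^c$ with the Yankee Swap output: since all leximin allocations have the same sorted utility vector, they all achieve the same total utility, which by \Cref{thm:yankee-swap} is the \MAXUSW value.
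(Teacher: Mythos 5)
Your proposal is correct and follows essentially the same route as the paper's proof: express both allocations' utilities via the decomposition identities, get $\sum_i |X^c_i| \ge \sum_i |Z^c_i|$ from $X^c$ being clean leximin (hence \MAXUSW) w.r.t.\ $\{\beta^c_h\}_{h\in N}$, and get $\sum_i |X^{-1}_i| \le \sum_i |Z^{-1}_i|$ from $X^c \cup X^0$ being \MAXUSW w.r.t.\ $\{\beta^0_h\}_{h\in N}$ plus completeness. The only difference is that you justify ``leximin implies \MAXUSW for binary submodular valuations'' from first principles via \Cref{thm:yankee-swap}, where the paper simply cites prior work; that is a presentational, not a substantive, difference.
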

\begin{proof}
Let $X^c, X^0$ and $X^{-1}$ be the allocations output by Algorithm \ref{algo:leximin}. Define $X$ as $X^c \cup X^0 \cup X^{-1}$. Note that the USW of this allocation is given by $\sum_{i \in N} v_i(X_i) = \sum_{i \in N} c|X^c_i| - |X^{-1}_i|$ from Lemma \ref{lem:algo-decomposition}. 

Let $Y = Y^c \cup Y^0 \cup Y^{-1}$ be any other allocation. Since $X^c$ is a clean leximin allocation with respect to $\{\beta^c_h\}_{h \in N}$, it is also \MAXUSW with respect to  $\{\beta^c_h\}_{h \in N}$ \citep{Babaioff2021Dichotomous, viswanathan2022yankee, benabbou2021MRF}. Therefore, $\sum_{i \in N} |X^c_i| \ge \sum_{i \in N} |Y^c_i|$. 

Since $X^c \cup X^0$ is a \MAXUSW allocation with respect to $\{\beta^0_h\}_{h \in N}$, we must have $\sum_{i \in N} |X^{-1}_i| \le \sum_{i \in N} |Y^{-1}_i|$. Combining these two observations we have:
\begin{align*}
    \sum_{i \in N} v_i(X_i) &= \sum_{i \in N} c|X^c_i| - |X^{-1}_i| \\
    &= c \sum_{i \in N} |X^c_i| - \sum_{i \in N}  |X^{-1}_i| \\
    &\ge c \sum_{i \in N} |Y^c_i| - \sum_{i \in N}  |Y^{-1}_i| \\
    &\ge \sum_{i \in N} c|Y^c_i| - |Y^{-1}_i| = \sum_{i \in N} v_i(Y_i).
\end{align*}

Since we picked $Y$ arbitrarily, $X$ must be a \MAXUSW allocation.
\end{proof}

It is worth noting that Algorithm \ref{algo:leximin} can also be used to compute leximin allocations when agents have $\{-1, 0\}\SUB$ and $\{0, 1\}\SUB$ valuations, where leximin allocations are already known to be efficiently computable \citep{barman2023chores, Babaioff2021Dichotomous}.
This is because $\{-1, 0, 1\}$\ONSUB valuations contain the class of $\{-1, 0\}$\SUB and $\{0, 1\}$\SUB valuations (as shown later in Proposition \ref{prop:a2-orderneutral}). 

\section{Properties of the Leximin Allocation}\label{sec:leximin-properties}
Now that we have shown how a leximin allocation can be computed, we explore its connection to other fairness notions. 

 \subsection{Proportionality}

 A popular notion of fairness with mixed goods and chores is that of proportionality. An allocation $X$ is said to be proportional if each agent receives at least an $n$-th fraction of their value for the entire set of items. This is not always possible --- consider an instance with two agents and one high valued item. The fair allocation literature has therefore, instead, studied a relaxation of proportionality called proportionality up to one item~\citep{aziz2022mixedgoodsandchores}. An allocation $X$ is {\em proportional up to one item} (\PROP) if any of the three following conditions hold for every agent $i \in N$:
\begin{enumerate}[(a)]
    \item $v_i(X_i) \ge \frac1n v_i(O)$, 
    \item $v_i(X_i + o) \ge \frac1n v_i(O)$ for some $o \in O \setminus X_i$, or
    \item $v_i(X_i - o) \ge \frac1n v_i(O)$ for some $o \in X_i$.
\end{enumerate}

Leximin allocations are guaranteed to be \PROP. The proof for this requires the following Lemmas.

\begin{lemma}\label{lem:prop-with-one-chore}
When agents have $\{-1, 0, c\}\ONSUB$ valuations, let $X = X^c \cup X^0 \cup X^{-1}$ be a leximin allocation. For any two agents $i$, $j$, if $|X^{-1}_i| > 0$, then $c|X^c_i| - |X^{-1}_i| + 1  \ge c|X^c_j| -|X^{-1}_j| $.
\end{lemma}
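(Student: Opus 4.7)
The plan is to argue the contrapositive by contradiction: suppose $v_j(X_j) > v_i(X_i) + 1$. By Lemma~\ref{lem:3-decomposition}(d) both utilities equal $c|X^c_h| - |X^{-1}_h|$, hence are integers, so this is equivalent to $v_j(X_j) \ge v_i(X_i) + 2$. I will exhibit an item $o \in X_i$ such that moving $o$ from $i$ to $j$ yields an allocation $X'$ whose ascending sorted utility vector strictly lex-dominates $\vec s^X$, contradicting leximinality.

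The first step is to locate an $o \in X_i$ with $\Delta_{v_i}(X_i - o, o) = -1$, i.e., $v_i(X_i - o) = v_i(X_i) + 1$. Since the decomposition only records the \emph{number} of $-1$-marginals, picking an arbitrary element of $X^{-1}_i$ need not suffice. Instead, I route through the binary submodular proxy $\beta^0_i$ of Lemma~\ref{lem:onsub-mrf}. By order-neutrality, $\beta^0_i(X_i) = |X^c_i| + |X^0_i| = |X_i| - |X^{-1}_i| < |X_i|$, so $X_i$ is not clean with respect to $\beta^0_i$. The matroid structure of binary submodular functions (any non-independent set contains a rank-preserving removable element: take $o$ outside a maximum independent subset of $X_i$) yields some $o \in X_i$ with $\beta^0_i(X_i - o) = \beta^0_i(X_i)$; since $\Delta_{v_i}(X_i - o, o) \in \{-1, 0, c\}$ and any non-negative value would strictly increase $\beta^0_i$, we conclude $\Delta_{v_i}(X_i - o, o) = -1$ as desired.

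Next, define $X'$ by moving this $o$ from $i$ to $j$. Then $v_i(X'_i) = v_i(X_i) + 1$, every agent other than $i, j$ is unchanged, and $v_j(X'_j) = v_j(X_j) + \delta$ for some $\delta \in \{-1, 0, c\}$. If $\delta \ge 0$ then $X'$ Pareto-dominates $X$, already contradicting leximinality. If $\delta = -1$ then $v_j(X'_j) = v_j(X_j) - 1 \ge v_i(X_i) + 1 = v_i(X'_i)$ by the standing hypothesis, so the change from $X$ to $X'$ is a single Pigou--Dalton transfer from $j$ to $i$ with no overshoot. A short computation on the multiset CDFs of $\vec u^X$ and $\vec u^{X'}$ shows they agree on $(-\infty, v_i(X_i))$ while $F^{X'}(v_i(X_i)) = F^X(v_i(X_i)) - 1$; hence the first position at which the ascending sorted vectors differ carries value $v_i(X_i)$ in $\vec s^X$ and a strictly larger value in $\vec s^{X'}$. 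Either way we contradict leximinality of $X$.

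The main obstacle is the first step: the decomposition $X = X^c \cup X^0 \cup X^{-1}$ is not canonical (Example~\ref{ex:3-decomposition}), so the items labelled ``$-1$'' under one decomposition may behave like $0$ or $c$ when removed in another telescoping order. Passing to the matroid $\beta^0_i$ is what cleanly upgrades the cardinality hypothesis $|X^{-1}_i| > 0$ to the existence of a specific $o$ whose removal jumps $v_i$ by exactly $+1$; thereafter the Pigou--Dalton / lex-dominance argument is routine.
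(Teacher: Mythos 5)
Your proof is correct and follows essentially the same route as the paper's: negate the inequality, transfer a single item whose removal raises $v_i$ by one from $i$ to $j$, and derive a Pareto improvement or a sorted-lex improvement depending on the sign of $\Delta_{v_j}(X_j,o)$. The only divergence is cosmetic: you locate the removable item via a matroid-basis argument on $\beta^0_i$, whereas the paper simply takes any $o \in X^{-1}_i$ — which does in fact suffice, since $o$ had marginal $-1$ at some prefix of $X_i$ in the decomposition's ordering and submodularity then forces $\Delta_{v_i}(X_i - o, o) = -1$, so your stated worry about non-canonical decompositions is unfounded (though your workaround is valid).
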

\begin{proof}
Assume for contradiction that  $c|X^c_i| - |X^{-1}_i|  < c|X^c_j| -|X^{-1}_j| - 1$ for some agents $i, j \in N$. Construct an allocation $Y^{-1}$ starting at $X^{-1}$ and moving any item $o \in X^{-1}_i$ to $X^{-1}_j$. Define $Y = X^c \cup X^0 \cup Y^{-1}$. 

If $\Delta_{v_j}(X_j, o) \ge 0$, $Y$ Pareto dominates $X$ contradicting the fact that $X$ is leximin. So $\Delta_{v_j}(X_j, o) = -1$. We have 
\begin{align*}
    v_j(Y_j) = c|X^c_j| -|X^{-1}_j| - 1 > c|X^c_i| - |X^{-1}_i| = v_i(X_i).
\end{align*}
We also trivially have $v_i(Y_i) > v_i(X_i)$. Therefore, $\vec s^Y \succ_{\lex} \vec s^X$ contradicting the fact that $X$ is leximin. 
\end{proof}

\begin{lemma}\label{lem:prop-with-no-chore}
When agents have $\{-1, 0, c\}\ONSUB$ valuations, let $X = X^c \cup X^0 \cup X^{-1}$ be a leximin allocation. For any two agents $i$, $j$, if $|X^{-1}_i| = 0$ and there exists an item $o \in X^c_j$ such that $\Delta_{v_i}(X^c_i, o) = c$, then $c|X^c_i| + c \ge c|X^c_j| -|X^{-1}_j| $.
\end{lemma}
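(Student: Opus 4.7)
My plan is to derive a contradiction by assuming $c|X^c_i| + c < c|X^c_j| - |X^{-1}_j|$ --- equivalently, by \Cref{lem:3-decomposition}(d) together with $|X^{-1}_i|=0$, this says $v_i(X_i) + c < v_j(X_j)$ --- and then constructing an allocation $Y$ whose sorted utility vector lexicographically dominates that of $X$. The natural candidate is the single-item transfer $Y_i = X_i + o$, $Y_j = X_j - o$, and $Y_k = X_k$ for $k \ne i, j$, exactly mirroring the structure used in the proof of \Cref{lem:prop-with-one-chore}.

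The key technical step will be to show $\Delta_{v_i}(X_i, o) = c$, so that $v_i(Y_i) = v_i(X_i) + c = c|X^c_i| + c$. Submodularity of $v_i$ supplies the upper bound $\Delta_{v_i}(X_i, o) \le \Delta_{v_i}(X^c_i, o) = c$. For the matching lower bound I will leverage the binary submodular function $\beta^c_i$ from \Cref{lem:onsub-mrf}: monotonicity yields $\beta^c_i(X_i + o) \ge \beta^c_i(X^c_i + o) = |X^c_i| + 1$, while submodularity gives $\beta^c_i(X_i + o) \le \beta^c_i(X_i) + 1 = |X^c_i| + 1$, so equality holds. Order-neutrality of $v_i$ then implies that, ordering the elements of $X_i + o$ with $o$ placed last, the sorted telescoping sum vector $\vec{v}_i(X_i + o)$ is exactly the sort of the multiset $\vec{v}_i(X_i) \cup \{\Delta_{v_i}(X_i, o)\}$. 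Since the count of $c$-valued entries rose by exactly one, the newly added entry must itself equal $c$.

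Bounding $v_j(Y_j)$ is routine: the marginal of a single item is at most $c$, so $v_j(Y_j) = v_j(X_j - o) \ge v_j(X_j) - c$. Combined with the contradiction hypothesis this gives $v_j(Y_j) > v_i(X_i)$, and we also have $v_i(Y_i) = v_i(X_i) + c > v_i(X_i)$. Since every other agent's utility is unchanged, concluding sorted dominance is standard: writing $t = v_i(X_i)$ and letting $A = \{k \ne i, j : v_k(X_k) < t\}$ and $B = \{k \ne i, j : v_k(X_k) = t\}$, the first $|A| + |B|$ positions of $\vec{s}^X$ and $\vec{s}^Y$ coincide, but position $|A| + |B| + 1$ has value $t$ in $\vec{s}^X$ (the slot previously occupied by agent $i$) and a strictly larger value in $\vec{s}^Y$ (since both $i$ and $j$ now exceed $t$, and all other agents above $t$ are unchanged). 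This gives $\vec{s}^Y \succ_{\lex} \vec{s}^X$, contradicting leximin. The main obstacle is the exact marginal identity $\Delta_{v_i}(X_i, o) = c$, which the above argument handles by combining the monotonicity, submodularity, and order-neutrality ingredients.
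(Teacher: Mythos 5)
Your proposal is correct and follows essentially the same route as the paper: assume $c|X^c_i|+c < c|X^c_j|-|X^{-1}_j|$, transfer the single item $o$ from $j$ to $i$, and derive a lexicographic-dominance contradiction from the facts that $i$ strictly gains and $j$'s new utility still exceeds $v_i(X_i)$. The only difference is that where the paper justifies $v_i(X_i+o)=v_i(X_i)+c$ by pointing to "an argument similar to Theorem \ref{thm:exchange-paths}," you give a direct, self-contained derivation of $\Delta_{v_i}(X_i,o)=c$ via monotonicity and submodularity of $\beta^c_i$ together with order-neutrality — a nice explicit filling-in of a step the paper only sketches.
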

\begin{proof}
Assume for contradiction that  $c|X^c_i| + c  < c|X^c_j| -|X^{-1}_j|$ for some agents $i, j \in N$. Construct an allocation $Y^c$ starting at $X^c$ and moving the item $o$ to $X^c_i$. Define $Y = Y^c \cup X^0 \cup X^{-1}$. Using an argument similar to \Cref{thm:exchange-paths}, we can show that $Y^c \cup X^0$ is clean with respect to $\{\beta^0_h\}_{h \in N}$. 

We have the following inequalities for each agent:
\begin{enumerate}[(i)]
    \item $v_h(Y_h) \ge c|Y^c_h| - |X^{-1}_h| = c|X^c_h| - |X^{-1}_h| = v_h(X_h)$ for all $h \in N - i - j$,
    \item $v_i(Y_i) \ge c|Y^c_i| - |X^{-1}_i| = c(|X^c_i|+1) - |X^{-1}_i| > v_i(X_i)$, and
    \item $v_j(Y_j) \ge c|Y^c_j| - |X^{-1}_j| = c(|X^c_j|-1) - |X^{-1}_j| \ge v_j(X_j) - c$.
\end{enumerate}
When comparing $Y$ and $X$, all the agents in $N-i-j$ weakly gain utility, $i$ strictly gains utility and $j$ loses utility by at most $c$. However, note that
\begin{align*}
    v_j(Y_j) \ge c(|X^c_j|-1) - |X^{-1}_j| > c(|X^c_i| + 1) - c \ge v_i(X_i).
\end{align*}

Note that $i$ strictly increases their utility under $Y$ but $j$ loses utility. However,
this still contradicts the fact that $X$ is leximin since $j$ is the only agent to lose utility in $Y$ but still has a higher utility (under $Y$) than $i$ under $X$.
\end{proof}

We are ready to show our main result.

\begin{prop}\label{prop:prop1}
When agents have $\{-1, 0, c\}\ONSUB$ valuations, leximin allocations are \PROP.
\end{prop}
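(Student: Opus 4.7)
The plan is to fix an agent $i \in N$ and split on whether their final bundle contains any chores under the algorithm's decomposition $X = X^c \cup X^0 \cup X^{-1}$, producing in each case a single-item adjustment that witnesses \PROP.

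\textbf{Case 1 ($|X^{-1}_i| > 0$).} Here I combine Lemma~\ref{lem:prop-with-one-chore} with Corollary~\ref{corr:max-usw}. The lemma yields $v_j(X_j) \le v_i(X_i) + 1$ for every $j \in N$, so $\sum_{j \in N} v_j(X_j) \le n(v_i(X_i) + 1)$. Testing the \MAXUSW allocation $X$ against the allocation assigning all of $O$ to $i$ (with $v_h(\emptyset) = 0$ for the other agents) gives $\sum_{j \in N} v_j(X_j) \ge v_i(O)$, hence $v_i(O)/n \le v_i(X_i) + 1$. Now pick any $o \in X^{-1}_i$: in the ordering used to build the decomposition, $o$ appears with marginal $-1$, so submodularity forces $\Delta_{v_i}(X_i - o, o) \le -1$, and since marginals lie in $\{-1, 0, c\}$ equality holds. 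Therefore $v_i(X_i - o) = v_i(X_i) + 1 \ge v_i(O)/n$, establishing \PROP case (c).

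\textbf{Case 2 ($|X^{-1}_i| = 0$).} Now $v_i(X_i) = c|X^c_i| \ge 0$. If $v_i(X_i) \ge v_i(O)/n$ we are done via \PROP case (a); otherwise $v_i(X_i) < v_i(O)/n$, which together with $v_i(X_i) \ge 0$ forces $v_i(O) > n v_i(X_i) \ge v_i(X_i)$. The telescoping sum $v_i(O) - v_i(X_i) = \Delta_{v_i}(X_i, O \setminus X_i)$ is therefore strictly positive; since each marginal lies in $\{-1, 0, c\}$, some marginal equals $c$, and by submodularity this lifts to an item $o^* \in O \setminus X_i$ with $\Delta_{v_i}(X_i, o^*) = c$, giving $v_i(X_i + o^*) = v_i(X_i) + c$. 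To certify $v_i(X_i) + c \ge v_i(O)/n$, I appeal to the fact that at the end of Phase~2 the partial allocation $X^c$ is leximin for the binary submodular valuations $\{\beta^c_h\}_{h \in N}$; by known results on binary submodular leximin (Yankee Swap / matroid-rank \MMS), every agent receives at least their maximin share, so $|X^c_i| \ge \lfloor \beta^c_i(O)/n \rfloor$. Hence $n(|X^c_i| + 1) \ge \beta^c_i(O)$, so $c(|X^c_i| + 1) \ge c\beta^c_i(O)/n$. Because the sorted telescoping sum vector of $v_i(O)$ has at most $\beta^c_i(O)$ entries equal to $c$ and every other entry is non-positive, $v_i(O) \le c\beta^c_i(O)$, and therefore $v_i(X_i) + c = c(|X^c_i| + 1) \ge c\beta^c_i(O)/n \ge v_i(O)/n$, which is \PROP case (b).

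\textbf{Main obstacle.} The delicate step is the \MMS-style bound $|X^c_i| \ge \lfloor \beta^c_i(O)/n \rfloor$ invoked in Case~2. A self-contained derivation would leverage Phase~2's termination conditions: the item $o^* \in F_{\beta^c_i}(X^c, i)$ cannot lie in $X^c_0$ (no Pareto-improving path survives termination) nor in $X^c_i$ (elements of $X^c_i$ have zero marginal with respect to $X^c_i$), so $o^* \in X^c_l$ for some $l \ne i$, at which point Lemma~\ref{lem:prop-with-no-chore} bounds $v_l(X_l) \le v_i(X_i) + c$. Propagating this bound to every $j$ would then exploit the structural fact that $F_{\beta^c_i}(X^c, i)$ cannot meet any $X^c_{j'}$ with $|X^c_{j'}| \ge |X^c_i|$, combined with the $\{\beta^c_h\}$-\MAXUSW identity $\sum_j |X^c_j| \ge \beta^c_i(O)$; this avenue is more intricate but avoids appealing to an external \MMS result.
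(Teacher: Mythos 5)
Your Case 1 is essentially sound and is a legitimate variant of the paper's argument: where the paper bounds $v_i(O)$ by $v_i(X_i) + \sum_{j\ne i}\left[v_i(X_i\cup X_j)-v_i(X_i)\right]$ via submodularity and then applies Lemma~\ref{lem:prop-with-one-chore} termwise, you reach the same inequality $v_i(O) \le n(v_i(X_i)+1)$ by combining Lemma~\ref{lem:prop-with-one-chore} with \MAXUSW-ness (which does extend from the algorithm's output to every leximin allocation, since they all share a sorted utility vector and hence a common \USW). Your identification of $\Delta_{v_i}(X_i - o, o) = -1$ for $o \in X^{-1}_i$ is also correct.

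The gap is in Case 2, and it is twofold. First, your main route hinges on $|X^c_i| \ge \lfloor \beta^c_i(O)/n\rfloor$, justified by the \MMS guarantee of leximin allocations under matroid-rank valuations. That is a true external theorem, but it is nowhere established in this paper, and --- more seriously --- it applies only when $X^c$ is leximin with respect to $\{\beta^c_h\}_{h\in N}$, i.e., essentially to the output of Algorithm~\ref{algo:leximin}. The proposition quantifies over \emph{all} leximin allocations, and \PROP is a per-bundle property rather than a property of the sorted utility vector; indeed, the paper's own Lemma~\ref{lem:xc-equals-yc} (Case 2, sub-case (a)) exhibits leximin allocations whose clean components have size profiles different from the dominating one (trading one $c$-valued item for $c$ chores preserves leximin-ness), and for such an allocation $|X^c_i|$ can fall below $\lfloor \beta^c_i(O)/n\rfloor$, breaking your chain of inequalities. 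Second, your acknowledged fallback stops exactly where the work is: Lemma~\ref{lem:prop-with-no-chore} bounds $v_l(X_l)$ only for the single agent $l$ holding $o^*$, whereas certifying $v_i(O) \le n(c|X^c_i|+c)$ requires controlling $v_i(X_i\cup X_j) - v_i(X_i)$ for \emph{every} $j$. The paper does this by splitting $N - i$ into the set $N'$ of agents holding some item with marginal $c$ for $i$ (each bounded via Lemma~\ref{lem:prop-with-no-chore}) and the remaining agents, whose bundles contribute at most $-|X^{-1}_j| \le c|X^c_j| - |X^{-1}_j|$ to $v_i$ because items in others' $X^{-1}_j$ have marginal $-1$ for $i$ and items in $X^0_j$ have marginal at most $0$ (otherwise $X$ would be Pareto-dominated). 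Until you carry out that propagation, or re-derive the size lower bound for arbitrary leximin decompositions, Case 2 is not closed.
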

\begin{proof}
Let $X = X^c \cup X^0 \cup X^{-1}$ be any leximin allocation. Fix an agent $i \in N$; we will show that the \PROP conditions hold for each possible $i \in N$. Note that for any item $o' \in X^{-1}_j$ (for some $j \ne i$), $\Delta_i(X_i, o') = -1$ as well. Otherwise, we can move the item $o'$ to $X_i$ to Pareto dominate the allocation $X$. For similar reasons,  for any item $o' \in X^{0}_j$ (for some $j \ne i$), $\Delta_i(X_i, o') \le 0$.

From this, we can conclude that 
\begin{obs}\label{obs:prop1-1}
For any $j \in N - i$, $v_i(X_i \cup X_j) - v_i(X_i) \le c|X^c_j| - |X^{-1}_j|$.
\end{obs}
We break the proof down into cases.

\noindent\textbf{Case 1:} $|X^{-1}_i| > 0$. Let $o$ be some item in $X^{-1}_i$. 

From Lemma \ref{lem:prop-with-one-chore}, we get that
\begin{align*}
    v_i(O) &\le v_i(X_i) + \sum_{j \in N - i} v_i(X_i \cup X_j) - v_i(X_i) \\
    &\le \sum_{h \in N} c|X^c_h| - |X^{-1}_h| \\
    &\le n (c|X^c_i| - |X^{-1}_i| + 1). 
\end{align*}
The first inequality follows from submodularity. The second inequality comes from  Observation \ref{obs:prop1-1}. The final equality comes from Observation \ref{lem:prop-with-one-chore}.

Therefore, $v_i(X_i - o) = c|X^c_i| - |X^{-1}_i| + 1 \ge \frac1n v_i(O)$ and the allocation $X$ is \PROP for the agent $i$.

\noindent\textbf{Case 2:} $|X^{-1}_i| = 0$.
If there is no item in $o \in O \setminus X^c_i$ such that $\Delta_{v_i}(X^c_i, o) = c$. Then the result is trivial since $v_i(X_i) \ge v_i(O) \ge \frac1n v_i(O)$.

Let $N'$ consist of all the agents $j \in N - i$ such that there is some $o' \in X^c_j$ such that $\Delta_{v_i}(X^c_i, o') = c$. We have
\begin{align*}
    v_i(O) &\le v_i(X_i) + \sum_{j \in N'} v_i(X_i \cup X_j) - v_i(X_i) + \sum_{j \in (N\setminus N') - i} v_i(X_i \cup X_j) - v_i(X_i) \\
    &\le c|X^c_i| + \sum_{j \in N'} c|X^c_j| - |X^{-1}_j|  + \sum_{j \in (N\setminus N') - i} (- |X^{-1}_j|)\\
    &\le c|X^c_i| + \sum_{j \in N'} c|X^c_j| - |X^{-1}_j|  + \sum_{j \in (N\setminus N') - i} c|X^c_j|- |X^{-1}_j| \\ 
    &\le n(c|X^c_i| + c).
\end{align*}
The first inequality comes from submodularity. The second inequality comes from the definition of $N'$. The final inequality comes from Lemma \ref{lem:prop-with-no-chore}.

This means, $v_i(X_i + o) = c|X^c_i| + c \ge \frac1n v_i(O)$; we can show that $v_i(X_i + o) = v_i(X_i) + c$ using an argument similar to Lemma \ref{thm:Pareto-improving-paths}. Therefore, $X$ is \PROP for the agent $i$ in this case as well. Since we picked $i$ arbitrarily, we can conclude that $X$ is \PROP for all agents $h \in N$.
\end{proof}

\subsection{Envy-freeness}
Another popular notion of fairness is that of envy-freeness. An allocation is {\em envy-free} if no agent prefers another agent's bundle to their own. This, again, is impossible to guarantee when all items are allocated. 
Due to this impossibility, several relaxations have been studied in the literature. 
The most popular relaxation of envy-freeness is {\em envy-freeness up to one good} (\EFone) \citep{Budish2011EF1,Lipton2004EF1, aziz2022mixedgoodsandchores}. When there are both goods and chores, an allocation $X$ is \EFone if for any two agents $i, j$, there exists $o \in X_i \cup X_j$ such that $v_i(X_i - o) \ge v_i(X_j - o)$.

Unfortunately, leximin allocations are not guaranteed to be \EFone under $\{-1, 0, c\}\ONSUB$ valuations. Consider the following example.

\begin{example}\label{ex:ef1-impossibility}
Consider an instance with two agents $\{1, 2\}$ and six items $G = \{o_1, \dots, o_6\}$. Each agent has valuations defined as follows: 
\begin{align*}
    v_1(S) = (c + 1)[|S \cap \{o_1, o_2\}| + \max\{|S - o_1 - o_2|, 2\}] - |S|, && 
    v_2(S) = (c + 1)|S \cap \{o_1, o_2\}| - |S|.
\end{align*}
It is easy to verify that these two functions are $\{-1, 0, c\}\ONSUB$. In any leximin allocation, agent $2$ gets the items $o_1$, $o_2$ and one item from $\{o_3, o_4, o_5, o_6\}$. Agent $1$ gets the remaining three items. Agent $1$ always envies agent $2$ beyond one item since they value their own bundle at $2c - 1$ but value agent $2$'s bundle at $3c$.
\end{example}

However, under the more restricted class of $\{-1, 0, c\}\ADD$ valuations, leximin allocations are guaranteed to be \EFone.

\begin{prop}\label{prop:add-ef1}
When agents have $\{-1, 0, c\}\ADD$ valuations, leximin allocations are \EFone.
\end{prop}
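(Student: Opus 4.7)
The plan is to prove \EFone for every pair of agents $(i,j)$ by combining the existing proportionality lemmas \Cref{lem:prop-with-one-chore} and \Cref{lem:prop-with-no-chore} with a new, additive-specific Pareto observation that bridges agent $j$'s valuation to agent $i$'s valuation of $X_j$. Specifically, I would first show that in any leximin allocation with $\{-1,0,c\}\ADD$ valuations, every item is allocated to an agent who values it at least as much as any other agent, i.e., for all $o \in X_j$ and all $i \ne j$, $v_i(o) \le v_j(o)$.

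To prove this Pareto step, suppose some $o \in X_j$ satisfies $v_i(o) > v_j(o)$ for some $i \ne j$. Since both values lie in $\{-1, 0, c\}$, the only pairs with $v_i(o) > v_j(o)$ are $(v_i(o), v_j(o)) \in \{(0,-1),\, (c,0),\, (c,-1)\}$. In every such case, $v_i(o) \ge 0$ and $v_j(o) \le 0$, with at least one strict. Moving $o$ from $j$ to $i$ then weakly increases $v_i$'s utility and weakly increases $v_j$'s utility (strictly in at least one), producing a Pareto improvement over $X$ and contradicting leximin. By additivity, summing over $o \in X_j$ gives $v_i(X_j) \le v_j(X_j)$ for every $i \ne j$.

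The remainder is a case analysis. If $X^{-1}_i \ne \emptyset$, \Cref{lem:prop-with-one-chore} yields $v_j(X_j) \le v_i(X_i)+1$, hence $v_i(X_j) \le v_i(X_i)+1$; picking $o \in X^{-1}_i$ gives $v_i(X_i - o) = v_i(X_i)+1 \ge v_i(X_j) = v_i(X_j - o)$. If $X^{-1}_i = \emptyset$ and no item of $X_j$ is valued at $c$ by $i$, then $v_i(X_j) \le 0 \le v_i(X_i)$, so $i$ does not envy $j$ and \EFone is trivial. Otherwise $X^{-1}_i = \emptyset$ and there is $o \in X_j$ with $v_i(o) = c$; by the Pareto step $v_j(o) = c$ too, so $o \in X^c_j$ and \Cref{lem:prop-with-no-chore} gives $v_i(X_i)+c \ge v_j(X_j) \ge v_i(X_j)$. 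Taking this $o$ yields $v_i(X_i - o) = v_i(X_i) \ge v_i(X_j) - c = v_i(X_j - o)$.

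The main obstacle is the item-wise Pareto dominance claim in the second paragraph: this is precisely where additivity is essential, and is exactly the property that fails in the order-neutral submodular setting of \Cref{ex:ef1-impossibility} (there, moving a high-valued item away from $j$ can destroy the value it contributes to $j$'s other items, so the single-item move is no longer Pareto improving). Once the item-wise Pareto property is in hand, everything else is a direct application of the lemmas already established in Section~\ref{sec:leximin-properties}.
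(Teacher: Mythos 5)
Your proposal is correct and takes essentially the same route as the paper: the paper likewise derives $v_i(X_j) \le c|X^c_j| - |X^{-1}_j| = v_j(X_j)$ from single-item Pareto-improving transfers (your item-wise dominance claim is just a slightly more uniform packaging of the same observation), and then splits on whether $X^{-1}_i$ is empty, applying \Cref{lem:prop-with-one-chore} and \Cref{lem:prop-with-no-chore} exactly as you do.
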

\begin{proof}
Recall that when agents have additive valuations, each item $o$ has a specific value $v_i(\{o\})$ for the agent $i$. The marginal gain of adding this item to any bundle is $v_i(\{o\})$.

This proof is very similar to \Cref{prop:prop1}. Let $X = X^c \cup X^0 \cup X^{-1}$ be any leximin allocation. Pick any two agents $i, j \in N$. Note that for any item $o' \in X^{-1}_j$, $v_i(\{o'\}) = -1$ as well. Otherwise, we can move the item $o'$ to $X_i$ to Pareto dominate the allocation $X$. For similar reasons,  for any item $o' \in X^{0}_j$, $v_i(\{o'\}) \le 0$.

Therefore, $v_i(X_j) \le c|X^c_j| - |X^{-1}_j|$.

If $|X^{-1}_i| > 0$, let $o$ be some item in $X^{-1}_i$ using Lemma \ref{lem:prop-with-one-chore}, we get $v_i(X_i - o) = c|X^c_i| - |X^{-1}_i| + 1 \ge c|X^c_j| - |X^{-1}_j| \ge v_i(X_j)$. 

If $|X^{-1}_i| = 0$, let $o$ be some item in $X^c_j$ such that $v_i(\{o\}) = c$. If no such item exists, we are done. If such an item exists, from Lemma \ref{lem:prop-with-no-chore}, we get $v_i(X_i) = c|X^c_i| \ge c|X^c_j| - |X^{-1}_j| - c \ge v_i(X_j - o)$.

Since we picked agents $i$ and $j$ arbitrarily, $X$ must be \EFone.
\end{proof}

\subsection{Maxmin Share}
The {\em maxmin share} (\MMS) of an agent $i$ is defined as the value they would obtain had they divided the items into $n$ bundles themselves and picked the worst of these bundles. 
More formally, 
    \begin{align*}
        \MMS_i =  \max_{X = (X_1, X_2, \dots, X_n)} \min_{j\in [n]} v_i(X_j)
    \end{align*} 
\citet{procaccia2014fairenough} show that agents cannot always be guaranteed their maxmin share; past works \citep{Kurokawa2018Maxmin} instead focus on guaranteeing that every agent receives a fraction of their maxmin share. For some $\epsilon \in (0, 1]$, an allocation $X$ is $\epsilon$-$\MMS$ if for every agent $i \in N$, $v_i(X_i) \ge \epsilon\cdot \MMS_i$.

Unfortunately, when agents have $\{-1, 0, c\}\ONSUB$ valuations, there exists an instance where an agent with a positive maxmin share receives a utility of $0$. This rules out the possibility of providing approximate maxmin share guarantees for leximin allocations.

\begin{example}\label{ex:mms-impossibility}
Consider an instance with two agents $\{1, 2\}$ and $10$ items  $\{o_1, \dots, o_{10}\}$. Let $O_{\ell} = \{o_1, \dots, o_l\}$ for any $\ell \in [10]$. Each agent has a valuation function defined as follows:
\begin{align*}
    v_1(S) = \max\{|S \cap O_4|, 2\} + |S \cap (O_6 \setminus O_4)| + |S \cap O_6| - |S|, \\
    v_2(S) = 2|S \cap (O_6 \setminus O_4)| - |S|.
\end{align*}
We can easily verify that these functions are $\{-1, 0, 1\}\ONSUB$. Agent $1$ has a maxmin share of $1$ given by the partition $\{\{o_1, o_2, o_5, o_7, o_8\}, \{o_3, o_4, o_6, o_9, o_{10}\}\}$.

In any leximin allocation, agent $1$ receives the items $\{o_1, o_2, o_3, o_4\}$ and any two out of $\{o_7, o_8, o_9, o_{10}\}$. Agent $2$ receives the other items. This gives agent $1$ a utility of $0$.
\end{example}

However, when agents have $\{-1, 0, c\}\ADD$ valuations, leximin allocations are 1-\MMS. 
The proof for this statement requires some non-trivial but elegant counting arguments similar to the ones used in \citet{cousins2023bivalued}. We find that these counting arguments are best presented as a series of simple Lemmas. 
So for the rest of this subsection, we assume agents have $\{-1, 0, c\}\ADD$ valuations.

Let $X = X^c \cup X^0 \cup X^{-1}$ be an arbitrary leximin allocation. Fix an agent $i \in N$. Let $Y$ be a dominating leximin allocation for the instance where all agents have the valuation function $v_i$ --- no other leximin allocation $Z$ dominates $Y$ when all agents have the valuation function $v_i$ (from \Cref{def:domination}).

We first establish some important properties about $Y$.

\begin{lemma}\label{lem:ycj-bounds}
For all $j \in N$, $|Y^c_n| + 1 \ge |Y^c_j| \ge |Y^c_n|$.
\end{lemma}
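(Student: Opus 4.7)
The plan is to separate the claim into the two bounds $|Y^c_j| \ge |Y^c_n|$ and $|Y^c_j| \le |Y^c_n|+1$ and handle each individually.

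For the lower bound, I would exploit the symmetry of the instance: every agent has the same valuation $v_i$, so any permutation of the bundles in $Y$ yields another allocation with identical $\vec s^Y$ and $\vec s^{Y^c}$, and is therefore also leximin. Among all such permutations, the dominating property (Definition \ref{def:domination}, clause (b)) forces $Y$ to lex-maximize $\vec u^{Y^c} = (c|Y^c_1|, \dots, c|Y^c_n|)$, and the lex-maximizing permutation of a fixed multiset is the non-increasing arrangement. Hence $|Y^c_1| \ge |Y^c_2| \ge \dots \ge |Y^c_n|$, which gives $|Y^c_j| \ge |Y^c_n|$ for every $j$.

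For the upper bound I would argue by contradiction: assume $|Y^c_j| \ge |Y^c_n| + 2$ for some $j$, and build an allocation $Y'$ that contradicts some facet of the dominance of $Y$. The base move is to transfer one $c$-valued item from agent $j$ to agent $n$; under the additive valuation $v_i$, this sends $u_j \mapsto u_j - c$ and $u_n \mapsto u_n + c$ while leaving all other utilities fixed, and it strictly reduces the gap between $|Y^c_j|$ and $|Y^c_n|$ in the $c$-size multiset. When $u_j - u_n \ge c$, a short case split (on whether $u_j - u_n \ge 2c$ or $c \le u_j - u_n < 2c$) shows that the single-item transfer either (i) strictly lex-improves $\vec s^Y$ (contradicting leximin) or (ii) leaves $\vec s^Y$ unchanged but moves a single entry of the sorted $c$-size vector up by one, so $\vec s^{Y'^c} \succ_{\lex} \vec s^{Y^c}$ (contradicting dominance).

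The main obstacle is the residual case $u_j - u_n < c$, where the naive transfer drives $u_j - c$ below the old minimum $u_n$ and strictly worsens $\vec s^Y$. Here I would observe that $|Y^c_j| - |Y^c_n| \ge 2$ combined with $u_j - u_n < c$ algebraically forces $|Y^{-1}_j| - |Y^{-1}_n| > c$, and in particular $|Y^{-1}_j| \ge c+1$. I therefore transfer the $c$-item together with exactly $c$ of the $-1$-valued items from $Y^{-1}_j$ to agent $n$; by additivity the two pieces exactly cancel, so every utility is preserved and $Y'$ is leximin with $\vec s^{Y'} = \vec s^Y$, yet the multiset of $|Y^c|$-sizes is strictly more balanced (one large entry decreased by $1$ and the smallest entry increased by $1$), so $\vec s^{Y'^c} \succ_{\lex} \vec s^{Y^c}$, again contradicting the dominance of $Y$. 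Combining the two bounds yields the sandwich $|Y^c_n| \le |Y^c_j| \le |Y^c_n| + 1$.
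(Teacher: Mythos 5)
Your proof is correct and follows essentially the same route as the paper: the lower bound is the paper's bundle-swap argument (non-increasing arrangement forced by clause (b) of Definition \ref{def:domination}), and the upper bound is the same transfer argument the paper invokes by citing Case 2, sub-case (a) of Lemma \ref{lem:xc-equals-yc} --- either a bare single-good transfer improves $\vec s^{Y}$ or the $c$-size vector, or, when the utility gap is too small, the good is bundled with $c$ chores so that utilities are preserved while $\vec s^{Y^c}$ lex-improves. Your case split is organized by the utility gap $u_j - u_n$ rather than by comparing $|Y^{-1}_j|$ to $c$, but these are equivalent under additivity, so no substantive difference.
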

\begin{proof}
If there exists a $j \in N$ such that $|Y^c_j| > |Y^c_n| + 1$, then we can use an argument similar to Lemma \ref{lem:xc-equals-yc} (Case 2, sub-case (a)) to show that $Y$ is not a dominating leximin allocation when all agents have the valuation function $v_i$.

If there exists a $j \in N$ such that $|Y^c_j| < |Y^c_n|$, swapping the bundles of $j$ and $n$ i.e. swapping $Y_j$ and $Y_n$, will lead to a leximin allocation $Z$ where $\vec s^{Z^c} = \vec s^{Y^c}$ but $\vec u^{Z^c} \succ_{\lex} \vec u^{Y^c}$. This again, contradicts our choice of $Y$ as a dominating leximin allocation.
\end{proof}

We define $N'$ as the set of agents $j \in N$ where $|Y^c_j| = |Y^c_n| + 1$. Note that for all agents $j \in N \setminus N'$, we must have $|Y^c_j| = |Y^c_n|$ from \Cref{lem:ycj-bounds}. We establish useful bounds for $Y^{-1}_j$ for any $j \in N'$. 

\begin{lemma}\label{lem:yc1-bounds-nprime}
For all $j \in N'$, $|Y^{-1}_n| - 1 \le \max\{|Y^{-1}_j| - c, 0\} \le |Y^{-1}_n|$. 
\end{lemma}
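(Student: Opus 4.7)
The plan is to prove both inequalities by contradiction, exploiting that moving a single chore between two agents shifts their $v_i$-utilities by $\pm 1$ without altering $Y^c$ (this is where $\{-1,0,c\}\ADD$ is crucial: every chore has marginal value $-1$ for every agent, so the decomposition of the modified allocation is inherited from that of $Y$). Throughout, I will use two basic facts: $j \in N'$ forces $j \ne n$, hence $j < n$; and by Lemma \ref{lem:3-decomposition} the utility gap is
\[
    u_j - u_n \;=\; c \,+\, |Y^{-1}_n| \,-\, |Y^{-1}_j|,
\]
where $u_h := v_i(Y_h)$.

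For the \textbf{upper bound}, suppose $|Y^{-1}_j| \ge |Y^{-1}_n| + c + 1$, i.e.\ $u_n - u_j \ge 1$. Since $|Y^{-1}_j| \ge 1$, move one chore from $j$ to $n$, producing $Z$ with $v_i(Z_j) = u_j + 1$, $v_i(Z_n) = u_n - 1$, and $Z^c = Y^c$. If $u_n - u_j \ge 2$, the minimum utility strictly increases, contradicting leximin optimality of $Y$. If $u_n - u_j = 1$, the utilities of $j$ and $n$ merely swap, so $\vec s^Z = \vec s^Y$ and $Z$ is still leximin; but $\vec u^Z$ strictly lex-dominates $\vec u^Y$ at coordinate $j < n$, contradicting the choice of $Y$ as a \emph{dominating} leximin allocation.

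For the \textbf{lower bound}, suppose $\max\{|Y^{-1}_j| - c, 0\} < |Y^{-1}_n| - 1$. A short case split on whether $|Y^{-1}_j| \ge c$ yields in both cases that $u_j - u_n \ge 2$ and $|Y^{-1}_n| \ge 2$. Transfer one chore from $n$ to $j$: if $u_j - u_n \ge 3$, the minimum utility strictly rises; if $u_j - u_n = 2$, then both $j$ and $n$ end up with utility $u_n + 1$, strictly lex-improving the sorted utility vector at the position that previously held $u_n$. Either outcome contradicts leximin optimality of $Y$.

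The one genuinely subtle point is the tight upper-bound case $u_n - u_j = 1$: ordinary leximin is insufficient there, and one must invoke the dominating-leximin refinement together with the index constraint $j < n$. The lower bound, by contrast, always produces a strict sorted-vector improvement, so the plain leximin property of $Y$ suffices.
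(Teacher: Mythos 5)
Your proof is correct and takes essentially the same route as the paper's: both hinge on single-chore transfers (which shift utilities by exactly $\pm 1$ under additivity and leave $Y^c$ untouched) and, in the tight upper-bound case $u_n - u_j = 1$, on the dominating-leximin refinement together with $j < n$; the only difference is that you re-derive the non-tight cases from scratch where the paper simply invokes Lemma~\ref{lem:prop-with-one-chore}. One cosmetic imprecision: in the non-tight cases what strictly improves is the \emph{sorted utility vector} (the global minimum may be held by a third, unaffected agent), but the contradiction with leximin optimality stands either way.
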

\begin{proof}
The first inequality comes from Lemma \ref{lem:prop-with-one-chore}. We show the second inequality here. 

Assume for contradiction that there is some $j \in N'$ such that $|Y^{-1}_j| > c + |Y^{-1}_n|$. From Lemma \ref{lem:prop-with-one-chore}, we know that $|Y^{-1}_j| \le c + |Y^{-1}_n| + 1$. Therefore, $|Y^{-1}_j| = c + |Y^{-1}_n| + 1$. Define an allocation $Z$ starting at $Y$ and transferring an item $o$ from $Z^{-1}_j$ to $Z^{-1}_n$. Note that $Y$ and $Z$ have the same sorted utility vector but $Z$ lexicographically dominates $Y$ --- contradicting the fact that $Y$ is a dominating leximin allocation. 
\end{proof}
We can also establish similar bounds for any $j \in N \setminus N'$. The proof is omitted due to its similarity with \Cref{lem:yc1-bounds-nprime}.

\begin{lemma}\label{lem:yc1-bounds-rest}
For all $j \in N \setminus N'$, $|Y^{-1}_n| - 1 \le |Y^{-1}_j| \le |Y^{-1}_n|$.
\end{lemma}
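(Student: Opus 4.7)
The plan is to mirror the proof of Lemma \ref{lem:yc1-bounds-nprime}, now exploiting the stronger identity $|Y^c_j| = |Y^c_n|$ that holds for $j \in N \setminus N'$ by Lemma \ref{lem:ycj-bounds}. The case $j = n$ is trivial, so I would fix $j \in N \setminus N'$ with $j < n$.

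For the lower bound, I would simply invoke Lemma \ref{lem:prop-with-one-chore}. If $|Y^{-1}_n| \ge 1$, applying that lemma with $n$ playing the role of $i$ yields $c|Y^c_n| - |Y^{-1}_n| + 1 \ge c|Y^c_j| - |Y^{-1}_j|$; substituting $|Y^c_j| = |Y^c_n|$ and rearranging gives $|Y^{-1}_j| \ge |Y^{-1}_n| - 1$. If $|Y^{-1}_n| = 0$, the bound is immediate from $|Y^{-1}_j| \ge 0$.

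For the upper bound I would argue by contradiction: assume $|Y^{-1}_j| > |Y^{-1}_n|$. Because all agents share the additive valuation $v_i$ in this instance, any item $o \in Y^{-1}_j$ has marginal value $-1$ to every agent, including $n$. Let $Z$ be obtained from $Y$ by moving one such $o$ from $j$ to $n$, so $v_i(Z_j) = v_i(Y_j) + 1$, $v_i(Z_n) = v_i(Y_n) - 1$, and all other utilities are unchanged. If $|Y^{-1}_j| \ge |Y^{-1}_n| + 2$, then
\begin{align*}
v_i(Z_n) \;=\; c|Y^c_n| - |Y^{-1}_n| - 1 \;>\; c|Y^c_n| - |Y^{-1}_j| \;=\; v_i(Y_j),
\end{align*}
so $\min\{v_i(Z_j), v_i(Z_n)\}$ strictly exceeds $v_i(Y_j)$, forcing $\vec s^Z \succ_{\lex} \vec s^Y$ and contradicting leximin. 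If instead $|Y^{-1}_j| = |Y^{-1}_n| + 1$, then the utilities of $j$ and $n$ swap, so $\vec s^Z = \vec s^Y$ and $Z$ is also leximin. Since the transferred item is $-1$-valued, in the additive setting the (unique) decomposition of $Z$ has $Z^c = Y^c$ and $Z^0 = Y^0$, hence $\vec s^{Z^c} = \vec s^{Y^c}$ and $\vec u^{Z^c} = \vec u^{Y^c}$; at position $j < n$ we have $v_i(Z_j) = v_i(Y_n) > v_i(Y_j)$ with all earlier positions unchanged, so $\vec u^Z \succ_{\lex} \vec u^Y$. By condition (c) of Definition \ref{def:domination}, $Z$ dominates $Y$, contradicting our choice of $Y$ as a dominating leximin allocation.

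The only subtlety is the tie-breaking subcase $|Y^{-1}_j| = |Y^{-1}_n| + 1$, where both the sorted utility vectors and the clean-allocation utility vectors agree between $Y$ and $Z$ and we must appeal to the tertiary comparison $\vec u^Y$ vs.\ $\vec u^Z$; the uniqueness of the $\{-1,0,c\}\ADD$ decomposition makes identifying $Z^c$ and $Z^0$ immediate, so the argument is a direct transcription of the corresponding case in Lemma \ref{lem:yc1-bounds-nprime}.
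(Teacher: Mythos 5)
Your proof is correct and is essentially the argument the paper intends: the paper omits this proof entirely, citing its similarity to Lemma \ref{lem:yc1-bounds-nprime}, and your lower bound via Lemma \ref{lem:prop-with-one-chore} together with the transfer-then-domination argument for the upper bound is the faithful transcription of that proof (including the tie-breaking appeal to condition (c) of Definition \ref{def:domination} when the utilities of $j$ and $n$ merely swap). The only cosmetic difference is that you dispose of the case $|Y^{-1}_j| \ge |Y^{-1}_n| + 2$ by a direct equalizing transfer that improves the sorted utility vector, whereas the template proof first invokes Lemma \ref{lem:prop-with-one-chore} to cap the gap at $1$; both are valid.
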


Let $\ell^{-1}_i$ denote the number of items in $O$ agent $i$ values at $-1$. This value is well-defined since agents are assumed to have additive valuations. 

\begin{lemma}\label{lem:mms-upperbound}
    $\MMS_i \le v_i(Y_n) = |Y^c_n| - \max\{\lceil \frac{\ell^{-1}_i - c|N'|}{n} \rceil, 0\}$.
\end{lemma}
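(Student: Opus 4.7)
The plan is to split the claim into two parts: (i) $Y_n$ achieves the minimum utility among bundles of $Y$, from which $\MMS_i \le v_i(Y_n)$ follows by the max-min property of leximin; and (ii) an exact count of $|Y^{-1}_n|$ yields the closed-form expression.

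For (i), I would verify using \Cref{lem:ycj-bounds,lem:yc1-bounds-nprime,lem:yc1-bounds-rest} that $v_i(Y_j) \ge v_i(Y_n)$ for every $j \in N$. For $j \in N \setminus N'$, \Cref{lem:yc1-bounds-rest} gives $|Y^{-1}_j| \le |Y^{-1}_n|$, and since $|Y^c_j| = |Y^c_n|$, additivity yields $v_i(Y_j) \ge v_i(Y_n)$. For $j \in N'$, \Cref{lem:yc1-bounds-nprime} combined with $|Y^{-1}_n| \ge 0$ implies $|Y^{-1}_j| \le c + |Y^{-1}_n|$, so $v_i(Y_j) = c(|Y^c_n|+1) - |Y^{-1}_j| \ge c|Y^c_n| - |Y^{-1}_n| = v_i(Y_n)$. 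Since $Y$ is leximin for the single-valuation instance in which every agent uses $v_i$, its minimum bundle value $v_i(Y_n)$ is at least the minimum bundle value of any other complete partition of $O$; the supremum of the latter over all partitions is precisely $\MMS_i$, giving $\MMS_i \le v_i(Y_n)$.

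For (ii), set $t = |Y^{-1}_n|$, so $v_i(Y_n) = c|Y^c_n| - t$ by additivity. The upper bounds from (i) give $\ell^{-1}_i = \sum_{j \in N} |Y^{-1}_j| \le nt + c|N'|$, hence $t \ge \max\{\lceil (\ell^{-1}_i - c|N'|)/n \rceil, 0\}$. For the matching upper bound on $t$, I would argue by a leximin-improvement argument: if $|Y^{-1}_j|$ dropped below $t - 1$ for some $j \in N \setminus N'$ (respectively below $c + t - 1$ for some $j \in N'$), then transferring a single $-1$-item from $n$ to $j$ would raise $v_i(Y_n)$ by one while keeping $v_i(Y_j) \ge v_i(Y_n) + 1$, strictly improving the sorted utility vector of $Y$ and contradicting its leximin property. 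Summing the resulting lower bounds gives $\ell^{-1}_i \ge n(t-1) + c|N'|$, pinning down $t$ whenever $(\ell^{-1}_i - c|N'|)/n$ is non-integer. The main obstacle will be the boundary case where this ratio is an integer, since both candidate values of $t$ correspond to leximin allocations with identical sorted utility vectors; here I would invoke the tie-breaking clause of \Cref{def:domination} applied to $\vec u^Y$ to rule out the larger candidate (via transferring a $-1$-item from $n$ to a lower-indexed agent in $N'$), yielding the desired identity.
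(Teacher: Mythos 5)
Your approach is essentially the paper's: part (i) is the one-line ``leximin maximizes the minimum'' observation (made rigorous, as you do, by checking via \Cref{lem:ycj-bounds,lem:yc1-bounds-nprime,lem:yc1-bounds-rest} that $Y_n$ attains the minimum), and part (ii) pins down $|Y^{-1}_n|$ by playing the per-agent upper and lower bounds of \Cref{lem:yc1-bounds-nprime,lem:yc1-bounds-rest} against the total $\ell^{-1}_i=\sum_{j}|Y^{-1}_j|$, exactly as in the paper's two-case contradiction. One correction: the integer boundary case you set aside is spurious. Summing the lower bounds exactly --- agent $n$ itself contributes $t$, not $t-1$ --- gives $\ell^{-1}_i \ge n(t-1)+c|N'|+1$ (for $t\ge 1$), hence $t \le \lfloor(\ell^{-1}_i-c|N'|-1)/n\rfloor+1 = \lceil(\ell^{-1}_i-c|N'|)/n\rceil$, which together with your lower bound determines $t$ in all cases, integer ratio or not; this $+1$ is precisely how the paper's Case~1 reaches $\ell^{-1}_i \ge \ell^{-1}_i+1$. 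The domination tie-break patch is therefore unneeded, and as described it also points the wrong way: moving a $-1$ item from $n$ to a lower-indexed $j\in N'$ decreases entry $j$ of $\vec u^{\hat Y}$, so $\vec u^{\hat Y}$ is lexicographically \emph{smaller} than $\vec u^{Y}$ and cannot witness that the larger candidate for $t$ violates $Y$'s dominance. Drop the patch and keep the $+1$ and the proof is complete.
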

\begin{proof}
The first inequality holds by the definition of a leximin allocation. This proof focuses on the equality in the statement. To show that $v_i(Y_n) = |Y^c_n| - \max\{\lceil \frac{\ell^{-1}_i - c|N'|}{n} \rceil, 0\}$, it suffices to show that $|Y^{-1}_n| = \max\{\lceil \frac{\ell^{-1}_i - c|N'|}{n} \rceil, 0\}$. 

Note that $\sum_{j \in N} |Y^{-1}_j| = \ell^{-1}_i$. Assume for contradiction that $|Y^{-1}_n| \ne \max\{\lceil \frac{\ell^{-1}_i - c|N'|}{n} \rceil, 0\}$. 

\noindent{\textbf{Case 1:} $|Y^{-1}_n| > \max\{\lceil \frac{\ell^{-1}_i - c|N'|}{n} \rceil, 0\}$}. 
Since $|Y^{-1}_n|$ is an integer, we have $|Y^{-1}_n| \ge \max\{\lceil \frac{\ell^{-1}_i - c|N'|}{n} \rceil, 0\} + 1$. This gives us the following sequence of expressions which reach a contradiction:
\begin{align*}
    \ell^{-1}_i &= \sum_{j \in N} |Y^{-1}_j| \\ 
    &= |Y^{-1}_n| + \sum_{j \in N'} |Y^{-1}_j| +  \sum_{j \in (N \setminus N') - i} |Y^{-1}_j| \\
    &\ge \max\bigg \{\bigg \lceil \frac{\ell^{-1}_i - c|N'|}{n} \bigg \rceil, 0 \bigg \} + 1 + \sum_{j \in N'} \bigg (c+\max\bigg \{\bigg \lceil \frac{\ell^{-1}_i - c|N'|}{n} \bigg \rceil, 0 \bigg \} \bigg ) + \sum_{j \in (N \setminus N') - i} \max\bigg \{\bigg \lceil \frac{\ell^{-1}_i - c|N'|}{n} \bigg \rceil, 0 \bigg \} \\
    &= n\max\bigg \{\bigg \lceil \frac{\ell^{-1}_i - c|N'|}{n} \bigg \rceil, 0 \bigg \} + c|N'| + 1 \\
    &\ge n \bigg (\frac{\ell^{-1}_i - c|N'|}{n} \bigg ) + c|N'| + 1 \\
    &= \ell^{-1}_i + 1 .  
\end{align*}
The first inequality uses both Lemmas \ref{lem:yc1-bounds-nprime} and \ref{lem:yc1-bounds-rest}.

\noindent{\textbf{Case 2:} $|Y^{-1}_n| < \max\{\lceil \frac{\ell^{-1}_i - c|N'|}{n} \rceil, 0\}$}
Since $|Y^{-1}_n|$ is an integer, this can be re-written as $|Y^{-1}_n| < \max\{ \frac{\ell^{-1}_i - c|N'|}{n}, 0\}$. We have another sequence of expressions which reaches a contradiction. 

\begin{align*}
    \ell^{-1}_i &= \sum_{j \in N} |Y^{-1}_j| \\ 
    &= |Y^{-1}_n| + \sum_{j \in N'} |Y^{-1}_j| +  \sum_{j \in (N \setminus N') - i} |Y^{-1}_j| \\
    &< \max \bigg \{\frac{\ell^{-1}_i - c|N'|}{n}, 0 \bigg \} + \sum_{j \in N'} \bigg (c+\max\bigg \{\frac{\ell^{-1}_i - c|N'|}{n}, 0 \bigg \}\bigg ) + \sum_{j \in (N \setminus N') - i} \max\bigg \{\frac{\ell^{-1}_i - c|N'|}{n}, 0 \bigg \} \\
    &= n\max\bigg \{\frac{\ell^{-1}_i - c|N'|}{n}, 0 \bigg \} + c|N'| \\
    &= \max\{\ell^{-1}_i, c|N'|\} . 
\end{align*}

Again, the first inequality uses both Lemmas \ref{lem:yc1-bounds-nprime} and \ref{lem:yc1-bounds-rest}. The only way these inequalities can be true is if $\ell^{-1}_i < c|N'|$. However, in this case, $\max\{\lceil \frac{\ell^{-1}_i - c|N'|}{n} \rceil, 0\}$ becomes $0$ which implies $|Y^{-1}_n|$ is negative --- a contradiction.
\end{proof}

Let us now analyze to the leximin allocation $X = X^c \cup X^0 \cup X^{-1}$. Our first result lower bounds the high valued items that agents have.

\begin{lemma}\label{lem:xcj-lower-bound}
$\sum_{j \in N-i} |X^c_j| \ge |N'| + n|Y^c_n| - |X^c_i|$.
\end{lemma}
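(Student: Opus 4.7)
The plan is to reduce the claim to the cleaner equivalent inequality
\begin{equation*}
\sum_{j \in N} |X^c_j| \;\ge\; n|Y^c_n| + |N'|,
\end{equation*}
from which the stated bound follows by subtracting $|X^c_i|$ from both sides. I will introduce $O^c_i = \{o \in O : v_i(\{o\}) = c\}$ and prove the two facts (i) $|O^c_i| \le \sum_{j \in N} |X^c_j|$ and (ii) $|O^c_i| = \sum_{j \in N} |Y^c_j| = n|Y^c_n| + |N'|$; together they immediately yield the displayed inequality.

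Step (i) is the main step and rests on Pareto optimality of any leximin allocation. Pick any $o \in O^c_i$. Since $X$ is a (complete) leximin allocation, $o \in X_h$ for some $h \in N$. If $h = i$, additivity forces $o \in X^c_i$ and we are done; otherwise I will consider the allocation $X'$ obtained by transferring $o$ from $h$ to $i$. By additivity, $v_i(X'_i) - v_i(X_i) = v_i(\{o\}) = c > 0$, while $v_h(X'_h) - v_h(X_h) = -v_h(\{o\}) \in \{-c, 0, 1\}$. If $v_h(\{o\}) < c$ (i.e.\ $v_h(\{o\}) \in \{-1, 0\}$), then $X'$ weakly improves every coordinate and strictly improves $i$'s, which is a Pareto improvement over $X$ and contradicts $X$ being leximin. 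Hence $v_h(\{o\}) = c$, and additivity places $o \in X^c_h$. Because the sets $\{X^c_h\}_{h \in N}$ are pairwise disjoint, this gives $|O^c_i| \le \sum_{j \in N} |X^c_j|$.

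Step (ii) is essentially bookkeeping. In the auxiliary instance where every agent has valuation $v_i$, additivity gives the unique decomposition $Y^c_h = Y_h \cap O^c_i$. Again via Pareto optimality of leximin, no item of $O^c_i$ can be left unallocated in $Y$, since handing such an item to any agent strictly improves that agent's utility without changing anyone else's. Hence $\sum_{h \in N} |Y^c_h| = |O^c_i|$. Finally, Lemma \ref{lem:ycj-bounds} lets me split $N$ into $N'$ (where $|Y^c_j| = |Y^c_n| + 1$) and its complement (where $|Y^c_j| = |Y^c_n|$), so $\sum_{h \in N} |Y^c_h| = |N'|(|Y^c_n|+1) + (n - |N'|)|Y^c_n| = n|Y^c_n| + |N'|$.

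The only place requiring any care is Step (i): I must verify that the single-item transfer really produces a Pareto improvement when $v_h(\{o\}) < c$, which is immediate from additivity of $\{-1, 0, c\}\ADD$ valuations. Everything else is routine counting, so I do not anticipate a substantive obstacle.
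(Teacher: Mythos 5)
Your proposal is correct and follows essentially the same route as the paper: you count the $n|Y^c_n| + |N'|$ items agent $i$ values at $c$ via the auxiliary allocation $Y$, and show every such item must land in some $X^c_h$ because otherwise a single-item transfer to $i$ yields a Pareto improvement contradicting leximin. You merely spell out the bookkeeping (uniqueness of the additive decomposition, disjointness, the split of $N$ via Lemma \ref{lem:ycj-bounds}) that the paper leaves implicit.
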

\begin{proof}
The agent $i$ has exactly $\sum_{j \in N} |Y^c_j| = n|Y^c_n| + |N'|$ items they value at $c$. This comes from the fact that agents have additive valuations. 

Let us denote the set of all these items by $O'$. When these items are allocated, these items must be allocated in the allocation $X^c$. Otherwise, if an item $o \in O'$ is allocated to some agent $j$ in either $X^0_j$ (or $X^{-1}_j$), moving $o$ from $X^0_j$ to $X^{c}_i$ results in an allocation which Pareto dominates $X$ --- contradicting the fact that $X$ is leximin.

Therefore, $\sum_{j \in N} |X^c_j| \ge n|Y^c_n| + |N'|$ and the Lemma follows.
\end{proof}

Our next Lemma shows a crucial lower bound for $v_i(X_i)$ when $|X^{-1}_i| > 0$.

\begin{lemma}\label{lem:x1i-upperbound}
If $|X^{-1}_i| > 0$, $c|Y^c_n| - \ceil*{ \frac{\ell^{-1}_i - c|N'|}{n} }  \le  c|X^c_i| - |X^{-1}_i|$.

\end{lemma}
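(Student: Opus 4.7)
The plan is to bound $v_i(X_i)$ from below using a global sum argument, and then clean up with a ceiling identity using the integrality of $v_i(X_i)$.

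First I would establish the key cross-bundle inequality: for every $j \ne i$, $v_i(X_j) \le v_j(X_j)$. Under additive valuations, this reduces to bounding $v_i(\{o\})$ for each $o \in X_j$. For $o \in X^{-1}_j$, if $v_i(\{o\}) \ne -1$ then reassigning $o$ to $i$ strictly helps $j$ and weakly helps $i$, Pareto-dominating the leximin $X$; so $v_i(\{o\}) = -1 = v_j(\{o\})$. For $o \in X^0_j$, if $v_i(\{o\}) = c$ then reassigning $o$ to $i$ strictly helps $i$ and does not hurt $j$, again a contradiction; so $v_i(\{o\}) \le 0 = v_j(\{o\})$. Items in $X^c_j$ contribute at most $c = v_j(\{o\})$ trivially. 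Combining with $v_j(X_j) = c|X^c_j| - |X^{-1}_j|$ and invoking Lemma \ref{lem:prop-with-one-chore} (which applies since $|X^{-1}_i| > 0$), I obtain $v_i(X_j) \le v_i(X_i) + 1$ for all $j \ne i$.

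Next I would sum over all agents. Since $v_i$ is additive and $X$ is complete, $\sum_{j \in N} v_i(X_j) = v_i(O)$, which yields
\begin{align*}
v_i(O) - v_i(X_i) = \sum_{j \ne i} v_i(X_j) \le (n-1)\bigl(v_i(X_i) + 1\bigr),
\end{align*}
so $v_i(X_i) \ge (v_i(O) - n + 1)/n$. I then compute $v_i(O)$ explicitly: by additivity, $v_i(O) = c \cdot (\text{\# items }i\text{ values at }c) - \ell^{-1}_i$, and by the counting argument embedded in Lemma \ref{lem:xcj-lower-bound}, the number of items $i$ values at $c$ is $\sum_{j \in N} |Y^c_j| = n|Y^c_n| + |N'|$. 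Thus $v_i(O) = cn|Y^c_n| + c|N'| - \ell^{-1}_i$, giving
\begin{align*}
v_i(X_i) \;\ge\; c|Y^c_n| + \frac{c|N'| - \ell^{-1}_i - n + 1}{n}.
\end{align*}

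Finally, since $v_i(X_i)$ is an integer, the right-hand side can be promoted to its ceiling: $v_i(X_i) \ge c|Y^c_n| + \lceil (c|N'| - \ell^{-1}_i - n + 1)/n \rceil$. A short arithmetic identity, writing $m = c|N'| - \ell^{-1}_i = qn + r$ with $0 \le r < n$, shows $\lceil (m - n + 1)/n \rceil = q = \lfloor m/n \rfloor = -\lceil -m/n \rceil$. Substituting gives exactly $v_i(X_i) \ge c|Y^c_n| - \lceil (\ell^{-1}_i - c|N'|)/n \rceil$, which is the claim since $v_i(X_i) = c|X^c_i| - |X^{-1}_i|$.

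The main obstacle is the first step: carefully ruling out, via Pareto-domination arguments against the leximin $X$, the possibility that any item in $X^{-1}_j$ or $X^0_j$ (for $j \ne i$) is valued more favorably by $i$ than by $j$. The rest is a global averaging argument followed by a ceiling manipulation, where the slack of $n-1$ in the averaging inequality is exactly absorbed by the integrality of $v_i(X_i)$.
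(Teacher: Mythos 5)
Your proposal is correct and takes essentially the same route as the paper: both proofs combine Lemma \ref{lem:prop-with-one-chore} with the count of the items agent $i$ values at $c$ (namely $n|Y^c_n|+|N'|$, the counting step inside Lemma \ref{lem:xcj-lower-bound}) and the fact that every item in some $X^{-1}_j$ is valued at $-1$ by $i$, sum over agents, and finish with the same integrality/ceiling manipulation. The only difference is packaging --- you sum the utility-level inequality $v_i(X_j) \le v_j(X_j) \le v_i(X_i)+1$ against $v_i(O)$, whereas the paper sums the equivalent count inequalities directly; after rearranging, the two yield the identical penultimate inequality.
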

\begin{proof}
From Lemma \ref{lem:prop-with-one-chore}, for any agent $j \in N - i$ we have,

\begin{align*}
    |X^{-1}_j| \ge c|X^c_j| - c|X^c_i| + |X^{-1}_i| - 1.
\end{align*}
Summing over all $j \in N-i$ and adding $|X^{-1}_i|$ on both sides we get
\begin{align*}
    \sum_{j \in N} |X^{-1}_j| \ge \sum_{j \in N-i} c|X^c_j| - (n-1)c|X^c_i| + n|X^{-1}_i| - (n-1).
\end{align*}
Plugging in Lemma \ref{lem:xcj-lower-bound}, we get
\begin{align*}
    \sum_{j \in N} |X^{-1}_j| \ge c|N'| + nc|Y^c_n| - nc|X^c_i| + n|X^{-1}_i| - n + 1.
\end{align*}
Note that $\sum_{j \in N} |X^{-1}_j| \le \ell^{-1}_i$. If this is not true, there must be some item $o'$ which agent $i$ values non-negatively but $o'$ is allocated to an agent $j$ who values it at $-1$. Transferring $o'$ from $j$ to $i$ will result in a Pareto improvement --- contradicting the fact that $X$ is leximin. Therefore, we have
\begin{align*}
    \ell^{-1}_i \ge c|N'| + n \big [c|Y^c_n| - c|X^c_i| + |X^{-1}_i| - 1 \big ] + 1.
\end{align*}
Re-arranging this, 
\begin{align*}
     c|X^c_i| - |X^{-1}_i|   \ge  c|Y^c_n| - \bigg (\frac{\ell^{-1}_i - c|N'| - 1}{n} + 1\bigg ).
\end{align*}
Since $c|X^c_i| - |X^{-1}_i|$ is an integer, we can replace $\frac{\ell^{-1}_i - c|N'| - 1}{n}$ with its floor value.
\begin{align*}
     c|X^c_i| - |X^{-1}_i|   \ge  c|Y^c_n| - \bigg (\bigg \lfloor\frac{\ell^{-1}_i - c|N'| - 1}{n} \bigg \rfloor + 1\bigg ).
\end{align*}
With a little algebra, we can show that $\lfloor\frac{\ell^{-1}_i - c|N'| - 1}{n} \rfloor + 1 \le \lceil \frac{\ell^{-1}_i - c|N'|}{n} \rceil$. This completes the proof.
\end{proof}

Our next result takes care of the case where $|X^{-1}_i| = 0$. Before we present it, we require the following definition.

\begin{definition}[Balanced]
An allocation is said to be {\em balanced} if for all $i, j \in N$, if $|X^c_i| + 2 \le |X^c_j|$, then for all $o \in X^c_j$ $v_i(\{o\}) \le 0$.
\end{definition}
The balanced property is very useful when analyzing allocations since it makes counting the number of high valued items easier.
All leximin allocations are not balanced but for every leximin allocation there exists a balanced leximin allocation with the same utility vector. 

\begin{lemma}\label{lem:leximin-balanced}
When agents have $\{-1, 0, c\}\ADD$ valuations, for any leximin allocation $X = X^c \cup X^0 \cup X^{-1}$, there exists a balanced leximin allocation $Z = Z^c \cup Z^0 \cup Z^{-1}$ with the same utility vector as $X$.
\end{lemma}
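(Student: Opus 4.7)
The plan is to start with $Z := X$ and repeatedly apply a utility-preserving swap that strictly decreases the potential $\Phi(Z) = \sum_{i \in N} |Z^c_i|^2$. Whenever $Z$ fails balancedness, pick $i, j \in N$ and $o \in Z^c_j$ with $|Z^c_i| + 2 \le |Z^c_j|$ and $v_i(\{o\}) = c$ (under $\{-1,0,c\}$ additivity, $v_i(\{o\}) \le 0$ is equivalent to $v_i(\{o\}) \ne c$). The swap will move $o$ from $j$ to $i$ together with $c$ chores from $Z^{-1}_j$ to $Z^{-1}_i$.

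The first key step is to guarantee $|Z^{-1}_j| \ge c$, so that enough chores exist to fund the swap. If $|Z^{-1}_i| > 0$, \Cref{lem:prop-with-one-chore} gives $c|Z^c_j| - |Z^{-1}_j| \le c|Z^c_i| - |Z^{-1}_i| + 1$, which rearranges to $|Z^{-1}_j| \ge c(|Z^c_j| - |Z^c_i|) + |Z^{-1}_i| - 1 \ge 2c$. If $|Z^{-1}_i| = 0$, the existence of $o \in Z^c_j$ with $v_i(\{o\}) = c$ activates \Cref{lem:prop-with-no-chore}, giving $c|Z^c_j| - |Z^{-1}_j| \le c|Z^c_i| + c$ and hence $|Z^{-1}_j| \ge c(|Z^c_j|-|Z^c_i|) - c \ge c$.

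Next the swap must preserve utilities and yield a valid decomposition. Let $S \subseteq Z^{-1}_j$ have size $c$. For each $o' \in Z^{-1}_j$, $v_i(\{o'\}) = -1$, since otherwise transferring $o'$ alone from $j$ to $i$ would Pareto-improve $Z$ and contradict leximin (the same observation is used in the proof of \Cref{prop:prop1}). Thus $o$ enters $Z^c_i$ and every element of $S$ enters $Z^{-1}_i$, respecting the unique additive decomposition of the new allocation. Agent $i$ gains $c$ from $o$ and loses $c$ from $S$; agent $j$ loses $c$ from $o$ and gains $c$ from $S$; all other agents are unaffected. So $\vec u^Z$ is preserved, and the updated $Z$ is still leximin.

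Finally, the change in potential is $(|Z^c_i|+1)^2 - |Z^c_i|^2 + (|Z^c_j|-1)^2 - |Z^c_j|^2 = 2(|Z^c_i| - |Z^c_j| + 1) \le -2$, so $\Phi$ drops by at least $2$ each iteration. Since $\Phi$ is a non-negative integer bounded by $m^2$, the process halts after $O(m^2)$ swaps at a balanced leximin allocation $Z$ with $\vec u^Z = \vec u^X$. The main obstacle is the chore-count bound $|Z^{-1}_j| \ge c$ that enables each swap, which is precisely why \Cref{lem:prop-with-one-chore} and \Cref{lem:prop-with-no-chore} were established earlier.
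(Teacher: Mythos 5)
Your proof is correct and follows essentially the same route as the paper's: the same swap (one $c$-valued item plus a $c$-sized set of chores moved from $j$ to $i$), the same observation that every item of $Z^{-1}_j$ is a $-1$-item for $i$, and the same quadratic potential $\sum_h |Z^c_h|^2$ for termination. The only cosmetic difference is that you obtain the chore-availability bound $|Z^{-1}_j| \ge c$ by invoking \Cref{lem:prop-with-one-chore} and \Cref{lem:prop-with-no-chore}, whereas the paper rules out $|X^{-1}_j| < c$ via an inline leximin-contradiction argument that carries the same content.
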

\begin{proof}
We use a similar argument to Lemma \ref{lem:xc-equals-yc}. Assume there exists two agents $i, j \in N$ such that $|X^c_i| + 2 \le |X^c_j|$ and $v_i(\{o\}) = c$ for some $o \in X^c_j$. 

Construct the allocation $Z^c$ starting at $X^c$ and moving $o$ from $Z^c_j$ to $Z^c_i$. 
Note that $\sum_{h \in N} |X^c_h|^2 > \sum_{h \in N} |Z^c_h|^2$.
We have two cases. 

If $|X^{-1}_j| < c$, define the allocation $Z$ as $Z^c \cup X^0 \cup X^{-1}$. We have the following equations for each agent $h \in N$:
\begin{enumerate}[(i)]
    \item $v_h(Z_h) = c|Z^c_h| - |X^{-1}_h| = c|X^c_h| - |X^{-1}_h| = v_h(X_h)$ for all $h \in N - i - j$,
    \item $v_i(Z_i) = c|Z^c_i| - |X^{-1}_i| = c(|X^c_i|+1) - |X^{-1}_i| > v_i(X_i)$, and
    \item $v_j(Z_j) = c|Z^c_j| - |X^{-1}_j| = c(|X^c_j|-1) - |X^{-1}_j| = v_j(X_j) - c$.
\end{enumerate}
When comparing $Z$ and $X$, all the agents in $N-i-j$ weakly gain utility, $i$ strictly gains utility and $j$ loses utility by $c$. However, note that

$$v_j(Z_j) = c(|X^c_j|-1) - |X^{-1}_j| \ge c(|X^c_j| + 1) - |X^{-1}_j| > c(|X^c_i| + 1) - c \ge v_i(X_i).$$

This contradicts the fact that $X$ is leximin since $j$ is the only agent to lose utility in $Z$ but still has a higher utility than $i$ under $X$.

Therefore, it must be the case that $|X^{-1}_j| \ge c$. Let $S$ be a $c$-sized subset of $X^{-1}_j$. Consider the allocation $Z^c \cup C^0 \cup Z^{-1}$ where $Z^{-1}$ is defined as follows:
\begin{align*}
    Z^{-1}_k = 
    \begin{cases}
        X^{-1}_k & (k \in N - i - j) \\
        X^{-1}_j \setminus S & (k = j) \\
        X^{-1}_i \cup S & (k = i)
    \end{cases}
    .
\end{align*}
Define $Z = Z^c \cup X^0 \cup Z^{-1}$.
We can set up a similar set of equations for each agent $h \in N$.
\begin{enumerate}[(i)]
    \item $v_h(Z^c_h) = c|Z^c_h| - |Z^{-1}_h| = c|X^c_h| - |X^{-1}_h| = v_h(X_h)$ for all $h \in N - i - j$,
    \item $v_i(Z_i) = c|Z^c_i| - |Z^{-1}_i| = c(|X^c_i|+1) - |X^{-1}_i| - c = v_i(X_i)$, and
    \item $v_j(Z_j) \ge c|Z^c_j| - |Z^{-1}_j| = c(|X^c_j|-1) - |X^{-1}_j| + c = v_j(X_j)$.
\end{enumerate}
$Z$ and $X$ have the same utility vector, implying that they are both leximin. 

However, $Z$ still may not satisfy the requirement of this lemma. What we have shown is that whenever there exists an allocation $X$ which does not satisfy the requirement, we can make simple transfers to create another allocation $Z$ with the same utility vector such that $\sum_{h \in N} |X^c_h|^2 > \sum_{h \in N} |Z^c_h|^2$. Since $\sum_{h \in N} |X^c_h|^2$ is a finite integer lower bounded at $0$, we can repeatedly make the transfers described in this proof to end with an allocation $Z$ that satisfies the requirements of the Lemma.
\end{proof}

We are now ready to prove our main result.

\begin{prop}
When agents have $\{-1, 0, c\}\ADD$ valuations, let $X$ be any leximin allocation. For all agents $i \in N$, $v_i(X_i) \ge \MMS_i$.
\end{prop}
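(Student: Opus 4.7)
The plan is to reduce to a balanced leximin allocation via Lemma~\ref{lem:leximin-balanced}, fix an arbitrary agent $i$, and split on whether $i$ receives a chore. Throughout, keep $Y$ as the dominating leximin allocation for the uniform instance where every agent has valuation $v_i$; Lemma~\ref{lem:mms-upperbound} then bounds $\MMS_i$ by $v_i(Y_n) = c|Y^c_n| - \max\{\lceil (\ell^{-1}_i - c|N'|)/n \rceil,\, 0\}$, and the decomposition identity $v_i(X_i) = c|X^c_i| - |X^{-1}_i|$ holds since valuations are additive.

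Case~1 ($|X^{-1}_i| > 0$) should fall out by composition. Lemma~\ref{lem:x1i-upperbound} already gives $v_i(X_i) = c|X^c_i| - |X^{-1}_i| \ge c|Y^c_n| - \lceil (\ell^{-1}_i - c|N'|)/n \rceil$, and the elementary inequality $-x \ge -\max\{x,0\}$ combined with Lemma~\ref{lem:mms-upperbound} immediately yields $v_i(X_i) \ge \MMS_i$.

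Case~2 ($|X^{-1}_i| = 0$) is where the real work lies. Here $v_i(X_i) = c|X^c_i|$ and $\MMS_i \le c|Y^c_n|$, so it suffices to prove $|X^c_i| \ge |Y^c_n|$. The plan is a double counting argument powered by balancedness: partition $N$ into $N_1 = \{j \in N : |X^c_j| \le |X^c_i| + 1\}$ (which contains $i$) and $N_2 = N \setminus N_1$. Balancedness guarantees that every item in $\bigcup_{j \in N_2} X^c_j$ is valued at most $0$ by $i$, while the Pareto-exchange argument used in the proof of Lemma~\ref{lem:xcj-lower-bound} says that every item $i$ values at $c$ must sit in some $X^c_j$ (otherwise move it to $X^c_i$ for a Pareto improvement). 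Combining these, all $n|Y^c_n| + |N'|$ items that $i$ values at $c$ are packed into $\bigcup_{j \in N_1} X^c_j$, yielding $n|Y^c_n| + |N'| \le \sum_{j \in N_1} |X^c_j| \le |N_1|(|X^c_i| + 1) \le n(|X^c_i| + 1)$. Assuming $|X^c_i| + 1 \le |Y^c_n|$ would force every inequality in this chain to be an equality; in particular $|X^c_j| = |X^c_i| + 1$ for every $j \in N_1$, and specializing to $j = i$ gives $|X^c_i| = |X^c_i| + 1$, a contradiction.

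The main obstacle is Case~2: it really does require recognising that balancedness is the precise tool that walls off $N_2$ from the $c$-valued items of $i$, after which the counting inequality collapses onto agent $i$'s own bundle and produces the contradiction. Case~1 is then just a careful repackaging of the bounds already proved in Lemmas~\ref{lem:mms-upperbound} and~\ref{lem:x1i-upperbound}, and the arbitrariness of $i$ completes the proof.
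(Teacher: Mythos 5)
Your proposal is correct and follows essentially the same route as the paper: reduce to a balanced leximin allocation, compare against the dominating leximin allocation $Y$ for the uniform-$v_i$ instance, dispatch the $|X^{-1}_i|>0$ case directly from Lemmas~\ref{lem:mms-upperbound} and~\ref{lem:x1i-upperbound}, and in the $|X^{-1}_i|=0$ case use balancedness to count the $n|Y^c_n|+|N'|$ items that $i$ values at $c$ and conclude $|X^c_i|\ge|Y^c_n|$. The only difference is cosmetic: the paper closes Case~2 with a pigeonhole argument (some agent would hold $|X^c_i|+2$ of $i$'s $c$-valued items), while you sum over the set $N_1$ and force equalities; both are the same counting idea.
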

\begin{proof}
Let $Z= Z^c \cup Z^0 \cup Z^{-1}$ be a balanced leximin allocation with the same utility vector as $X$. If we show that $Z$ guarantees each agent their maxmin share, then it implies that $X$ guarantees each agent their maxmin share as well since they have the same utility vector.

Fix an agent $i \in N$. Let $Y = Y^c \cup Y^0 \cup Y^{-1}$ be a dominating leximin allocation where all agents have the valuation function $v_i$. We have two cases.

\noindent\textbf{Case 1:} $|Z^{-1}_i| > 0$. The proof for this case is almost obvious from \Cref{lem:x1i-upperbound}. We have 
\begin{align*}
    \MMS_i \le c|Y^c_n| - \max\bigg \{\bigg \lceil \frac{\ell^{-1}_i - c|N'|}{n} \bigg \rceil, 0\bigg \} \le c|Y^c_n| - \bigg \lceil \frac{\ell^{-1}_i - c|N'|}{n} \bigg \rceil \le c|Z^c_i| - |Z^{-1}_i| = v_i(Z_i).
\end{align*}
The first equality follows from Lemma \ref{lem:mms-upperbound}. The final inequality follows from Lemma \ref{lem:x1i-upperbound}.

\noindent\textbf{Case 2:} $|Z^{-1}_i| = 0$. In this case, we need to show that $c|Z^c_i| \ge  c|Y^c_n| - \max \{ \lceil \frac{\ell^{-1}_i - c|N'|}{n}  \rceil, 0\}$.

The number of items that agent $i$ values at $c$ is equal to $n|Y^c_n| + |N'|$. If $|Z^c_i| < |Y^c_n|$, using the pigeonhole principle, there must be some agent $j \in N$ with at least $|Z^c_i| + 2$ of these items in the allocation $Z$. All of these $|Z^c_i| + 2$ items must be in $|Z^c_j|$ since otherwise, we can make simple transfers to Pareto dominate the allocation $X$. This contradicts the assumption that $Z$ is balanced. 

Therefore, it must be the case that $|Z^c_i| \ge |Y^c_n|$. Note that this trivially implies that $v_i(Z_i) \ge \MMS_i$
\begin{align*}
    \MMS_i \le c|Y^c_n| - \bigg \lceil \frac{\ell^{-1}_i - c|N'|}{n} \bigg \rceil \le c|Y^c_n| \le c|Z^c_i| = v_i(Z_i).
\end{align*}
\end{proof}

\subsection{Lorenz Dominance and Nash Welfare}
An allocation $X$ is {\em Lorenz dominating} \citep{Babaioff2021Dichotomous} if for all allocations $Y$ and all $k \in [n]$, it holds that $\sum_{j \in [k]} \vec s^X_j \ge \sum_{j \in [k]} \vec s^Y_j$ where $\vec s^X$ is the sorted utility vector of the allocation $X$ (defined in Section \ref{sec:fairness-objectives}).



We show that all leximin allocations are Lorenz dominating when agents have $\{-1, 0, c\}\ONSUB$ valuations.

\begin{prop}
When agents have $\{-1, 0, c\}\ONSUB$ valuations, all leximin allocations are Lorenz dominating.
\end{prop}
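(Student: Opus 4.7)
Proof plan: I would argue by contradiction. Suppose $X$ is a leximin allocation but fails to Lorenz dominate some allocation $Y$; let $k$ be the smallest index with $\sum_{j \in [k]} \vec s^X_j < \sum_{j \in [k]} \vec s^Y_j$. Corollary \ref{corr:max-usw} gives $\sum_{j \in [n]} \vec s^X_j \ge \sum_{j \in [n]} \vec s^Y_j$, so $k < n$, and the minimality of $k$ forces $\vec s^X_k < \vec s^Y_k$. Since all leximin allocations share the same sorted utility vector, it suffices to establish Lorenz dominance for a single leximin allocation, and I would take $X$ to be the output of \Cref{algo:leximin}, so that by Lemma \ref{lem:algo-decomposition} I have the decomposition $X = X^c \cup X^0 \cup X^{-1}$ with $v_i(X_i) = c|X^c_i| - |X^{-1}_i|$. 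I would decompose $Y$ analogously via Lemma \ref{lem:3-decomposition}.

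The plan is to construct an allocation $\hat X$ that lex-dominates $X$ in the leximin order, contradicting the leximin property of $X$. Let $i$ be an agent achieving $v_i(X_i) = \vec s^X_k$. Because the bottom-$k$ utility total in $Y$ strictly exceeds that of $X$, an averaging argument produces an agent $p$ with $v_p(Y_p) > v_i(X_i)$ whose $Y$-bundle exhibits the surplus either as an extra $c$-valued item (relative to $X_i$'s composition) or as a missing chore that $X_i$ currently holds. Applying Lemma \ref{lem:augmentation-sufficient-weighted} together with Theorems \ref{thm:Pareto-improving-paths} and \ref{thm:exchange-paths}, these discrepancies should guarantee a path in the weighted exchange graph of $X$ whose augmentation either transfers a $c$-item to $i$ from a strictly higher-utility agent, or relocates a chore out of $X_i$ to such an agent. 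Lemmas \ref{lem:prop-with-one-chore} and \ref{lem:prop-with-no-chore} then guarantee that the donor agent's post-transfer utility remains strictly above $v_i(X_i)$: a chore-free donor loses at most $c$ units of utility and was at most $c$ above $v_i(X_i)$, while a chore-holding donor has utility at most $v_i(X_i) + 1$ and is unaffected by the augmentation of $c$-items. The resulting allocation has a sorted utility vector lex-dominating $\vec s^X$, contradicting the choice of $X$.

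The main obstacle I expect is making the path augmentation argument fully rigorous. The weighted exchange graph intertwines $X^c$, $X^0$, and $X^{-1}$, and naive transfers may disrupt the decomposition invariants on which Lemma \ref{lem:algo-decomposition} depends. I would address this by restricting to least-weight paths (as in Theorems \ref{thm:Pareto-improving-paths} and \ref{thm:exchange-paths}) to preserve cleanness with respect to both $\{\beta^c_h\}$ and $\{\beta^0_h\}$, and by invoking the tight structural inequalities of Lemmas \ref{lem:prop-with-one-chore} and \ref{lem:prop-with-no-chore} to pin down donor utilities. The most delicate sub-step will be extracting the right donor agent $p$ from the Lorenz violation alone; a clean way to do this is to observe that since $v_p(Y_p) > v_i(X_i)$ and $v_i(X_i) = c|X^c_i| - |X^{-1}_i|$, the decomposition of $Y_p$ must either contain a $c$-item outside of $\bigcup_{h} X^c_h$-items valued at $c$ by $i$, or exclude a chore held by $i$, either of which is precisely the ingredient needed to trigger the desired path in the weighted exchange graph.
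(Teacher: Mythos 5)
Your overall strategy diverges from the paper's in a way that leaves a real gap. The paper argues by taking a \emph{minimal} blocking allocation $Y$ (minimizing $\sum_{h}\bigl||Y^c_h|-|X^c_h|\bigr|$, then $\sum_{h}\bigl||Y^{-1}_h|-|X^{-1}_h|\bigr|$) and perturbing $Y$ toward $X$: reusing the case analysis of Lemma~\ref{lem:xc-equals-yc} and Theorem~\ref{thm:leximin}, any discrepancy between the decompositions of $Y$ and $X$ yields a new allocation $\hat Y$ that still blocks $X$ but is strictly closer, contradicting minimality; once $|Y^c_h|=|X^c_h|$ and $|Y^{-1}_h|\ge|X^{-1}_h|$ for all $h$, $Y$ cannot block. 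You instead try to perturb $X$ itself, producing $\hat X$ with $\vec s^{\hat X}\succ_{\lex}\vec s^X$. That is valid logic only if the blocking hypothesis actually forces such an improving augmentation to exist, and this is exactly the step you have not supplied: since $X$ \emph{is} leximin, no such $\hat X$ exists, and a Lorenz violation does not by itself imply one (if it did, the proposition would be a triviality about leximin rather than a structural fact about this valuation class). Your burden is to show that the cumulative inequality $\sum_{j\in[k]}\vec s^Y_j>\sum_{j\in[k]}\vec s^X_j$ produces the hypotheses of Lemma~\ref{lem:augmentation-sufficient} or Lemma~\ref{lem:augmentation-sufficient-weighted}, and it does not.

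Concretely, the donor-extraction step is unsound. From $v_p(Y_p)>v_i(X_i)$ you conclude that ``the decomposition of $Y_p$ must either contain a $c$-item valued at $c$ by $i$ or exclude a chore held by $i$,'' but the inequality compares two different agents' utilities for two different bundles and says nothing about the composition of $Y_p$ relative to items agent $i$ cares about: $Y_p$ could consist entirely of items that $i$ values at $-1$. What the path-existence lemmas actually require is a bundle-size discrepancy of the form $|X^c_\ell|<|Y^c_\ell|$ for the agent $\ell$ to be improved (under the \emph{same} binary valuations $\{\beta^c_h\}$), not a utility comparison across agents. Moreover, $Y$ may achieve its cumulative advantage at index $k$ by a redistribution spread across many agents --- for example by lowering $\vec s^Y_j$ below $\vec s^X_j$ at some $j<k$ --- so no single path augmentation of $X$ mirrors it; this is precisely why the paper works with a potential-function argument on $Y$ rather than a one-shot transfer on $X$. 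If you want to salvage your route, you would need to first prove the componentwise facts the paper extracts from Lemma~\ref{lem:xc-equals-yc} and Theorem~\ref{thm:leximin} (that any blocking $Y$ can be assumed to satisfy $|Y^c_h|=|X^c_h|$ for all $h$, and then $|Y^{-1}_h|\ge|X^{-1}_h|$ for all $h$), at which point you have essentially reproduced the paper's proof and the path augmentation on $X$ is no longer needed.
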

\begin{proof}
Since all leximin allocations have the same sorted utility vector, showing that one leximin allocation is Lorenz dominating is equivalent to showing all leximin allocations are Lorenz dominating. Let $X = X^c \cup X^0 \cup X^{-1}$ be the leximin allocation computed by Algorithm \ref{algo:leximin}.

Assume for contradiction that $X$ is not Lorenz dominating. We will say that allocation $Y$ {\em blocks} $X$ from being Lorenz dominating if there exists a $k \in [n]$ such that $\sum_{j \in [k]} \vec s^Y_j > \sum_{j \in [k]} \vec s^X_j$.
If $X$ is not Lorenz dominating, there must be at least one blocking allocation $Y$. 
If there are multiple such blocking allocations $Y$, choose $Y = Y^c \cup Y^0 \cup Y^{-1}$ such that $\sum_{h \in N} \abs{|Y^c_h| - |X^c_h|}$ is minimized; further break ties by choosing $Y$ such that $\sum_{h \in N} \abs{|Y^{-1}_h| - |X^{-1}_h|}$ is minimized. 

If for any $i \in N$, $|X^c_i| \ne |Y^c_i|$, we can use the same analysis from Lemma \ref{lem:xc-equals-yc} to construct an allocation $\hat Y = \hat{Y}^c \cup \hat{Y}^0 \cup \hat{Y}^{-1}$ such that $\hat Y$ blocks $X$ and $\sum_{h \in N} \abs{|Y^c_h| - |X^c_h|} > \sum_{h \in N} \abs{|\hat{Y}^c_h| - |X^c_h|}$ --- contradicting our choice of $Y$.

Therefore, for all $i \in N$, we can assume $|X^c_i| = |Y^c_i|$. If for all $i \in N$, $|X^{-1}_i| \le |Y^{-1}_i|$ as well, we are done since there is no way $Y$ can block $X$. 

So assume $|X^{-1}_i| > |Y^{-1}_i|$ for some $i \in N$. Since we assumed $X$ was computed using Algorithm \ref{algo:leximin}, we know that $\sum_{h \in N} |X^{-1}_h| \le \sum_{h \in N} |Y^{-1}_h|$. Therefore, there must exist some $j \in N$ such that $|X^{-1}_j| < |Y^{-1}_j|$. We can now use analysis from \Cref{thm:leximin} to construct another allocation $\hat Y = \hat{Y}^c \cup \hat{Y}^0 \cup \hat{Y}^{-1}$ such that $\hat Y$ blocks $X$ and the following two conditions hold:
\begin{enumerate}[(i)]
    \item $\sum_{h \in N} \abs{|Y^c_h| - |X^c_h|} = \sum_{h \in N} \abs{|\hat{Y}^c_h| - |X^c_h|}$, and
    \item $\sum_{h \in N} \abs{|Y^{-1}_h| - |X^{-1}_h|} > \sum_{h \in N} \abs{|\hat{Y}^{-1}_h| - |X^{-1}_h|}$.
\end{enumerate}
This again contradicts our choice of $Y$. From this we can conclude that a blocking allocation $Y$ must not exist, and therefore $X$ is Lorenz dominating.
\end{proof}

Lorenz dominance also implies several other fairness notions as shown in the following Theorem. Similar results are well-known for the specific case of non-negative utilities (see for example, \citet{arnold1987lorenz}). We generalize it to real valued utilities, so we can show an interesting connection to other well-known fairness objectives like $p$-mean welfare and Nash welfare.

\begin{theorem}[Lorenz Dominance and Welfare Optimality]
\label{thm:lorenz-conseqences}
Suppose that an allocation $X$ is Lorenz-dominant w.r.t.\
some class of allocations. 
Given a concave monotonically-increasing function $f: \R \to \R$, define the 
symmetric welfare concept $\Welfare_{f}(\vec{u}^{X}) \doteq \sum_{i=1}^{N} f({u}^{X}_{i})$.
The following hold.
\begin{enumerate}
\item \label{thm:lorenz-conseqences:ul2w} $X$ is a $\Welfare_{f}$ maximizing solution. 

\item \label{thm:lorenz-conseqences:uw2l} If $X'$ is a $\Welfare_{f}$ maximizing solution, and $f$ is strictly concave, then $X'$ is Lorenz-dominant.

\end{enumerate}
Furthermore, given a concave monotonically-increasing function $g: \R_{0+} \to \R \cup \{-\infty\}$, define the 
welfare concept $\Welfare_{g}(\vec{u}^{X}) \doteq \sum_{i=1}^{N} g({u}^{X}_{i})$.
The following then hold.
\begin{enumerate}[resume]
\item \label{thm:lorenz-conseqences:bexist} 
A $\Welfare_{g}$ maximizing solution exists iff $s^X_1 \geq 0$ (i.e., iff the welfare of any solution is defined).

\item \label{thm:lorenz-conseqences:bl2w} If a $\Welfare_{g}$ maximizing solution exists, then $X$ is a $\Welfare_{g}$ maximizing solution.

\item \label{thm:lorenz-conseqences:bw2l} If $X'$ is a $\Welfare_{g}$ maximizing solution, and $g$ is strictly concave, then $X'$ is Lorenz-dominant.

\item \label{thm:lorenz-conseqences:bnl2w} If $j$ is the index of the first non-negative entry of the Lorenz-dominant sorted utility vector $\vec{s}^{X}$, then among all feasible $Y$ such that $\forall i < j$: $\vec{s}^{Y}_{i} \geq \vec{s}^{X}_{i}$ [alternatively, $\vec{s}^{Y}_{1:j-1} \succeq \vec{s}^{X}_{1:j-1}$], 
$X$ maximizes $\sum_{ \{i \in N | \vec{s}^{X}_{i} \geq 0 \} } g(\vec{s}^{X}_{i})$.
\end{enumerate}
%
%
\end{theorem}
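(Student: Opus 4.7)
The plan is to derive all six parts from a single classical majorization-theoretic lemma, which I would prove up front. The lemma: if $\vec a, \vec b \in \R^n$ are both sorted in ascending order and satisfy $\sum_{i=1}^k a_i \geq \sum_{i=1}^k b_i$ for every $k \in [n]$, and $f:\R\to\R$ is concave and non-decreasing, then $\sum_i f(a_i) \geq \sum_i f(b_i)$; if moreover $f$ is strictly concave, the inequality is strict whenever $\vec a \neq \vec b$. I would prove it by choosing supergradients $\mu_i \in \partial f(a_i)$, which are non-negative (monotonicity) and non-increasing in $i$ (concavity combined with the ascending order of $\vec a$). Concavity gives $f(a_i) - f(b_i) \geq \mu_i(a_i - b_i)$, and setting $D_k := \sum_{i \leq k}(a_i - b_i) \geq 0$, Abel summation yields
\[
\sum_i \mu_i(a_i - b_i) \;=\; \mu_n D_n + \sum_{k=1}^{n-1}(\mu_k - \mu_{k+1})D_k \;\geq\; 0,
\]
a sum of non-negative products. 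Strict concavity promotes the concavity estimate to strict on every index with $a_i \neq b_i$, yielding the strict version.

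Parts~\ref{thm:lorenz-conseqences:ul2w} and \ref{thm:lorenz-conseqences:uw2l} follow immediately. For Part~\ref{thm:lorenz-conseqences:ul2w}, Lorenz dominance of $X$ furnishes the partial-sum hypothesis against every feasible $\vec s^Y$, so the lemma gives $\Welfare_f(X) \geq \Welfare_f(Y)$. For Part~\ref{thm:lorenz-conseqences:uw2l}, any $\Welfare_f$-maximizer $X'$ satisfies $\Welfare_f(X') = \Welfare_f(X)$; the strict form of the lemma (applied with $\vec a = \vec s^X, \vec b = \vec s^{X'}$) then forces $\vec s^{X'} = \vec s^X$, so $X'$ inherits Lorenz dominance.

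Parts~\ref{thm:lorenz-conseqences:bexist}--\ref{thm:lorenz-conseqences:bw2l} adapt the same arguments to the partially-defined $g$. Lorenz dominance at $k=1$ gives $\vec s^X_1 = \max_Y \vec s^Y_1$, so $\vec s^X_1 < 0$ forces every feasible $Y$ to contain a negative-utility agent and makes $\Welfare_g \equiv -\infty$ (no maximizer), whereas $\vec s^X_1 \geq 0$ makes $\Welfare_g(X)$ finite; this settles Part~\ref{thm:lorenz-conseqences:bexist}. For Parts~\ref{thm:lorenz-conseqences:bl2w} and \ref{thm:lorenz-conseqences:bw2l}, any $Y$ with a negative utility contributes $-\infty$ and is trivially dominated by $X$, and for $Y$ with all non-negative utilities the key lemma applies with $f := g$ on $[0,\infty)$, giving both maximization and (under strict concavity) uniqueness of the sorted utility vector.

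Part~\ref{thm:lorenz-conseqences:bnl2w} is the expected main obstacle. The first step is to combine the hypothesis $\vec s^Y_i \geq \vec s^X_i$ for $i < j$ with Lorenz dominance at $k = j-1$ (which supplies the reverse inequality $\sum_{i<j}\vec s^X_i \geq \sum_{i<j}\vec s^Y_i$) to force componentwise equality $\vec s^Y_{1:j-1} = \vec s^X_{1:j-1}$; subtracting this common prefix from the Lorenz inequalities leaves $\sum_{i=j}^k \vec s^X_i \geq \sum_{i=j}^k \vec s^Y_i$ for every $k \geq j$. The $\vec s^X$-tail is entirely non-negative, but the $\vec s^Y$-tail may have negative leading entries, so the index sets $\{j,\ldots,n\}$ (over which $X$'s restricted welfare is summed) and $I := \{i \geq j : \vec s^Y_i \geq 0\}$ (over which $Y$'s is summed) can have different sizes. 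The plan is to reduce to the equal-length non-negative case by showing that ``pushing'' the negative tail entries of $\vec s^Y$ out of $I$ (one at a time) can only weakly improve $\sum_{i \in I} g(\vec s^Y_i)$ while preserving the tail partial-sum dominance against $\vec s^X_{j:n}$, then appealing to the key lemma on the resulting non-negative tails. The main obstacle is verifying that each such push preserves partial-sum dominance and correctly handling the boundary value $g(0)$ (including the case $g(0) = -\infty$); this length-mismatch bookkeeping is what makes Part~\ref{thm:lorenz-conseqences:bnl2w} the delicate step.
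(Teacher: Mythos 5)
Your core lemma is correct and your route through it is genuinely different from the paper's. The paper argues that any feasible sorted utility vector can be reached from the Lorenz-dominant one by iterating two operations (an ``inequitable transfer'' from a poorer to a richer agent, and a ``utility loss''), each of which weakly decreases $\Welfare_f$ by concavity and monotonicity; strictness of concavity then gives uniqueness. You instead prove the classical majorization inequality directly: supergradients $\mu_i \in \partial f(a_i)$ chosen consistently (e.g.\ right derivatives) are non-negative and non-increasing along the ascending sort, and Abel summation turns the partial-sum dominance $D_k \ge 0$ into $\sum_i \mu_i(a_i - b_i) \ge 0$. This buys you a fully explicit, self-contained argument where the paper leaves the decomposition into transfer operations at the level of ``it is straightforward to see''; the paper's version buys a more intuitive economic narrative. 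Your handling of parts \ref{thm:lorenz-conseqences:ul2w}--\ref{thm:lorenz-conseqences:bw2l} is sound (one small slip: for part \ref{thm:lorenz-conseqences:bexist}, $s^X_1 \ge 0$ makes $\Welfare_g(X)$ \emph{defined}, not necessarily finite, since $g(0)$ may be $-\infty$; the paper is equally loose here). Your observation that Lorenz dominance at $k=j-1$ forces componentwise prefix \emph{equality} is correct and is cleaner than the paper's unexplained assumption that the prefixes ``match.''

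The gap is in part \ref{thm:lorenz-conseqences:bnl2w}, and it is not mere bookkeeping: the generalization you are trying to prove --- comparing $\sum_{i=j}^n g(s^X_i)$ against $\sum_{i \in I} g(s^Y_i)$ where $I = \{i \ge j : s^Y_i \ge 0\}$ may be shorter --- is false. Take $n=2$, $j=1$, $\vec s^X = (1,1)$, $\vec s^Y = (-1,3)$, $g(x)=x$: then $\vec s^X$ Lorenz-dominates $\vec s^Y$, the prefix condition is vacuous, but $\sum_{i \in I} g(s^Y_i) = 3 > 2 = g(1)+g(1)$. So no sequence of ``pushes'' can rescue the reduction. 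The intended reading, visible in the paper's proof sketch, is that the objective $\sum_{i \ge j} g(s^Y_i)$ is simply \emph{undefined} (and $Y$ excluded) whenever any $s^Y_i$ with $i \ge j$ is negative. Under that reading there is no length mismatch at all: after your prefix-equality step, both tails $\vec s^X_{j:n}$ and $\vec s^Y_{j:n}$ are sorted, non-negative, and satisfy tail partial-sum dominance, so your key lemma applies verbatim with $f := g$ and part \ref{thm:lorenz-conseqences:bnl2w} follows immediately. In short: drop the push argument, restrict the feasible $Y$ as the paper intends, and your existing machinery finishes the proof.
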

\begin{proof}
Before handling each case, we note that the key to this result is that any feasible utility vector can be constructed from the Lorenz-dominant utility vector $\vec{u}^{X}$ by iteratively applying two operations, namely the \emph{inequitable transfer} operation and the \emph{utility loss} operation (henceforth formalized), and both operations can only decrease the relevant welfare concepts.
This immediately yields \cref{thm:lorenz-conseqences:ul2w}, and \cref{thm:lorenz-conseqences:uw2l} follows via a strictness argument.
From here, \cref{thm:lorenz-conseqences:bexist} is a straightforward independent argument, and once existence is assumed, \cref{thm:lorenz-conseqences:bl2w,thm:lorenz-conseqences:bw2l} follow by similar reasoning to \cref{thm:lorenz-conseqences:ul2w,thm:lorenz-conseqences:uw2l}. 
Finally, \cref{thm:lorenz-conseqences:bnl2w} furthers this theme by making similar arguments after constraining the feasible class of allocations to essentially be Lorenz-dominant over the negative utility portion of the allocation.

We first argue that if $X$ is Lorenz-dominant with utility vector $\vec{u}$, then any utiilty vector $\vec{u}'$ that arises from some feasible $X'$ can be constructed from $\vec{u}$ using the following two operations.
For convenience, 
operate henceforth on sorted utility vectors, as Lorenz-dominance, $\Welfare_{f}(\vec{u})$, and $\Welfare_{f}$ optimality all exhibit symmetry over utility values, and can thus be computed from sorted utility vectors. 
Let $\vec s$ be the sorted utility vector of the Lorenz dominating allocation $\vec {s'}$ be any other feasible sorted utility vector.
\newcommand{\1}{\mathbbm{1}}

We demonstrate the construction of $\vec{s'}$ from $\vec{s}$ by iterating the following two operations. In the definition of both these operations $\1_{i}$ denotes an indicator vector.

\begin{enumerate}
\item \textbf{Inequitable Transfer}: $\vec{s} \mapsto \vec{s} - \varepsilon \1_{i} + \varepsilon \1_{j}$ for some $\varepsilon > 0$ and (sorted) agent indices
$i, j \in [n]$ s.t.\ $i < j$,
and by sortedness, ${s}_{i} \leq {s}_j$, i.e., $\varepsilon$ utility is transferred from an agent with $i$th smallest to $j$th smallest utility. 
\item \textbf{Utility Loss}: $\vec{s} \mapsto \vec{s} - \varepsilon \1_{i}$ for some $\varepsilon > 0$ and sorted agent index $i \in [n]$.
\end{enumerate}
Essentially, \emph{inequitable transfer} represents an inequitable redistribution of utility; a perverse theft by the rich from the poor, and the \emph{utility loss} operation simply characterizes a loss of utility by some agent.
Intuitively, these operations sacrifice Lorenz-dominance, as neither operation can produce 
an $\vec{s'}$ that Lorenz-dominates
$\vec{s}$, 
and their opposites (equitable transfer from high to low utility, or gain in utility) could be used to construct a $\vec{s'}$ that Lorenz-dominates $\vec{s}$.
Furthermore, it is straightforward to see that if $\vec{s}'$ is Lorenz-dominated by $\vec{s}$, then a sequence of \emph{inequitable transfer} operations can be used to produce some $\vec{s''}$ such that $\vec{s''} \succeq \vec{s'}$, and then one or more \emph{utility loss} operations produce $\vec{s'}$ from $\vec{s''}$.
Now, since we assumed that $\vec{s}$ Lorenz-dominates all feasible allocations, and we have just shown that any allocation $\vec{s'}$ Lorenz-dominated by $\vec{s}$ can be constructed as above, it holds that all feasible allocation utility vectors $\vec{s'}$ can be constructed from $\vec{s}$.

\medskip

We are now ready to show \cref{thm:lorenz-conseqences:ul2w,thm:lorenz-conseqences:uw2l}.
The key step is to show that any utility vector $\vec{u}'$ that arises from some feasible $X'$ exhibits $\Welfare_{f}(\vec{u}') \leq \Welfare_{f}(\vec{u})$.
Observe that, as above, $\vec{u}$ can be constructed from $\vec{u}$ by applying the \emph{inequitable transfer} and \emph{utility loss} operations, and moreover, by concavity, of $f$, {inequitable transfer} can only decrease $\Welfare_{f}$, and by monotonic increasingness, {utility loss} can only decrease $\Welfare_{f}$, thus we may conclude that $\Welfare_{f}(\vec{u}^{X}) \geq \Welfare_{f}(\vec{u}^{X'})$ for all feasible allocations $X'$.
This immediately yields \cref{thm:lorenz-conseqences:ul2w}.
Now, observe that if concavity is strengthened to strict concavity, then the welfare loss of both the {inequitable transfer} and {utility loss} operations is \emph{strictly positive}, therefore $X$ \emph{uniquely maximizes} $\Welfare_{f}$, and \cref{thm:lorenz-conseqences:ul2w,thm:lorenz-conseqences:uw2l} 
 follows immediately. 

\medskip

We now show \cref{thm:lorenz-conseqences:bexist}.
Recall that
we now operate w.r.t.\ some $g: \R_{0+} \to \R$ that is undefined for negative inputs.
The task can be restated as
\[
\text{A $\Welfare_{g}$ maximizing solution exists} \Leftrightarrow s^X_1 \geq 0
,
\]
and the forward direction follows straightforwardly via contraposition,
while the reverse direction
follows directly via a subtler argument.
In particular, if the minimum utility of the Lorenz-dominant allocation is negative, i.e., $s^X_1 < 0$, then there do not exist fully non-negative allocations, and $\Welfare_{g}(\vec{s}^{X'})$ is undefined for all feasible $X'$.
For the reverse direction, observe that $\Welfare(\vec{s}^{X})$ is at least well-defined,
and furthermore,
\cref{thm:lorenz-conseqences:bl2w}
(shown subsequently) demonstrates that the $\vec{s}^{X}$ is indeed the maximizer. 
Moreover, we require \cref{thm:lorenz-conseqences:bl2w}
to show only the reverse direction, whereas the proof of \cref{thm:lorenz-conseqences:bl2w} depends only on the forward direction, thus there is no cyclic dependence in this reasoning.


\medskip

We now show the remaining items.
Observe that \cref{thm:lorenz-conseqences:bl2w,thm:lorenz-conseqences:bw2l} follow by the reasoning of \cref{thm:lorenz-conseqences:ul2w,thm:lorenz-conseqences:uw2l}, as by \cref{thm:lorenz-conseqences:bexist}, if a Lorenz-dominating allocation exists, then $s^X_1 \geq 0$, which by definition implies $\vec{s}^X \succeq \bm{0}$, thus $\Welfare_{g}(\vec{s}^{X})$ is well-defined.
The only difference between this case and \cref{thm:lorenz-conseqences:bl2w,thm:lorenz-conseqences:bw2l} is that we now need to consider only the subset feasible allocations with non-negative utility vectors, but observe that this subset can also be traversed with the inequitable transfer and utility loss operators (as neither operator can convert a utility vector $\vec{s}'$ with $s_{1} < 0$ to some $\vec{s}''$ with $s_{1} \geq 0$).
Finally, \cref{thm:lorenz-conseqences:bnl2w} follows via similar reasoning, now observing that the claim reduces to \cref{thm:lorenz-conseqences:ul2w} after constraining the set of feasible allocation to those that match the first $i$ values of $\vec{s}^{X}$ (i.e., are equivalent on the necessary negative valuations), further constraining to $X'$ such that ${s}^{X'}_{j}$ is non-negative (as otherwise the objective is undefined), and then removing agents with negative valuations and applying \cref{thm:lorenz-conseqences:ul2w} to the remaining class.
\end{proof}

Observe that \cref{thm:lorenz-conseqences:ul2w,thm:lorenz-conseqences:uw2l} apply to general symmetric welfare functions of unbounded utility (e.g., utilitarian, egalitarian, generalized Gini social welfare functions, et cetera), whereas \cref{thm:lorenz-conseqences}~\cref{thm:lorenz-conseqences:bl2w,thm:lorenz-conseqences:bw2l} apply to symmetric welfare functions of
non-negative
utility.
The Nash social welfare $\Welfare_{0}$ and the power-mean family $\Welfare_{p}$, defined for $p \leq 1$ as
\begin{equation}
\label{eq:nash-pm}
\Welfare_{p}(X) \doteq \sqrt[p]{\frac{1}{n} \sum_{i=1}^{n} v_{i}^{p}} 
    \quad \text{or} \quad
    \Welfare_{0}(X) \doteq \exp\left(\frac{1}{n} \sum_{i=1}^{n} \ln(v_{i}) \right),
\end{equation}
exemplify the latter case, assuming we handle $\frac{1}{0}$ and $\ln(0)$ as their right-limit $-\infty$ for $p \leq 0$.
The $p$-mean welfare functions have been extensively studied as fairness objectives in economics \citep{moulin2004fair}, fair machine learning \citep{heidari2018fairness,cousins2021bounds,cousins2021axiomatic,cousins2022uncertainty}, and, more recently, fair allocation \citep{barman2020tight}.
When $p$ approaches $-\infty$, the $p$-mean welfare corresponds to the leximin objective, and of course the Nash social welfare ($p=0$) and utilitarian welfare ($p=1$) are also included as special cases.


In the only-goods setting, the above welfare functions are always defined, thus maximization over finite classes is always possible, but in the mixed manna setting, for welfare functions defined only for non-negative utilities, the relationship between Lorenz-dominance and $\Welfare_{f}$ optimality requires additional qualifiers (\cref{thm:lorenz-conseqences}~\cref{thm:lorenz-conseqences:bexist}).
Essentially, so long as it is possible for all agents to simultaneously receive positive-utility bundles, non-negative welfare concepts are well-defined, and the Lorenz-dominance of \cref{algo:leximin} implies that we recover these optimal solutions. 
Furthermore, \cref{thm:lorenz-conseqences:bnl2w} tells us that if negative utilities are unavoidable, the negative portion of the allocation is Lorenz-dominant, and the welfare of the non-negative portion is welfare-optimal.

\section{NP-Hardness when $c$ is not an Integer}\label{sec:lower-bounds}
A natural question to ask given the positive results from the previous sections is whether these results can be generalized to broader classes of valuations. Specifically, we may want to generalize our results from $\{-1, 0, c\}\ONSUB$ valuations to $\{-p, 0, q\}\ONSUB$ valuations for arbitrary integers $p$ and $q$.

In this section we show that the problem of computing leximin allocations is NP-hard even for $\{-p, q\}\ADD$ valuations for any co-prime integers $p$ and $q$ such that $p \ge 3$. This proof is very similar to the hardness result in \citet{akrami2022mnw}. Note that the assumption that $p$ and $q$ are co-prime is necessary since if $p$ divides $q$, the problem reduces to computing a leximin allocation for agents with $\{-1, \frac{q}{p}\}\ADD$ valuations and admits a polynomial time algorithm (Theorem \ref{thm:leximin}). More generally, any common divisor of $p$ and $q$ can be eliminated by scaling agent valuations. 

\begin{theorem}\label{thm:lower-bound}
The problem of computing leximin allocations is NP-hard even when agents have $\{-p, q\}\ADD$ valuations for any co-prime positive integers $p$ and $q$ such that $p \ge 3$.  
\end{theorem}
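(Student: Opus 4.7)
The plan is to establish NP-hardness by polynomial-time reduction from \texttt{PARTITION}, following the same template as the reduction in \citet{akrami2022mnw}. Given an instance of \texttt{PARTITION} consisting of positive integers $s_1, \ldots, s_n$ summing to $2T$, I will construct a fair allocation instance with two agents and $\{-p, q\}\ADD$ valuations, together with a threshold $V^\star$ computable from the input, such that the leximin value of the constructed instance equals $V^\star$ if and only if the \texttt{PARTITION} instance is a yes-instance.

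The construction will, for each $s_i$, create a ``cluster'' gadget built from items valued at $q$ and at $-p$ (by both agents), sized so that allocating the entire cluster to a single agent contributes a net value proportional to $s_i$ to that agent, while splitting the cluster among the two agents unavoidably destroys some utility. On top of the clusters I will add a carefully chosen pool of symmetric filler goods (value $q$) and filler chores (value $-p$) whose purpose is twofold: (i) to normalize the total achievable welfare so that the target $V^\star$ corresponds to the average utility in a perfectly balanced split, and (ii) to correct the global counts so that the required utility target is achievable modulo $p$. The forward direction (yes-instance $\Rightarrow$ leximin $\geq V^\star$) is immediate: given a valid partition of $\{s_1,\dots,s_n\}$ into two sets of equal sum, assign cluster $i$ atomically to the agent corresponding to the side containing $s_i$, and split the filler items equally; both agents then achieve exactly $V^\star$.

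The reverse direction is the main technical obstacle and is where the hypotheses that $p\geq 3$ and $\gcd(p,q)=1$ enter in an essential way. Because each agent's utility necessarily has the form $aq - bp$ for nonnegative integers $a,b$, and since $\gcd(p,q)=1$, the map $a \mapsto aq \pmod{p}$ is a bijection on $\Z/p\Z$; hence an agent's utility modulo $p$ fixes the residue of the number of their $q$-valued items modulo $p$. Together with a symmetric argument for the partner agent, and the global count of $q$-valued items built into the instance, this forces the allocation near the target $V^\star$ to satisfy tight modular constraints. I will use the slack afforded by $p \geq 3$ (which prevents the trivial $p=1$ collapse handled already by \Cref{thm:leximin}, and gives a quantitatively strict separation unavailable when $p=2$) to argue that any cluster split strictly lowers the minimum utility below $V^\star$, so that the only way to achieve the threshold is to keep every cluster atomic; a cluster-atomic allocation that balances the two agents at $V^\star$ then reads off a valid partition.

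The hard part will be calibrating the cluster sizes and the filler counts so that both directions are tight: clusters must be small enough that the gap created by splitting one cluster exceeds the maximum rebalancing possible with the remaining items, yet the filler pool must be rich enough to achieve exactly $V^\star$ on every cluster-atomic, partition-respecting allocation. This is the step that mirrors most closely the arithmetic bookkeeping of \citet{akrami2022mnw}, and I expect that once the correct choice of $V^\star$ and filler multiplicities is written down as affine functions of $p$, $q$, $n$ and $T$, verifying both directions reduces to a routine (if tedious) modular calculation.
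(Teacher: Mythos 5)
There is a genuine gap, and it is fatal to the whole plan rather than just to the ``tedious calibration'' step you defer. First, any instance with \emph{two} agents and $\{-p,q\}\ADD$ valuations is solvable in polynomial time: each item is characterized, from the two agents' perspectives, by a pair in $\{-p,q\}^2$, so there are only four item types, and each agent's utility depends only on how many items of each type they receive. The leximin optimum can therefore be found by brute force over the $O(m^4)$ possible count vectors for agent $1$. No two-agent construction (and in particular none with ``symmetric'' items valued identically by both agents, which collapses to two types) can establish NP-hardness unless P $=$ NP. The hardness in this problem necessarily comes from having many agents with differing valuations, which is exactly how the paper's reduction works: it reduces from Exact $p$-dimensional matching, creating one \emph{agent per edge} who values its $p$ incident vertices at $q$ and everything else at $-p$, plus $aq$ dummy chores, and argues via coprimality that all agents can simultaneously reach utility $0$ iff a perfect matching exists.

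Second, even setting the two-agent issue aside, a reduction from \texttt{PARTITION} with clusters ``sized so that allocating the entire cluster contributes a net value proportional to $s_i$'' cannot be polynomial. Since $p$ and $q$ are fixed constants, encoding the weight $s_i$ requires $\Theta(s_i)$ items, i.e., a number of items exponential in the binary encoding length of the \texttt{PARTITION} instance. As \texttt{PARTITION} is only weakly NP-hard (pseudo-polynomially solvable), such a unary-style reduction proves nothing. Your instinct about how coprimality is used is sound --- the paper likewise exploits that $aq - bp = 0$ with $0 \le a \le p$ forces $a \in \{0, p\}$ and correspondingly $b \in \{0, q\}$ --- but that arithmetic must be deployed inside a strongly NP-hard source problem with one agent per combinatorial object, not inside a two-agent numerical partitioning gadget.
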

\begin{proof}
We present a reduction to the Exact $p$-dimensional matching problem (Ex-$p$-DM) defined as follows: given a graph $G$ with $p$ sets of vertices $V_1$, $V_2$, \dots, $V_p$ each of size $a$ and a set of edges $E \subseteq V_1 \times V_2 \times \dots \times V_p$ of size $b$, the goal is to decide if there exists a $p$-dimensional perfect matching in $G$. This problem is a generalization of the Exact $3$-dimensional matching problem and was shown to be NP-complete by \citet{akrami2022mnw}.

Given an Ex-$p$-DM instance $G$, we construct a fair allocation instance as follows: each vertex in the graph $G$ is an item and we add $aq$ dummy items. Each edge in the graph $G$ is an agent who values the items corresponding to the nodes incident on it at $q$ and all the other items at $-p$.

If there is a perfect matching in $G$, then there is an allocation where each agent receives a utility of at least $0$. Pick an arbitrary perfect matching. We can construct an allocation as follows: the agents corresponding to the edges in the perfect matching receive the $p$ items incident on it (in $G$) as well as any $q$ of the dummy items. All other agents receive nothing. Note that this is a complete allocation since there are exactly $a$ agents in the perfect matching. It is easy to verify that all agents receive a utility of exactly $0$. All the agents corresponding to edges outside the perfect matching receive no items and so receive a utility of $0$. All the agents corresponding to edges in the perfect matching receive $p$ items they value at $q$ and $q$ items they value at $-p$.

If we show that at least one agent receives a negative utility in all allocations when there is no perfect matching, we are done. This is because it implies that any leximin allocation gives at least one agent a negative utility when there is no perfect matching but when there is a perfect matching, all agents are given a non-negative utility in any leximin allocation.

Note that in our constructed fair allocation instance, the maximum \USW that can be achieved is $0$ --- there are $aq$ items all agents value at $-p$ and $ap$ items some agent values at $q$. 

Therefore, if in any allocation $X$, if some agent receives a positive utility then some other agent must receive a negative utility and we are done. So, we only need to show that it is not possible for all agents to receive a utility of exactly $0$. Since $p$ and $q$ are co-prime and each agent values exactly $p$ items at $q$, an agent $i$ receives a utility of $0$ if and only if one of the following conditions is satisfied:
\begin{enumerate}[(i)]
    \item $i$ receives an empty bundle, or
    \item $i$ receives the $p$ items they value at $q$ and receives $q$ additional items they value at $p$.
\end{enumerate}
So, if all agents receive a utility of $0$ in some allocation $X$, the edges corresponding to the agents who received the $p$ items they value at $q$ form a perfect matching in $G$. Since we assumed there is no perfect matching in $G$, it cannot be the case that all agents receive a utility of $0$. Therefore, some agent must receive negative utility in any allocation $X$ when there is no perfect matching. This completes the proof.
\end{proof}

While Theorem \ref{thm:lower-bound} shows that the problem of computing leximin allocations is NP-hard for most values of $p$ and $q$, there are two special cases which still remain unresloved ---  $\{-c, 0, 1\}\ONSUB$ valuations and $\{-2, 0, c\}\ONSUB$ valuations. Both these classes are outside the scope of this paper but are interesting questions for future work. 

\section{Discussion}\label{sec:discussion}
While order-neutrality is certainly desirable from a computational perspective, it is not immediately clear what kind of non-additive preferences it can capture. We present some examples in this section and also discuss how order-neutral submodular valuations relate to the class of OXS valuations --- a popular subclass of submodular valuations previously studied in fair allocation \citep{benabbou2021MRF}.

Our first observation is that the class of $\{-1, 0, c\}\ONSUB$ valuations contains the class of $\{0, c\}\SUB$ valuations as well as the class of $\{-1, c\}\SUB$ valuations. This is given by the following Proposition.

\begin{prop}\label{prop:a2-orderneutral}
When $|A| = 2$, any $A$\SUB function is order-neutral.  
\end{prop}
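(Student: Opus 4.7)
The plan is to exploit the fact that with only two possible marginal values, the sorted telescoping sum vector is determined entirely by its length and its sum, both of which are intrinsic to $S$ and therefore independent of the chosen ordering $\pi$.

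First I would fix $A = \{a, b\}$ with $a < b$, let $v$ be an $A$\SUB function, and fix an arbitrary bundle $S \subseteq O$. For any ordering $\pi: [|S|] \to S$, every entry of the telescoping sum vector $\vec v(S, \pi)$ is a marginal contribution $\Delta_v(\cdot, \pi(j))$, which by the $A$\SUB assumption lies in $\{a, b\}$. Hence, after sorting in ascending order, $\vec v(S, \pi)$ consists of some number $k_\pi$ of copies of $a$ followed by $|S| - k_\pi$ copies of $b$.

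Next, I would observe two facts that pin down $k_\pi$. The sorted vector has length $|S|$, and the sum of its entries equals $v(S)$ since the telescoping sum telescopes to $v(S) - v(\emptyset) = v(S)$ regardless of $\pi$. This gives the linear system
\begin{equation*}
k_\pi + (|S| - k_\pi) = |S|, \qquad a \cdot k_\pi + b \cdot (|S| - k_\pi) = v(S),
\end{equation*}
whose unique solution is $k_\pi = (|S|\,b - v(S))/(b - a)$. Since the right-hand side depends only on $|S|$ and $v(S)$, and neither of these depends on $\pi$, we conclude $k_\pi$ is independent of $\pi$, so $\vec v(S, \pi) = \vec v(S, \pi')$ for any two orderings $\pi, \pi'$.

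There is no real obstacle here; the result is essentially a counting observation enabled by $|A| = 2$. The only thing to be careful about is handling the degenerate case $|A| \le 1$ (either $v \equiv 0$, or $v$ is a constant multiple of set-size), where order-neutrality is immediate because the sorted vector is a constant vector. Putting the two cases together yields the proposition.
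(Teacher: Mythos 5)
Your proof is correct and uses essentially the same idea as the paper's: with only two possible marginal values, the multiset of entries in the sorted telescoping sum vector is pinned down by its length $|S|$ and its sum $v(S)$, neither of which depends on the ordering. The paper phrases this as a short contradiction argument (two orderings with different counts of $a$ would give different sums), while you solve for the count explicitly, but the content is identical.
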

\begin{proof}
Let $v$ be an arbitrary $A$\SUB valuation. Since $|A| = 2$, we refer to $v$ as an $\{a, b\}$\SUB valuation for some $a, b \in \R$ such that $a < b$. 

Assume for contradiction that $v$ is not order-neutral. That is, for some bundle $S$, there exists two orderings of items in the bundle $\pi, \pi': [|S|] \rightarrow S$ such that $\vec v(S, \pi) \ne \vec v(S, \pi')$. Since the two vectors only have two distinct values $a$ and $b$, let $\vec v(S, \pi)$ consist of $k$ elements with value $a$ and $|S| - k$ elements with value $b$. Similarly, let $\vec v(S, \pi')$ consist of $k'$ elements with value $a$ and $|S| - k$ elements with value $b$. If $k \ne k'$, then the sums of the two vectors will be different. However, since both the vectors sum up to $v(S)$, we must have $k = k'$. This implies, $\vec v(S, \pi) = \vec v(S, \pi')$ --- a contradiction.
\end{proof}

We next present an example showing that $\{-1, 0, c\}\ONSUB$ valuations can be used to express cardinality constraints over items.
\begin{example}
Consider an agent who values some items $C$ at $c$, some items $Z$ at $0$ and the remaining items at $-1$. This naturally specificies an additive valuation. However, in the presence of decreasing marginal gains, the valuation is no longer additive --- assume that after receiving $k$ items in $C$, adding another item from $C$ gives the agent a marginal utility of $-1$ (as opposed to $c$). Alternatively, assume that after receiving $k'$ items in $Z$, adding another item from $Z$ gives the agent a marginal utility of $-1$. 
These natural classes of preferences cannot be expressed using additive valuations but can be expressed using $\{-1, 0, c\}\ONSUB$ valuations. Examples of this can be found in Examples \ref{ex:ef1-impossibility} and \ref{ex:mms-impossibility}.
\end{example}

Finally, we discuss the relationship between order-neutral submodular functions and OXS functions. To define the class of OXS valuations, we must first define the simpler class of unit demand valuations. A valuation function $f: 2^{O} \rightarrow \R$ is a {\em unit demand valaution} if the value of a set of items is equal to the value of the most valuable item in this set. More formally, for all $S \subseteq O$, $f(S) = \max_{o \in S} f(\{o\})$. 
\begin{definition}[OXS valuations \citep{lehmann2001oxs}]
A function $v:2^O \rightarrow \R$ is OXS if there exists $k$ unit demand functions $f_1, \dots, f_k$ such that for any $S \subseteq N$
\begin{align*}
    v_i(S) = \max\{f_1(S_1) + f_2(S_2) + \dots f_k(S_k) | \{S_1, \dots, S_k\} \text{ is a $k$-partition of }S\}
\end{align*}
Given a set of real numbers $A$, a function $v$ is said to be $A\OXS$ if it satifies the condition above and for all $S \subseteq O$ and $o \in O \setminus S$, $\Delta_{v}(S, o) \in A$.
\end{definition}
The class of OXS valuations contains the class of additive valuations but is strictly contained by the class of submodular valuations \citep{lehmann2001oxs}. We show that OXS valuations and order-neutral submodular valuations are incomparable --- that is, neither class contains the other. This can be seen using the following two Propositions.

\begin{prop}
    $\{-1, 0, 1\}\OXS \not\subset \text{ONSUB}$  
\end{prop}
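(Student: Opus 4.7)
The plan is to display a concrete function $v \colon 2^O \to \R$ witnessing that the inclusion $\{-1, 0, 1\}\OXS \subseteq \text{ONSUB}$ fails. Such a $v$ must (i) admit an OXS representation $v(S) = \max_{(S_1, \ldots, S_k)} \sum_{\ell} f_{\ell}(S_{\ell})$ for some unit demand functions $f_1, \ldots, f_k$, (ii) have every marginal $\Delta_v(S, o) \in \{-1, 0, 1\}$, and (iii) admit a bundle $S$ with two orderings $\pi, \pi' \colon [|S|] \to S$ whose sorted telescoping sum vectors $\vec v(S, \pi)$ and $\vec v(S, \pi')$ disagree. Given any candidate $v$, each of these properties can be verified by a finite enumeration over subsets, items, and partitions, so the bulk of the proof is the construction itself.

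I would search for $v$ on a small ground set of three or four items by choosing two or three unit demand functions whose OXS combination is submodular and has at least one marginal equal to $-1$. The reason the $-1$ value is essential is that, by \Cref{prop:a2-orderneutral}, any $\{0, 1\}\SUB$ function is automatically order-neutral; if every marginal of $v$ were non-negative, then $v$ would already be $\{0,1\}\SUB$ and hence lie in $\text{ONSUB}$, so no such $v$ could serve as a counterexample. Once a candidate $v$ has been written down, the failure of order-neutrality is certified by fixing a single bundle $S$ and writing out the two marginal sequences produced by two specific insertion orders of $S$, then sorting each and observing that the resulting vectors differ even though both sum to $v(S)$.

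The principal obstacle is reconciling the $-1$ marginal with the OXS structure. OXS functions generated by strictly non-negative unit demand values are monotone and therefore have only non-negative marginals, so the construction must rely on a unit demand configuration whose overall OXS combination is non-monotone and yet submodular (so that $v$ still qualifies as OXS in the paper's sense). Once a valid $v$ of this form is exhibited, the OXS representation is verified by listing the candidate partitions of each $S \subseteq O$ and matching $\max_{(S_1, \ldots, S_k)} \sum_{\ell} f_{\ell}(S_{\ell})$ against the claimed $v(S)$, the marginal bound reduces to evaluating $\Delta_v(S, o)$ for every $(S, o)$ with $o \notin S$, and the non-order-neutrality follows from the explicit disagreement of $\vec v(S, \pi)$ and $\vec v(S, \pi')$ computed on the chosen bundle.
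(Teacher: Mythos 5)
Your proposal is a plan for a proof, not a proof: the entire mathematical content of this proposition is a single explicit counterexample, and you never exhibit one. You correctly identify the shape of what is needed (an OXS function with marginals in $\{-1,0,1\}$ and a bundle $S$ with two orderings whose sorted telescoping sum vectors disagree), and your observation that any counterexample must use the value $-1$ --- because by \Cref{prop:a2-orderneutral} a $\{0,1\}\SUB$ function is automatically order-neutral --- is a genuinely useful piece of reasoning that narrows the search. But you then flag ``reconciling the $-1$ marginal with the OXS structure'' as the principal obstacle and leave it unresolved. That obstacle dissolves immediately under the paper's definitions: a unit demand function is only required to satisfy $f(S) = \max_{o \in S} f(\{o\})$, so nothing prevents $f(\{o\}) < 0$ for some item, and the resulting OXS combination can be non-monotone while remaining submodular.

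For the record, the paper's witness is about as small as possible: two items $\{o_1, o_2\}$ and two unit demand functions with $f_1(\{o_1\}) = 1$, $f_1(\{o_2\}) = 0$, $f_2(\{o_1\}) = 0$, $f_2(\{o_2\}) = -1$. Then $v(\{o_1\}) = 1$, $v(\{o_2\}) = 0$, and $v(\{o_1, o_2\}) = \max\{1 + (-1),\; 0 + 0\} = 0$, so inserting $o_1$ first gives marginal sequence $(1, -1)$ with sorted vector $(-1, 1)$, while inserting $o_2$ first gives $(0, 0)$; the two sorted telescoping sum vectors differ, so $v$ is not order-neutral, yet all marginals lie in $\{-1, 0, 1\}$ and $v$ is submodular. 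Until you write down a concrete function of this kind and carry out the finite verification you describe, the argument is not complete.
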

\begin{proof}
Consider an OXS valuation over two items $\{o_1, o_2\}$ defined by two unit demand functions $f_1, f_2$ as follows
\begin{align*}
    f_1(\{o_1\}) = 1 && f_1(\{o_2\}) = 0 \\ 
    f_2(\{o_1\}) = 0 && f_2(\{o_2\}) = -1
\end{align*}
Note that $v(\{o_1, o_2\}) = 0$ but the sorted telescoping sum vector can be different based on the ordering. If we add $o_1$ first, the sorted telescoping sum vector is $(-1, 1)$. However, if we add $o_2$ first, the sorted telescoping sum vector is $(0, 0)$. Therefore, this function is not order-neutral.
\end{proof}

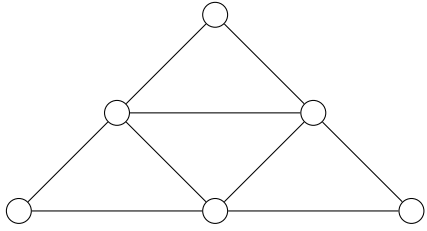
\begin{figure}[t!]
    \centering
    \begin{tikzpicture}[node distance = 1.5cm, radius=0.5pt]
    \node[circle, draw = black] (n1) {};
    \node[circle, draw = black, above right=of n1] (n2) {};
    \node[circle, draw = black, below right=of n2] (n3) {};
    \node[circle, draw = black, above right=of n3] (n4) {};
    \node[circle, draw = black, below right=of n4] (n5) {};
    \node[circle, draw = black, above left=of n4] (n6) {};
    
    \draw (n1) -- (n2) node [midway, fill=white, text opacity=1, opacity=0, left=2pt] {$e_1$};
    \draw (n1) -- (n3) node [midway, fill=white, text opacity=1, opacity=0, below=2pt] {$e_2$};
    \draw (n2) -- (n3) node [midway, fill=white, text opacity=1, opacity=0, left=2pt] {$e_3$};
    \draw (n4) -- (n2) node [midway, fill=white, text opacity=1, opacity=0, above=2pt] {$e_4$};
    \draw (n4) -- (n3) node [midway, fill=white, text opacity=1, opacity=0, right=2pt] {$e_5$};
    \draw (n4) -- (n5) node [midway, fill=white, text opacity=1, opacity=0, right=2pt] {$e_6$};
    \draw (n3) -- (n5) node [midway, fill=white, text opacity=1, opacity=0, below=2pt] {$e_7$};
    \draw (n2) -- (n6) node [midway, fill=white, text opacity=1, opacity=0, left=2pt] {$e_8$};
    \draw (n4) -- (n6) node [midway, fill=white, text opacity=1, opacity=0, right=2pt] {$e_9$};
    \end{tikzpicture}
    \caption{The graph used to define a $\{0, 1\}\SUB$ function in \Cref{prop:onsub-oxs}.}
    \label{fig:graph}
\end{figure}
\begin{prop}\label{prop:onsub-oxs}
    $\{0, 1\}\ONSUB \not\subset \text{OXS}$ 
\end{prop}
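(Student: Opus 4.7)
The plan is to let $v$ be the rank function of the graphic matroid on the graph $G$ in \Cref{fig:graph}: for an edge set $S \subseteq E(G)$, $v(S)$ is the size of a largest forest contained in $S$. Step one is to verify $v \in \{0,1\}\ONSUB$. Matroid rank functions satisfy $v(\emptyset)=0$ and submodularity; for a graphic matroid, adding an edge to $S$ either joins two components (marginal $1$) or closes a cycle (marginal $0$), so $v$ is $\{0,1\}\SUB$, and by \Cref{prop:a2-orderneutral} it is order-neutral, hence $\{0,1\}\ONSUB$.

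The real work is to show $v$ is not OXS. I would first reduce to a matroid-theoretic statement: any OXS function $u$ with all marginals in $\{0,1\}$ is necessarily the rank function of a transversal matroid. Given a representation $u(S) = \max_{(S_j)} \sum_{j} f_j(S_j)$ with unit-demand $f_j$, I would argue that each $f_j$ may be taken to be $\{0,1\}$-valued without changing $u$: replacing $f_j$ with the indicator of its positive support cannot increase $u$, and cannot decrease it either, because $u$ has unit marginals and every $f_j(\{o\}) > 0$ must have been contributing at most one unit to any optimal matching. Each such $\{0,1\}$-valued unit-demand $f_j$ is determined by its support $T_j \subseteq O$, and then $u(S)$ equals the maximum matching in the bipartite graph between $S$ and $[k]$ with $o \sim j$ iff $o \in T_j$, which is exactly the rank of $S$ in the transversal matroid presented by $(T_1, \ldots, T_k)$.

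Having performed this reduction, I would then show that the graphic matroid $M(G)$ is not transversal. The plan is to invoke Mason's characterization of transversal matroids in terms of their cyclic flats: I would enumerate the four triangle-flats $\{e_1, e_2, e_3\}$, $\{e_3, e_4, e_5\}$, $\{e_5, e_6, e_7\}$, $\{e_4, e_8, e_9\}$ (each of rank $2$), together with their pairwise intersections (single edges, rank $1$) and the whole ground set (rank $5$), and check that these data violate Mason's $\alpha$-inequality at the top cyclic flat.

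The main obstacle is the bookkeeping of cyclic flats and the actual check that Mason's alternating sum is negative for this presentation. A more elementary self-contained alternative is to assume directly that $v$ has a presentation $(T_j)_j$ as a transversal matroid and derive a contradiction by noting that within each triangle-flat any two of its three edges must be simultaneously matched to distinct $T_j$'s (since the triangle-flat has rank $2$), and then tracking how these forced partial matchings conflict at the edges $e_3$, $e_4$, $e_5$, each of which is shared between two triangle-flats. Either route yields the desired contradiction and completes the separation.
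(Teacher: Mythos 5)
Your choice of witness is the same as the paper's (the rank function of the graphic matroid on the graph of \Cref{fig:graph}, shown to be $\{0,1\}\ONSUB$ via \Cref{prop:a2-orderneutral}), but your route to non-OXS-ness has a genuine gap in its first, load-bearing step. You reduce to transversal matroids by claiming that replacing each unit-demand $f_j$ by the indicator of its positive support leaves $u$ unchanged. The direction you wave at --- ``cannot increase $u$'' --- is precisely the hard one. Since the indicators dominate the $f_j$ pointwise (using $f_j(\{o\}) \le u(\{o\}) \le 1$), the replacement immediately gives $u(S) \le \mathrm{rank}_T(S)$, where $\mathrm{rank}_T$ is the transversal rank of the supports. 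But the reverse inequality $\mathrm{rank}_T(S) \le u(S)$ does not follow from ``each $f_j(\{o\})>0$ contributes at most one unit'': a maximum-cardinality matching in the support graph may consist of edges of $f$-weight strictly less than $1$, so its $f$-weight certifies only $u(S) > 0$, not $u(S) \ge \mathrm{rank}_T(S)$. Closing this requires an augmenting-path argument (every maximum matching in the support graph can be exchanged into one all of whose edges have weight exactly $1$, using that $u(\{o\})=1$ forces some $f_{j'}(\{o\})=1$), or an appeal to the known equivalence between binary-marginal OXS functions and transversal matroid rank functions. Without the reduction, establishing that $M(G)$ is not transversal does not establish that $v$ is not OXS. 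Your second step is also only a plan in its Mason's-criterion form (the $\alpha$-inequality is never actually checked), though your ``elementary self-contained alternative'' is sound and is in essence the paper's argument.

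It is worth seeing how the paper sidesteps the reduction entirely. It works directly with the sets $\mathcal{F}_i = \{j : f_j(e_i) = 1\}$ (support at value \emph{exactly} one, not merely positive) and only ever uses the two one-sided bounds it actually needs: on independent pairs, $\hat v = 2$ together with $f_j \le 1$ forces $|\mathcal{F}_i \cup \mathcal{F}_{i'}| \ge 2$; on triangles, $\hat v \le 2$ together with Hall's theorem forces $|\mathcal{F}_i \cup \mathcal{F}_{i'} \cup \mathcal{F}_{i''}| \le 2$; the contradiction then propagates through the shared edges $e_3, e_4, e_5$ exactly as in your sketch. So the combinatorial core of your ``elementary alternative'' is the paper's Claims 3--5; what you would add on top of it (the transversal-matroid reduction) is true and conceptually clarifying, but it is the one step whose proof is both nontrivial and missing from your write-up.
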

\begin{proof}
From \Cref{prop:a2-orderneutral} any $\{0,1\}\SUB$ is order-neutral. Consider a $\{0, 1\}\SUB$ function over $9$ elements $\{e_1, \dots, e_9\}$, each corresponding to an edge in the graph $G$ from Figure \ref{fig:graph} (i.e., a 6-node triangular lattice). Define $v(S)$ as the size of the largest subset of acyclic edges in $S$, thus $v$ is the rank function of the graphical matroid defined on $G$, and is therefore, $\{0, 1\}\SUB$ \citep{oxley2011matroids}.  

Suppose for contradiction that $v$ can be expressed as an OXS function $\hat v$ using unit demand valuations $f_1, \dots, f_k$. We divide this proof into a series of claims.

\begin{claim}\label{claim:1}
For all $j \in [k]$ and all $i \in [9]$, $f_j(e_i) \le 1$. Moreover, there is at least one unit demand function $f_j$ for each $i$ such that $f_j(e_i) = 1$.
\end{claim}
\begin{proof}
This stems from the fact that $v(e_i) = 1$ for all $i \in [9]$.
\end{proof}

For each $i \in [9]$, let $\cal F_i$ denote the set of functions $f_j$ such that $f_j(e_i) = 1$.

\begin{claim}\label{claim:2}
For any two edges $e_i$ and $e_{i'}$, we must have $|\cal F_i \cup \cal F_{i'}| \ge 2$.
\end{claim}
\begin{proof}
Since the shortest cycle in the graph has length three, $v(\{e_i, e_{i'}\}) = 2$. From \Cref{claim:1}, all the unit demand functions are upper bounded at $1$. Therefore, to obtain $\hat{v}(\{e_i, e_{i'}\}) = 2$, we must be able to partition $\{e_i, e_{i'}\}$ into two singleton sets such that each set can be given to a unit demand function that values the item in the set at $1$.
\end{proof}

\begin{claim}\label{claim:3}
For any $3$-cycle $\{e_i, e_{i'}, e_{i''}\}$ in the graph $G$, we must have $|\cal F_i \cup \cal F_{i'} \cup \cal F_{i''}| = 2$.
\end{claim}
\begin{proof}
We assume without loss of generality, the three cycle being considered is the set $S = \{e_1, e_2, e_3\}$. We know from \Cref{claim:2} that $|\cal F_1 \cup \cal F_{2} \cup \cal F_{3}| \ge 2$. Assume for contradiction that this inequality is strict and $|\cal F_1 \cup \cal F_{2} \cup \cal F_{3}| \ge 3$.  

Construct a bipartite graph $G' = (L \cup R, E)$ where $L = \{e_1, e_2, e_3\}$ and $R = \cal F_1 \cup \cal F_2 \cup \cal F_3$. There exists an edge from each $e_i$ to each function $f_j$ in $\cal F_i$. Note that, if there is a matching of size $3$ in this graph (also called an $L$-perfect matching), then $\hat v(S) = 3$ and we have a contradiction.

To show that there is an $L$-perfect matching in $G'$ when $|\cal F_1 \cup \cal F_{2} \cup \cal F_{3}| \ge 3$, we invoke Hall's theorem \citep{hall1935og}. Hall's theorem states that there is an $L$-perfect matching in $G'$ if for all subsets $W$ of $L$, the number of neighbors\footnote{A node $j$ is a neighbor of the set $W$ if and only if it has an edge to at least one node in $W$.} of $W$ in $G'$ has a size weakly greater than $|W|$. \Cref{claim:1} and \Cref{claim:2} show that this condition is satisfied when $|W| = 1$ and $|W| = 2$ respectively. If $|\cal F_1 \cup \cal F_{2} \cup \cal F_{3}| \ge 3$, then it satisfied for $|W| = 3$ as well and there is an $L$-perfect matching --- a contradiction. 

Therefore, we must have $|\cal F_1 \cup \cal F_{2} \cup \cal F_{3}| = 2$.
\end{proof}

Consider the sets $S = \{e_1, e_2, e_3\}$ and $S' = \{e_3, e_4, e_5\}$. From \Cref{claim:3}, we have $|\cal F_1 \cup \cal F_2 \cup \cal F_3| = 2$ and $|\cal F_3 \cup \cal F_4 \cup \cal F_5| = 2$. If $|\cal F_3| = 2$, it implies that $\cal F_4 \cup \cal F_5 = \cal F_3 = \cal F_1 \cup \cal F_2$ (using \Cref{claim:2}). This in turn implies that $\hat{v}(\{e_1, e_2, e_4\}) < 3$ which is a contradiction. Therefore, $|\cal F_3| = 1$.

We can use a similar argument to show that $|\cal F_4| = 1$ and $|\cal F_5| = 1$. From Claim \ref{claim:2}, we get that $\cal F_3$, $\cal F_4$ and $\cal F_5$ must be pairwise disjoint. This implies that $\hat{v}(\{e_3, e_4, e_5\}) = 3$ --- another contradiction since $\{e_3, e_4, e_5\}$ is a cycle. 

So we can conclude that the OXS function $\hat{v}$ does not exist. This completes the proof.
\end{proof}
Interestingly, \Cref{prop:onsub-oxs} also implies that the results of \citet{benabbou2021MRF} (who study $\{0, 1\}\OXS$ valuations) are strictly weaker than the results of \citet{Babaioff2021Dichotomous} (who study $\{0, 1\}\SUB$ valuations). To the best of our knowledge, this observation has not been formally made in the literature. 

There are two other popular classes that fall between additive valuations and submodular valuations --- Rado valuations \citep{garg2021rado} and Gross Subsitutes \citep{kelso1982gs}. We conjecture that both these classes strictly contain order-neutral submodular valuations but are unable to provide a proof. We leave this question open for future work.

\section{Conclusions and Future Work}
In this work, we study the computation of leximin allocations in instances with mixed goods and chores. We show that when agents have $\{-1, 0, c\}\ONSUB$ valuations, leximin allocations can be computed efficiently. We study these leximin allocations in further detail showing they are always Lorenz dominating and approximately proportional. 

On a higher level, our work is the first to generalize the path augmentation technique to tri-valued valuation functions. We are hopeful that the tools of weighted exchange graphs and decompositions can be applied to even more general valuation classes. We are also excited by the class of order-neutral submodular valuations. Much like Rado and OXS valuations, we believe order-neutral submodular valuations are an appealing sub-class of submodular valuation functions that warrant further study.
\bibliographystyle{plainnat}
\bibliography{abb,literature}

\end{document}